\documentclass[smallcondensed]{svjour3}
\usepackage{lineno}
\usepackage{fix-cm}
\usepackage{amsmath}
\usepackage{graphicx}
\usepackage{latexsym,amssymb,amsbsy}
\usepackage{enumitem}
\usepackage{scrextend}
\usepackage{mathtools}
\usepackage{graphics,color}
\usepackage{verbatim}
\usepackage{url}
\hyphenation{op-tical net-works semi-conduc-tor}
\usepackage{pgf,tikz}
\usepackage[normalem]{ulem}
\usetikzlibrary{trees,automata,arrows,shapes.geometric}
\usetikzlibrary{calc}
\newcommand*\circled[1]{\tikz[baseline=(char.base)]{\node[shape=circle,fill=green!20,draw,inner sep=0.5pt] (char) {#1};}}
\usepackage{xfrac}
\usepackage{amsfonts}
\usepackage{algorithm}
\usepackage[noend]{algpseudocode}
\usepackage{colortbl,xcolor}
\newcommand*{\QEDB}{\hfill\ensuremath{\square}}

\newcommand{\etal}{\textit{et al. }}
\newcommand{\eg}{\textit{e.g.}}
\newcommand{\ie}{\textit{i.e.}}

\setcounter{MaxMatrixCols}{20}
\def\F{{\mathbb{F}}}

\def\bv{{\mathbf{v}}}
\def\bw{{\mathbf{w}}}
\def\bs{{\mathbf{s}}}
\def\bu{{\mathbf{u}}}

\def\0{{\mathbf{0}}}
\def\1{{\mathbf{1}}}
\def\ba{{\mathbf{a}}}
\def\bb{{\mathbf{b}}}
\def\bc{{\mathbf{c}}}
\def\bx{{\mathbf{x}}}
\def\by{{\mathbf{y}}}
\def\cG{{\mathcal{G}}}

\def\cC{{\mathcal{C}}}

\DeclareMathOperator{\nlc}{nlc}

\algdef{SE}[DOWHILE]{Do}{doWhile}{\algorithmicdo}[1]{\algorithmicwhile\ #1}

\journalname{}

\begin{document}
\title{A Graph Joining Greedy Approach to Binary de Bruijn Sequences}
\author{Zuling Chang \and\\
Martianus Frederic Ezerman \and\\
Adamas Aqsa Fahreza \and Qiang Wang}
\institute{Z. Chang \at School of Mathematics and Statistics, Zhengzhou University, Zhengzhou 450001 and\\
State Key Laboratory of Information Security, Institute of Information Engineering,\\ Chinese Academy of Sciences, Beijing 100093, China\\
\email{zuling\textunderscore chang@zzu.edu.cn}
\and
M. F. Ezerman and A. A. Fahreza \at School of Physical and Mathematical Sciences,\\
Nanyang Technological University, 21 Nanyang Link, Singapore 637371\\
\email{\{fredezerman,adamas\}@ntu.edu.sg}
\and
Q. Wang \at School of Mathematics and Statistics, Carleton University,\\
1125 Colonel By Drive, Ottawa, ON K1S 5B6, Canada\\
\email{wang@math.carleton.ca}
}
\date{Received: date / Accepted: date}
\maketitle
	
\begin{abstract}
Using greedy algorithms to generate de Bruijn sequences is a classical approach that has produced numerous interesting theoretical results. This paper investigates an algorithm which we call the {\tt Generalized Prefer-Opposite} (GPO). It includes all prior greedy algorithms, with the exception of the Fleury Algorithm applied on the de Bruijn graph, as specific instances. The GPO Algorithm can produce any binary periodic sequences with nonlinear complexity at least two on input a pair of suitable feedback function and initial state. In particular, a sufficient and necessary condition for the GPO Algorithm to generate binary de Bruijn sequences is established. This requires the use of feedback functions with a unique cycle or loop in their respective state graphs. Moreover, we discuss modifications to the GPO Algorithm to handle more families of feedback functions whose state graphs have multiple cycles or loops. These culminate in a graph joining method. Several large classes of feedback functions are subsequently used to illustrate how the GPO Algorithm and its modification into the {\tt Graph Joining Prefer-Opposite} (GJPO) Algorithm work in practice.
		
\keywords{binary periodic sequence \and de Bruijn sequence \and feedback function \and greedy algorithm \and LFSR \and state graph}
\subclass{11B50 \and 94A55 \and 94A60}
\end{abstract}
	
\section{Introduction}\label{sec:intro}
	
Let $n$ be any positive integer. In a binary {\it de Bruijn sequence} of order $n$, within any string of length $2^n$, each $n$-tuple occurs exactly once. There are $2^{2^{n-1}-n}$ cyclically inequivalent de Bruijn sequences~\cite{Bruijn46}. While many properties of such sequences are well-known, still a lot more remains to be discovered as applications in diverse areas such as cryptography, bioinfomatics, and robotics continue to be discussed. Instead of attempting a comprehensive survey of the application landscape, we highlight a number of implementations in the literature.
	
It is a major interest in cryptography to identify a large number of highly nonlinear de Bruijn sequences and, if possible, to quickly generate them, either on hardware or software. Works on the design and evaluation of de Bruijn sequences for cryptographic implementations are too numerous to list. We mention a few as starting points for interested readers to consult. Some de Bruijn sequences with certain linear complexity profiles (see the discussion in~\cite{Stamp1993}) can be used as keystream generators in stream chipers~\cite{Robshaw2008}. An efficient hardware implementation, capable of handling large order $n$, was proposed in~\cite{Yang2017}. 
	
An influential work of Pevzner \etal showed deep connections between Eulerian paths in de Bruijn graphs and the fragment assembly problem in DNA sequencing~\cite{PTW01}. More recent works, \eg, that of Aguirre \etal in~\cite{Aguirre2011} described how to use de Bruijn sequences in neuroscientific studies on the neural response to stimuli. The sequences form a rich source of hard-to-predict orderings of the stimuli in the experimental designs.
	
Position locators are extremely useful in robotics and often contain de Bruijn sequences as ingredients. Scheinerman determined the absolute $2$-dimensional location of a robot in a workspace by tiling it with black and white squares based on a variant of a chosen de Bruijn sequence in~\cite{Scheinerman2001}. The roles of de Bruijn sequences in robust positioning patterns, with some discussion on their error-control capabilities, are treated in~\cite{Bruck12}. A nice application of de Bruijn sequences in image acquisition was proposed in~\cite{Pages2005}. They form a design of coloured stripe patterns to locate both the intensity peaks and the edges without loss of accuracy, while reducing the number of required hue levels. A variant of the $32$-card magic trick, based on a de Bruijn sequence of order $5$, inspired Gagie to come up with two new lower bounds for data compression and a new lower bound on the entropy of a Markov source in~\cite{Gagie2012}.
	
Extensive studies on a topic of long history such as de Bruijn sequences must have resulted in numerous generating methods. An excellent survey for approaches up to the end of 1970s can be found in~\cite{Fred82}. Numerous construction routes continue to be proposed. One can arguably put them into several clusters. 
	
In a graph-theoretic construction, one may start with the $n$-dimensional de Bruijn graph over the binary symbols. A de Bruijn sequence of order $n$ is a Hamiltonian path in the graph. Equivalently, the sequence is an Eulerian cycle in the $(n-1)$-dimensional de Bruijn graph. A complete enumeration can be inefficiently done, \eg, by the Fleury Algorithm~\cite{Fleury}. Scores of ideas on how to identify specific paths or cycles have also been put forward, typically with additonal properties imposed on the sequences.
	
Some approaches are algebraic. One can take a primitive polynomial of order $n$ in the ring of polynomials $\F_2[x]$, turn it into a linear Feedback Shift Register (FSR), and output a maximal length sequence, also known as $m$-sequence. Appending another $0$ to the string of $0$s of length $n-1$ in the $m$-sequence results in a de Bruijn sequence. The {\it cycle joining  method} (CJM) is another well-known generic construction approach. As discussed in, \eg, \cite{Fred82} and~\cite{Golomb}, the main idea is to join all cycles produced by a given FSR into a single cycle by identifying the so-called conjugate pairs. 
	
Numerous FSRs have been shown to yield a large number of output sequences. We mention two of the many works available in the literature. Etzion and Lempel chose the {\it pure cycling register} (PCR) and the {\it pure summing register} (PSR) in~\cite{EL84} to generate a remarkable number, exponential in $n$, of de Bruijn sequences. When the characteristic polynomials of the linear FSRs are product of pairwise distinct irreducible polynomials, Chang \etal proposed some algorithms to determine the exact number of de Bruijn sequences that the CJM can produce in~\cite{Chang2019}. Interested readers may want to consult~\cite[Table 4]{Chang2019} for a summary of the input parameters and the performance complexity of prior works. There have also been a lot of constructions based on {\it ad hoc} rules. They are cheap to implement but yield only a few de Bruijn sequences. An example is the joining, in lexicographic order, of all Lyndon words whose length divides $n$ in~\cite{FM78}. Jansen \etal established a requirement to determine some conjugate pairs in~\cite{JFB91}, leading to an efficient generating algorithm. A special case of the requirement was highlighted in~\cite{Sawada16} and a generalization was subsequently given by Gabric \etal in~\cite{Gabric2018}. 
	
In the {\it cross-join pairing} method, one starts with a known de Bruijn sequence and proceeds to identify cross-join pairs that allow the sequence to be cut and reglued into inequivalent de Bruijn sequences. Further details and examples were supplied, \eg, by Helleseth and Kl{\o}ve in~\cite{HK91} and by Mykkeltveit and Szmidt in~\cite{MS15}.
	
Our present work focuses on the greedy algorithms, of which {\tt Prefer-One}~\cite{Martin1934} is perhaps the most famous, followed by {\tt Prefer-Same}~\cite{Fred82} and {\tt Prefer-Opposite}~\cite{Alh10}. Although using greedy algorithms to generate de Bruijn sequences tends to be impractical due to the usually exponential storage demand, it remains theoretically interesting. Aside from being among the oldest methods of generating de Bruijns sequences, greedy algorithms often shed light on various properties of de Bruijn sequences and their connections to other combinatorial and discrete structures. New greedy algorithms continue to be discussed, \eg, a recent one in~\cite{WWZ18}. The current first three authors recently proposed the {\tt Generalized Prefer-Opposite} (GPO) Algorithm to generalize known greedy algorithms and provided several new families of de Bruijn sequences in~\cite{Chang21}. A good number of greedy algorithms make use of {\it preference functions} as important construction tools. A discussion on this approach was given in~\cite{Golomb}. Many results on preference functions have been established by Alhakim in~\cite{Alh12}.
	
Here we prove further properties of the GPO Algorithm and show that this new approach covers \emph{all} previously-known ad hoc greedy algorithms as well as those based on preference functions. More specifically, our investigation yields the following contributions.
	
\begin{enumerate}
	\item The first contribution is summarized in Theorems~\ref{gpo} and~\ref{thm:m}. 
		
	The fact that \emph{any} periodic sequence can be produced by a greedy algorithm is well-known. Given the sequence, however, it is often unclear which preference or feedback function and initial state yield the sequence. Theorem~\ref{gpo} provides an answer. It confirms that \emph{any} periodic sequence with nonlinear complexity $n \geq 2$ can be produced by the GPO Algorithm. It then shows how to \emph{explicitly find} an input pair. The pair consists of a suitable initial state and a feedback function $f(x_0,x_1,\ldots,x_{n-1})$, in which the coefficient of $x_0$ is zero. Theorem~\ref{gpo} also corrects a minor mistake~\cite[Lemma 3 on p. 132]{Golomb} in the influential book of Golomb.
		
	Theorem~\ref{thm:m} generalizes some well-known results from \cite{Fred82} and \cite{Alh12}. Since all de Bruijn sequences can be generated by some greedy algorithms with the all zeroes $0,0,\ldots,0$ initial state, prior works in the literature only studied the case of that initial state. Theorem~\ref{thm:m}, on the other hand, treats the general case that \emph{the initial state can be chosen more arbitrarily}. In particular, the GPO Algorithm generates a de Bruijn sequence if and only if the initial state is chosen from the states contained in the only cycle in the state graph of the FSR of a suitably chosen feedback function.
		
	\item Our second contribution is arguably much more significant. Sections~\ref{sec:GJ} and~\ref{sec:PAG} discuss a novel {\bf graph joining algorithm} in detail. It generalizes the cycle joining approach by removing the requirement that the FSR under consideration only produces disjoint cycles. More specifically, if a particular function $f$ fails to directly generate any de Bruijn sequence via the GPO Algorithm but meets a simple condition (details will be given later), then we modify the algorithm, by adding assignment rules, to ensure that the output sequences become de Bruijn. This graph joining method (GJM) results in our {\tt Graph Joining Prefer-Opposite} (GJPO) Algorithm.
\end{enumerate}
	
We note that the results extend naturally to nonbinary setups. A more comprehensive treatment, however, lies beyond the scope of our present construction work. So is a deeper analysis on the linear complexity profiles of the resulting sequences. 
	
After this introduction, Section~\ref{sec:prelim} gathers useful preliminary notions and known results. Section~\ref{sec:periodic} establishes the conditions for the GPO Algorithm to generate periodic sequences. Section~\ref{sec:suffice} characterizes the conditions for which the GPO Algorithm produces de Bruijn sequences. Section~\ref{sec:GJ} shows how to modify the GPO Algorithm to still produce de Bruijn sequences even when the required conditions are not satisfied. This allows us to include larger classes of feedback functions in our new graph-joining construction method. We study the graph theoretic properties of a class of feedback functions whose structures can be easily determined and succinctly stored in Section~\ref{sec:PAG}. This shows that mitigation is possible to reduce the steep storage cost as $n$ grows, if choices are made judiciously. The last section contains a summary and a few directions to consider. Observations to highlight how the GJM differs from the CJM wrap the paper up.
	
\section{Preliminaries}\label{sec:prelim}
	
An {\it $n$-stage shift register} is a circuit. It has $n$ consecutive storage units and is clock-regulated. Each unit holds a bit. As the clock pulses, the bit shifts to the next stage in line. The register outputs a new bit $s_n$ based on the $n$-bit {\it initial state} $\bs_0 := s_0,\ldots,s_{n-1}$. The corresponding {\it feedback function} $f(x_0,\ldots,x_{n-1})$ is the Boolean function over $\F_2^{n}$ that outputs $s_n$ on input $\bs_0$.
	
The output of a feedback shift register (FSR) is a binary sequence $\bs := s_0,s_1,\ldots, s_n$, $\ldots$ satisfying $s_{n+\ell} = f(s_{\ell},s_{\ell+1}, \ldots, s_{\ell+n-1})$ for $\ell = 0,1,2,\ldots$. The \emph{smallest integer} $N$ that satisfies $s_{i+N} = s_i$ for all $i \geq 0$ is the {\it period} of $\bs$. The {\it $N$-periodic} sequence $\bs$ can then be written as $\bs:= (s_0,s_1,s_2,\ldots,s_{N-1})$. We call $\bs_i := s_i,s_{i+1},\ldots,s_{i+n-1}$ the 	{\it $i$-th state}\footnote{In most references, a state is written in between parentheses, for example, $\bs_i :=(s_i,s_{i+1},\ldots,s_{i+n-1})$. We remove the parentheses, throughout, for brevity and to avoid the cumbersome $f((s_i,s_{i+1},\ldots,s_{i+n-1}))$ notation for state evaluation.} of $\bs$. The states $\bs_{i-1}$ and $\bs_{i+1}$, analogously defined, are the {\it predecessor} and {\it successor} of $\bs_i$, respectively. In tabular form, an $n$-string $c_0,c_1,\ldots,c_{n-1}$ is often written concisely as $c_0c_1 \ldots c_{n-1}$. The {\it complement} $\overline{c}$ of $c \in \F_2$ is $1+c$. The complement of a string is produced by taking the complement of each element. The respective strings of zeroes and of ones, each of length $\ell$, are denoted by $\0^{\ell}$ and $\1^{\ell}$. Given an $n$-stage state $\ba :=a_0,a_1,\ldots,a_{n-1}$, its {\it conjugate state} is $\widehat{\ba}:=\overline{a_0}, a_1, \ldots, a_{n-1}$ while $\widetilde{\ba}:=a_0,a_1,\ldots,\overline{a_{n-1}}$ is its {\it companion state}.
	
A feedback function $f(x_0,x_1,\ldots,x_{n-1}) = x_0 + g(x_1,\ldots,x_{n-1})$, where $g$ is a Boolean function over $\mathbb{F}_2^{n-1}$, is said to be {\it nonsingular}. Otherwise $f$ is said to be {\it singular}. An FSR with nonsingular feedback function generates periodic sequences~\cite[pp. 115--116]{Golomb}.
	
The FSR that generates a binary (ultimately) periodic sequence $\bs$ is, in general, not unique. In his doctoral thesis~\cite{Jansen89}, Jansen introduced the notion of {\it nonlinear complexity}. The nonlinear complexity of $\bs$, denoted by $\nlc(\bs)$, is the smallest $k$ such that there is a $k$-stage FSR, with feedback function $f(x_0, x_1, \ldots, x_{k-1})$, that generates $\bs$. When this holds, we can make three assertions. 
\begin{enumerate}
	\item Each $k$-stage state of $\bs$ appears at most once. 
	\item There exists at least one $(k-1)$-stage state of $\bs$ that appears twice. 
	\item The period of $\bs$ is at most $2^k$.
\end{enumerate}

Seen in this light, binary de Bruijn sequences of order $n$ are periodic sequences with nonlinear complexity $n$ having the maximum period. Furthermore, there are only two periodic sequences with nonlinear complexity $0$, namely, the all zero sequence $(0,0,\ldots,0)$ with feedback function $f=0$ and the all one sequence $(1,1,\ldots,1)$ with $f=1$. The unique sequence with nonlinear complexity $1$ is $(0,1)$ with $f=1+x_0$. Henceforth, we concentrate on a binary periodic sequence $\bs$ with $\nlc(\bs) \geq 2$.
	
The {\it state graph} of FSR with feedback function $f(x_0,x_1,\ldots,x_{n-1})$ is a directed graph $\cG_f$. Its vertex set $V_{\cG_f}$ consists of all $2^n$ $n$-stage states. There is an edge directed from $\bu:=u_0,u_1,\ldots,u_{n-1}$ to $\bv:=u_1,u_2,\ldots,f(u_0,u_1,\ldots,u_{n-1})$. We call $\bu$ a {\it child} of $\bv$ and $\bv$ the {\it parent} of $\bu$. We allow $\bu = \bv$, in which case $\cG_f$ contains a loop, \ie, a cycle with only one vertex. Notice that if $f$ is singular, then some states have two distinct children in $\cG_f$. A {\it leaf} in $\cG_f$ is a vertex with no child. A vertex $\bw$ is a {\it descendent} of $\bu$ if there is a directed path that starts at $\bw$ and ends at $\bu$. In turn, $\bu$ is called an {\it ancestor} of $\bw$. A {\it rooted tree} $T_{f,\bb}$ in $\cG_f$ is the largest tree in $\cG_f$ in which the vertex $\bb$ has been designated the {\it root} and the edge that emanates out of $\bb$ has been removed from $\cG_f$. In this work the orientation is \textbf{towards} the root $\bb$, \ie, $T_{f,\bb}$ is an {\it in-tree} or an {\it anti-arborescence}.
	
{\it Preference function} is an important tool in constructing $t$-ary de Bruijn sequences by greedy algorithm \cite[Chapter VI]{Golomb}. We first recall the next two definitions.
	
\begin{definition}\label{def:pf}
Given a positive integer $t>1$ and an alphabet $A =\{0, 1, \ldots, t-1\}$ of size $t$, a preference function $P$ of $n-1$ variables is a $t$-dimensional vector valued function of $(n-1)$-stage states such that $P_0(\ba),\ldots, P_{t-1}(\ba)$ is a rearrangement of $0,1,\ldots,t-1$, for each choice of $\ba=a_0, a_1, \ldots, a_{n-2} \in A^{n-1}$.
\end{definition}
	
\begin{definition}\label{def:Galg}
For any initial state $\bu=u_0,u_1,\ldots,u_{n-1}$ and preference function $P$, the following inductive definition determines a unique finite sequence $\bs$.
\begin{enumerate}
	\item $s_0=u_0,\, s_1=u_1,\,\ldots,\, s_{n-1}=u_{n-1}$.
	\item If $s_{N+1},\ldots,s_{N+n-1}$ have been defined, then 
	\[
	s_{N+n} := P_i(s_{N+1},\ldots,s_{N+n-1}),
	\]
	where $i$ is the least integer such that the state 
	\[
	s_{N+1},\ldots,s_{N+n-1},P_i(s_{N+1}, \ldots,s_{N+n-1})
	\]
	has not previously occurred in $s_0,s_1,\ldots,s_{N+n-1}$.
	\item Let $L(s_i)$ be the first value of $N$ such that no integer $i$ can be found to satisfy Item 2. Then the last digit of the sequence is $s_{L(s_i)+n-2}$ and $L(s_i)$ is called the cycle period.
\end{enumerate}
\end{definition}
	
The sequence generated by the algorithm described in Definition~\ref{def:Galg} is known to be $L(s_i)$-periodic. Golomb presented various ways to use preference function to generate de Bruijn sequences in~\cite[Chapter VI]{Golomb}. In the next section we modify the procedure in Definition \ref{def:Galg} to design a new algorithm that generates periodic sequences.
	
\section{GPO Algorithm and Periodic Sequences}\label{sec:periodic}
	
	When $t=2$, a Boolean function $f(x_0,x_1,\ldots,x_{n-1})$ over $\mathbb{F}_2^n$ that satisfies
	\begin{equation}\label{eq:one}
		f(0,\ba)=f(1,\ba)=P_1(\ba)
	\end{equation}
	is equivalent to a preference function. The procedure in Definition~\ref{def:Galg} can be carried out by Algorithm~\ref{algo:genpo}. Following~\cite{Chang21}, we called it the {\tt Generalized Prefer-Opposite} (GPO) Algorithm. The {\tt Prefer-One} de Bruijn sequence, for example, is the output on input $f=0$ and $\bu=\0^n$.
	
	\begin{algorithm}[ht!]
		\caption{{\tt Generalized Prefer-Opposite} (GPO)}
		\label{algo:genpo}
		\begin{algorithmic}[1]
			\renewcommand{\algorithmicrequire}{\textbf{Input:}}
			\renewcommand{\algorithmicensure}{\textbf{Output:}}
			\Require A feedback function $f(x_0,x_1,\ldots,x_{n-1})$ and an initial state $\bu$.
			\Ensure A binary sequence.
			\State{$\bc:=c_0, c_1, \ldots ,c_{n-1} \gets \bu$}	
			\Do
			\State{Print($c_0$)}
			\State{$y \gets f(c_0,c_1,\ldots,c_{n-1})$}
			\If{$c_1, c_2, \ldots, c_{n-1}, \overline{y}$ has not appeared before}
			\State{$\bc \gets c_1, c_2, \ldots, c_{n-1},  \overline{y}$} \label{Algpo:line5}
			\Else
			\State{$\bc \gets c_1, c_2, \ldots, c_{n-1}, y $} \label{Algpo:line7}
			\EndIf
			\doWhile{$\bc \neq \bu$}
		\end{algorithmic}
	\end{algorithm}
	
	\begin{remark}
		We make some remarks on the GPO Algorithm in relation to preference functions and the procedure in Definition~\ref{def:Galg}.
		
		\begin{enumerate}
			\item If the current state is $c_0, c_1, \ldots ,c_{n-1}$, then the GPO Algorithms prefers 
			\[
			\bc:=c_1, c_2, \ldots, c_{n-1}, \overline{f(c_0,\ldots,c_{n-1})}
			\]
			as the next state, unless $\bc$ had appeared before. The algorithm prefers the opposite of $f(c_0,\ldots,c_{n-1})$. The name {\tt Prefer-Opposite} Algorithm had already been used in~\cite{Alh10}, so we call Algorithm~\ref{algo:genpo} the {\tt Generalized Prefer-Opposite} (GPO) Algorithm.
			\item The GPO Algorithm and the procedure in Definition~\ref{def:Galg} differ slightly. If the initial state is $\bu=u_0,u_1,\ldots,u_{n-1}$, then $P_0(u_0,\ldots,u_{n-2}):=u_{n-1}$ in Definition~\ref{def:Galg}. Instead of $P_1(u_0,\ldots,u_{n-2}):=\overline{u_{n-1}}$, we let $f(x,u_0,\ldots,u_{n-2}):=u_{n-1}$, for any $x \in \mathbb{F}_2$, to ensure that the GPO Algorithm terminates.
			\item We restrict our attention to the binary case because of its wider interest, even though the GPO Algorithm extends naturally to the nonbinary case.
		\end{enumerate}
	\end{remark}
	
	To successfully use the GPO Algorithm to generate binary de Bruijn sequences, suitable feedback functions and initial states must be identified. For this purpose, we begin by studying the properties of this algorithm to establish the condition that ensures the output is de Bruijn.
	
	When does the GPO Algorithm generate a periodic sequence? If the output is periodic, is it necessarily de Bruijn? Do distinct input pairs always yield inequivalent outputs? The answer to all three questions is no. For some input pairs $(f,\bu)$, the algorithm never revisits the initial state $\bu$ and, hence, does not terminate. In such cases, the output sequence is {\it ultimately periodic} but not periodic. On occasions, the algorithm generates cyclically equivalent sequences on distinct input pairs $\{(f_i,\bu_i): i \in I \}$ for some index set $I$. Take $f_1(x_0,\ldots,x_{n-1}):=0$ and $f_2(x_0,\ldots,x_{n-1}):=\prod_{j=0}^{n-1} x_j$, with $\bu=\0^n$, for example. Both input pairs $(f_1,\bu)$ and $(f_2,\bu)$ yield an identical de Bruijn sequence of order $n$.
	
	Next we show that any binary periodic sequence with $\nlc(\bs) > 1$ can be generated by the GPO algorithm, with well-chosen feedback function and initial state.
	
	\begin{theorem}\label{gpo}
		Let $\bs$ be any binary periodic sequence with $\nlc(\bs) = n \ge 2$. Then $\bs$ can be generated by the GPO Algorithm with input pair $(f,\bu)$ provided that the following conditions are met.
		\begin{enumerate}[noitemsep]
			\item The feedback function $f$ has the form
			\begin{equation}\label{equ:f}
				f(x_0,x_1,\ldots,x_{n-1}):=g(x_1,\ldots,x_{n-1})
			\end{equation}
			for some Boolean function $g(x_0,\ldots,x_{n-2})$ over $\F_2^{n-1}$.
			\item The state $\bu=u_0,\ldots,u_{n-2},u_{n-1}$ meets the requirement that $u_0,\ldots,u_{n-2}$ appears twice in each period of $\bs$.
		\end{enumerate}
	\end{theorem}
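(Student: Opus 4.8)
The plan is to treat $\bs$ as a fixed cyclic sequence and to reverse-engineer the Boolean function $g$ so that the GPO Algorithm is forced to retrace $\bs$ step by step, starting and ending at $\bu$. Throughout, write $N$ for the period of $\bs$ and abbreviate the $(n-1)$-stage states by $\bs'_i := s_i, s_{i+1}, \ldots, s_{i+n-2}$, so that $\bs_i = \bs'_i, s_{i+n-1}$ has tail $\bs'_{i+1}$. First I would record the two structural consequences of $\nlc(\bs) = n$ that drive everything: the $n$-stage states $\bs_0, \ldots, \bs_{N-1}$ are pairwise distinct, and consequently each $(n-1)$-tuple occurs \emph{at most twice} in one period, since a third occurrence would force two of its binary successors to coincide and thereby produce a repeated $n$-state. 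Because at least one $(n-1)$-tuple occurs twice, states $\bu = \bs_m$ whose prefix $u_0, \ldots, u_{n-2} = \bs'_m$ appears twice do exist, which is precisely the hypothesis on $\bu$.

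With $\bu = \bs_m$ fixed, I would order the positions by cyclic distance from $m$, that is, by $(j-m) \bmod N$, which is the order in which GPO is intended to visit the $\bs_j$. For an $(n-1)$-tuple $w$ that actually occurs in $\bs$, let $r(w)$ be the position of its \emph{first} occurrence in this order and set
\begin{equation*}
g(w) := \overline{s_{r(w)+n-1}},
\end{equation*}
the complement of the symbol following that first occurrence; on tuples that never occur in $\bs$, define $g$ arbitrarily. Then put $f(x_0, \ldots, x_{n-1}) := g(x_1, \ldots, x_{n-1})$, which manifestly has the shape demanded in~\eqref{equ:f}. The key point is that $r(w)$ is determined by $\bs$ and $m$ alone, so this definition is not circular even though its whole purpose is to prescribe GPO's behaviour; moreover GPO only ever evaluates $g$ at tails of states it visits, all of which lie on $\bs$, so the arbitrary off-sequence values are irrelevant.

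Next I would run the induction: for $t = 0, 1, \ldots, N-1$, after printing $t$ symbols the algorithm sits at $\bs_{m+t}$ and has visited exactly $\bs_m, \bs_{m+1}, \ldots, \bs_{m+t}$ in that order. In the inductive step the tail of the current state is $w = \bs'_{m+t+1}$, GPO computes $y = g(w)$ and prefers the successor ending in $\overline{y}$. If $w$ occurs once, or occurs twice with $m+t+1$ its first occurrence, then $\overline{y} = s_{m+t+n}$ by construction, the preferred successor is the as-yet-unvisited $\bs_{m+t+1}$, and the test on line~\ref{Algpo:line5} succeeds. If instead $m+t+1$ is the \emph{second} occurrence of $w$, the preferred successor equals the already-printed $\bs_{r(w)}$, so line~\ref{Algpo:line5} fails and the fall-back on line~\ref{Algpo:line7} is taken; since the two successors of $w$ are complementary, the fall-back ends in $s_{m+t+n}$ and again lands on $\bs_{m+t+1}$. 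Either way the invariant persists.

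The step I expect to be the real obstacle — and where the hypothesis on $\bu$ is indispensable — is proving termination, namely that GPO returns to $\bu$ after exactly $N$ steps rather than escaping or closing early. Since $\bs'_m$ occurs twice, it is used as a tail at two states, $\bs_{m-1}$ and $\bs_{m'-1}$, where $m'$ marks the other occurrence; in the cyclic order from $m$ the state $\bs_{m-1}$ is the very last, so $\bs_{m'-1}$ is the first encounter and $\bs_{m-1}$ the second. At that second encounter the preferred successor is the already-visited $\bs_{m'}$, forcing the fall-back, which ends in $s_{m+n-1}$ and hence reproduces $\bs'_m, s_{m+n-1} = \bs_m = \bu$, so the loop halts. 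I would also verify the complementary direction, that the loop cannot halt \emph{before} step $N$, which follows from the invariant because $\bs_{m+t} = \bu$ for $0 < t < N$ would contradict the distinctness of the $n$-states. Assembling these, the printed word $s_m, s_{m+1}, \ldots, s_{m+N-1}$ is a cyclic shift of $\bs$, completing the argument.
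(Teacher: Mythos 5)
Your overall strategy is the same as the paper's: reverse-engineer $g$ from the first/second occurrences of the $(n-1)$-tuples of $\bs$, prove by induction that GPO retraces the states of $\bs$ in order, and close the loop with a separate termination argument at the repeated tuple $u_0,\ldots,u_{n-2}$. However, there is a concrete error in your definition of $g$ at precisely the one tuple that controls termination, and it makes your final paragraph inconsistent with your construction. You order occurrence positions by $(j-m)\bmod N$, so for $w=u_0,\ldots,u_{n-2}=\bs'_m$ the first occurrence is $r(w)=m$ (distance $0$), and your formula gives $g(w)=\overline{s_{m+n-1}}=\overline{u_{n-1}}$. With that value, when GPO sits at the last state $\bs_{m-1}$ the \emph{preferred} successor is $u_0,\ldots,u_{n-2},\overline{g(w)}=\bu$ itself, which has already appeared; the algorithm therefore takes the fall-back $u_0,\ldots,u_{n-2},g(w)=\widetilde{\bu}\neq\bu$, the do-while condition is not met, and the run never terminates at $\bu$. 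Your termination paragraph asserts instead that the preferred successor at this step is the already-visited $\bs_{m'}=\widetilde{\bu}$, which presupposes $g(w)=u_{n-1}$ --- the complement of what your definition yields.

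The source of the mismatch is that position $m$ is the first occurrence of $w$ as a \emph{prefix} of a visited state, but the last occurrence of $w$ as a \emph{tail} (GPO evaluates $g$ on the tail $\bs'_{m+t+1}$ of the current state, so the tuple at position $m$ is reached only at step $t=N-1$, after the one at position $m'$). For every other $(n-1)$-tuple the two orderings agree, so your induction is fine there; only $\bs'_m$ is affected. The repair is exactly what the paper does: set $g(u_0,\ldots,u_{n-2}):=u_{n-1}$ directly (equivalently, define $r(w)$ by the order in which $w$ arises as a tail). Then at the first tail-encounter ($\bs_{m'-1}$) the preferred successor is the unvisited $\widetilde{\bu}$ and is taken, and at the second ($\bs_{m-1}$) the preferred successor $\widetilde{\bu}$ has appeared, so the fall-back returns $\bu$ and the loop halts. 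With this one-line correction your argument matches the paper's proof; as written, it fails at the step you yourself single out as the real obstacle.
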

	
	\begin{proof}
		Let an $N$-periodic sequence $\bs=(s_0,s_1,\ldots,s_{N-1})$ be given. In one period of $\bs$, since $\nlc(\bs)=n$, there must exist at least one $(n-1)$-stage state $a_0,a_1,\ldots,a_{n-2}$ that appears twice. Hence, the $n$-stage state $\ba=a_0,\ldots,a_{n-2},a$ and its companion state $\widetilde{\ba}=a_0,\ldots,a_{n-2},\overline{a}$ with $a \in \F_2$ appear once each. Any of these two can be the initial state of the GPO Algorithm. Without loss of generality, let $\bu := \bs_0 = s_0,s_1,\ldots,s_{n-1} = \ba$.
		
		We construct a feedback function $f(x_0,\ldots, x_{n-1})$ based on $\bs$. Let
		\[
		f(x_0,a_0,\ldots,a_{n-2}):=g(a_0,\ldots,a_{n-2})=a \mbox{ for } x_0 \in \F_2.
		\]
		We start with the initial state $\bu$. For any state $\bs_i = s_i,s_{i+1},\ldots,s_{i+n-1}$ with $0 \leq i \leq N-2$, if the $(n-1)$-stage state $s_{i+1},\ldots,s_{i+n-1} \neq a_0,\ldots,a_{n-2}$ appears for the first time in $\bs$, then we define
		\[
		f(s_i,s_{i+1},\ldots,s_{i+n-1}) :=f(\overline{s_i},s_{i+1},\ldots,s_{i+n-1}) = g(s_{i+1},\ldots,s_{i+n-1})= \overline{s_{i+n}}. 
		\]
		If $s_{i+1},\ldots,s_{i+n-1}$ appears for the second time in $\bs$, then we define
		\[
		f(s_i,s_{i+1},\ldots,s_{i+n-1}) := f(\overline{s_i},s_{i+1},\ldots,s_{i+n-1}) = g(s_{i+1},\ldots,s_{i+n-1})=s_{i+n}.
		\]
		The above two definitions of $f$ coincide if the $(n-1)$-stage state $s_{i+1},\ldots,s_{i+n-1}$ appears twice in $\bs$. If the $(n-1)$-stage state $b_{1},\ldots,b_{n-1}$ has not appeared in $\bs$, we define
		\[
		f(x_0,b_1,\ldots,b_{n-1}) := g(b_1,\ldots,b_{n-1})=b \mbox{ for any } x_0
		\mbox{ and } b \in \F_2,
		\]
		that is, $g(b_1,\ldots,b_{n-1})$ can take any arbitrary binary value.
		
		Now we prove that the output of the GPO Algorithm on input $(f,\bu)$ defined above is indeed $\bs$. As the run of the algorithm begins, $\bc=c_0,\ldots,c_{n-1} = \bu =s_0,\ldots,s_{n-1}$. Inductively, suppose that $c_i,\ldots,c_{i+n-1}=s_i,\ldots,s_{i+n-1}$ for some $i$ with $0\leq i\leq N-2$ and in sequence $\bs$ the bit after the state $s_i,\ldots,s_{i+n-1}$ is $s_{i+n}$.
		If $s_{i+1},\ldots,s_{i+n-1}=a_0,\ldots,a_{n-2}$, then this $(n-1)$-stage state appears in $\bs$ for the second time and $s_{i+n}=\overline{a}$. Because the state
		\begin{multline*}
			\bv=c_{i+1},\ldots,c_{i+n-1},\overline{f(c_i,c_{i+1},\ldots,c_{i+n-1})}\\
			= a_0,\ldots,a_{n-2},\overline{f(c_i,a_0,\ldots,,a_{n-2})}=a_0,\ldots,a_{n-2},\overline{a}
		\end{multline*}
		has not appeared, the algorithm rules the next state to be $s_{i+1},\ldots,s_{i+n-1},\overline{a}$ and $c_{i+n}=\overline{a}= s_{i+n}=$.
		
		Let $s_{i+1},\ldots,s_{i+n-1}\neq a_0,\ldots,a_{n-2}$. We consider the state
		\begin{multline*}
			\bv=c_{i+1},\ldots,c_{i+n-1},\overline{f(c_i,c_{i+1},\ldots,c_{i+n-1})}\\ 
			= s_{i+1},\ldots,s_{i+n-1},\overline{f(s_i,s_{i+1},\ldots,s_{i+n-1})}.
		\end{multline*}
		If $s_{i+1},\ldots,s_{i+n-1}$ appears for the first time, then, by the definition of $f$, we have 
		\[
		f(s_i,s_{i+1},\ldots,s_{i+n-1})=\overline{s_{i+n}}\]
		and \[\bv=s_{i+1},\ldots,s_{i+n-1},\overline{\overline{s_{i+n}}}=s_{i+1},\ldots,s_{i+n-1},s_{i+n}
		\]
		has not appeared. The algorithm then dictates the next state to be $s_{i+1},\ldots,s_{i+n}$ and $c_{i+n}=s_{i+n}$. If $s_{i+1},\ldots,s_{i+n-1}$ appears for the second time, then, by the definition of $f$, we know that $f(s_i,s_{i+1},\ldots,s_{i+n-1})={s_{i+n}}$ and
		\[
		\bv=s_{i+1},\ldots,s_{i+n-1},\overline{s_{i+n}}
		\]
		must have appeared earlier. Hence, the next state must be
		\[
		s_{i+1},\ldots,s_{i+n-1},f(s_i,s_{i+1}, \ldots,s_{i+n-1}) = s_{i+1},\ldots,s_{i+n-1},s_{i+n}
		\]
		and $c_{i+n}=s_{i+n}$.
		
		Finally, for $i=N-1$, the state after
		\[
		c_{N-1},c_{N},\ldots,c_{N+n-2}=s_{N-1},s_0,\ldots,s_{n-2}=s_{N-1},a_0,\ldots,a_{n-2}
		\]
		must be $a_0,\ldots,a_{n-2},a=\bu$, since $a_0,\ldots,a_{n-2},\overline{a}$ has appeared earlier. The algorithm terminates and outputs $\bs$. \qed
	\end{proof}
	
	Theorem~\ref{gpo} states that any periodic sequence with nonlinear complexity $\geq2$ can be GPO-generated with feedback function having $0$ as the coefficient of $x_0$ and some initial state. 
	The proof of Theorem \ref{gpo}, furthermore, tells us how to choose the feedback function and initial state in the GPO Algorithm to generate the periodic sequence.
	Henceforth, we just consider such feedback functions and we say that a feedback function $f$ is {\it standard} or {\it in the standard form} if it has the form specified in Equation (\ref{equ:f}). For such an $f$, each non-leaf vertex in $\cG_f$ has two children and, if there is a loop, one of the vertex's two children is the vertex itself.
	
	The following lemma puts a necessary and sufficient condition on the initial state for the GPO Algorithm to generate a periodic sequence.
	
	\begin{lemma}\label{lemma-is}
		Let a standard $f(x_0,x_1,\ldots,x_{n-1})$ and a state $\bu$ be given. Then the GPO Algorithm on $(f,\bu)$ generates a periodic sequence if and only if $\bu$ is not a leaf in $\cG_f$.
	\end{lemma}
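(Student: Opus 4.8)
The plan is to follow the successive values $\bc_0=\bu,\bc_1,\bc_2,\ldots$ taken by the working state; the algorithm produces a periodic sequence precisely when some $\bc_t$ equals $\bu$, so ``periodic'' is synonymous with ``the walk returns to $\bu$''. The whole proof rests on a structural dichotomy forced by $f$ being standard. Writing $S(\bc):=c_1,\ldots,c_{n-1},f(\bc)$ for the unique successor (parent) of a state, the prefer\nobreakdash-opposite candidate $c_1,\ldots,c_{n-1},\overline{f(\bc)}$ is exactly the companion $\widetilde{S(\bc)}$. Now $S(\bc)$ is a non-leaf, since $\bc$ is a child of it; and exactly one member of any companion pair is a leaf, because $g(v_0,\ldots,v_{n-2})$ equals exactly one of $v_{n-1},\overline{v_{n-1}}$. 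Hence $\widetilde{S(\bc)}$ is always a leaf. I would record the resulting central observation: every prefer-opposite move lands on a leaf while every fall-back move lands on a non-leaf, and (as already noted after Theorem~\ref{gpo}) each non-leaf has exactly two children, a conjugate pair. Consequently a leaf is entered only by a prefer-opposite move, hence at most once, since such a move requires the target to be new; and a non-leaf is entered only by a fall-back from one of its two children.

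For the easy direction I would argue the contrapositive. If $\bu$ is a leaf, then a return $\bc_t=\bu$ with $t\ge 1$ would come either from a fall-back, forcing $\bu=S(\bc_{t-1})$ and thus giving $\bu$ a child---impossible for a leaf---or from a prefer-opposite move, forcing $\bu$ to be unvisited at that time---impossible, as $\bu=\bc_0$. So $\bu$ is never revisited, the loop never terminates, and the output is only ultimately periodic.

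The harder direction, and the main obstacle, is termination when $\bu$ is a non-leaf: I must rule out the walk getting trapped in a cycle avoiding $\bu$. Since there are finitely many states, some state repeats; let $t$ be minimal with $\bc_t=\bc_j$ for some $j<t$, so that $\bc_0,\ldots,\bc_{t-1}$ are distinct. The goal is to force $j=0$. First, $\bc_{t-1}\to\bc_t$ cannot be a prefer-opposite move, whose target would be new, so it is a fall-back; hence $\bc_j$ is a non-leaf and $\bc_{t-1}$ is one of its two children. Suppose $j\ge 1$. Applying the same reasoning at step $j$ shows $\bc_j$ was itself entered by a fall-back from a child $\bc_{j-1}$, and that this fall-back was forced because the companion leaf $\widetilde{\bc_j}=\widetilde{S(\bc_{j-1})}$ had already been visited, say $\widetilde{\bc_j}=\bc_m$ with $m\le j-1$. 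Here is the crux: since $\bu$ is not a leaf and $\bc_m$ is, we have $m\ge 1$, so $\bc_m$ was reached by a prefer-opposite move from $\bc_{m-1}$, i.e. $\bc_m=\widetilde{S(\bc_{m-1})}$; combined with $\bc_m=\widetilde{\bc_j}$ this unwinds to $S(\bc_{m-1})=\bc_j$, exhibiting $\bc_{m-1}$ as a child of $\bc_j$. As $\bc_{m-1},\bc_{j-1},\bc_{t-1}$ carry distinct indices in $\{0,\ldots,t-1\}$, they are three distinct states, all children of $\bc_j$---contradicting that a non-leaf has only two children. Therefore $j=0$, the first repetition is the return to $\bu$, and the algorithm halts with periodic output. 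I expect the bookkeeping of this three-children contradiction---keeping prefer-opposite and fall-back entries apart, and checking the degenerate loop state $\bc_j=S(\bc_j)$, whose two children are $\bc_j$ itself and $\widehat{\bc_j}$---to be the delicate part.
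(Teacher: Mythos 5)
Your proposal is correct and follows essentially the same route as the paper: both directions rest on the observation that prefer-opposite moves always target leaves (so each leaf, and in particular a leaf initial state, is entered at most once) while fall-backs target the unique parent, and the non-leaf direction derives a contradiction from the first repeated state together with the bound of two children per non-leaf. Your ``three distinct children'' contradiction is merely a repackaging of the paper's argument that both children of the first repeated state are visited before its first visit, so its second visit would force an earlier second visit to a child.
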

	
	\begin{proof}
		Since $f$ is in the standard form, $\cG_f$ contains leaves. Suppose that the initial state $\bu$ is a leaf. To have
		\[
		\ba=a_0,a_1,\ldots,a_{n-1} \mbox{ and } \widehat{\ba}=\overline{a_0},a_1,\ldots,a_{n-1}
		\]
		as its two possible predecessors, $\bu$ must be
		\[a_1,\ldots,a_{n-1},\overline{f(a_0,a_1,\ldots,a_{n-1})}=a_1,\ldots,a_{n-1},\overline{g(a_1,\ldots,a_{n-1})}.
		\]
		As the algorithm visits either $\ba$ or $\widehat{\ba}$, we keep in mind that 
		\[
		a_1,\ldots,a_{n-1}, \overline{f(a_0,a_1,\ldots,a_{n-1})} = \bu
		\]
		had appeared before, as it is the initial state. Hence, the next state must be
		\[
		a_1,\ldots,a_{n-1},{f(a_0,a_1,\ldots,a_{n-1})}=a_1,\ldots,a_{n-1},{g(a_1,\ldots,a_{n-1})}=\widetilde{\bu}.
		\]
		This implies that the initial state $\bu$ will never be revisited. Thus,
		the output is an infinite and ultimately periodic sequence, \ie, not a periodic one.
		
		Now, let $\bu$ be a non-leaf state in $\cG_f$. Since the number of states in $\cG_f$ is finite, the algorithm must eventually revisit some state. We show that it is impossible for the algorithm to visit any state twice before it visits the initial state $\bu$ for the second time. For a contradiction, suppose that $\ba \neq \bu$ is the first state to be visited twice. The above analysis confirms that $\ba$ is not a leaf and, hence, has two children in $\cG_f$, say $\bv$ and $\widehat{\bv}$. When $\ba$ is visited by the algorithm for the first time, one of its two children, say $\bv$, is the actual predecessor of $\ba$ as the algorithm runs. This implies that $\widehat{\bv}$ has been visited before $\bv$. Otherwise, the successor of $\bv$ must be $\widetilde{\ba}$, by the rules of the GPO Algorithm. So we have deduced that both children of $\ba$ must have been visited by the time $\ba$ is visited for the first time. The second time $\ba$ is visited, one of its two children must have also been visited twice. This contradicts the assumption that $\ba$ is the first vertex to have been visited twice.
		
		We have thus shown that the initial state $\bu$ must be the first state to be visited twice. The algorithm stops and outputs a periodic sequence when it reaches $\bu$ for the second time. \qed
	\end{proof}
	
	The next corollary follows immediately from Theorem~\ref{gpo}.
	
	\begin{corollary}\label{cor-3e}
		Let a given binary periodic sequence $\bs$ with $\nlc(\bs)=n \ge 2$ be generated by the GPO Algorithm on input $(f,\bu)$, where $f(x_0,x_1,\ldots,x_{n-1})$ is in the standard form and $\bu=u_0,u_1,\ldots,u_{n-1}$. Then the $(n-1)$-stage state $u_0,\ldots,u_{n-2}$ appears twice in each period of $\bs$.
	\end{corollary}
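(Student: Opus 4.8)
The plan is to pin down the two $n$-stage states that share the $(n-1)$-prefix $P := u_0,\ldots,u_{n-2}$ and then bound the number of occurrences of $P$ from above and from below. First I would note that, because $\bu$ is the state $\bs_0$ from which the run starts, $P$ occurs at least once in each period. The only $n$-stage states whose first $n-1$ coordinates equal $P$ are $\bu=u_0,\ldots,u_{n-1}$ itself and its companion $\widetilde{\bu}=u_0,\ldots,u_{n-2},\overline{u_{n-1}}$; since $\nlc(\bs)=n$, the first of the three assertions recorded for nonlinear complexity guarantees that each of these appears at most once per period, so $P$ occurs \emph{at most} twice. It therefore suffices to show that $P$ occurs \emph{at least} twice, equivalently that $\widetilde{\bu}$ also occurs in $\bs$.

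For the lower bound I would analyse the terminating step of Algorithm~\ref{algo:genpo}, reusing the termination reasoning from the last paragraph of the proof of Theorem~\ref{gpo}. The run halts exactly when the current state $\bc$ is reset to $\bu$; let $\bw$ be the state held just before this final update, so that the suffix of $\bw$ is $P$ and $y=f(\bw)$ is computed. The complementing branch on line~\ref{Algpo:line5} fires only for a candidate that has \emph{not} appeared before, whereas $\bu$ has already been visited at the very start; hence $\bu$ cannot be produced by line~\ref{Algpo:line5} and must instead be produced by the non-complementing branch on line~\ref{Algpo:line7}. But line~\ref{Algpo:line7} is taken precisely when the preferred candidate $P,\overline{f(\bw)}$ has already appeared. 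Since the non-preferred successor actually selected is $\bu=P,u_{n-1}$, we read off $f(\bw)=u_{n-1}$, so the blocked preferred candidate is $P,\overline{u_{n-1}}=\widetilde{\bu}$; thus $\widetilde{\bu}$ must have occurred earlier in $\bs$.

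Combining the two bounds gives exactly two occurrences of $P$, as claimed. The step I expect to be the main obstacle is the middle one: arguing cleanly that $\bu$ is necessarily re-entered through line~\ref{Algpo:line7} rather than line~\ref{Algpo:line5}, and correctly matching the blocked preferred candidate to $\widetilde{\bu}$. A minor side point to dispatch is that the run genuinely performs a nontrivial traversal before returning to $\bu$, so that a predecessor $\bw$ and its preferred candidate actually exist; this is automatic, since the line~\ref{Algpo:line7} condition already forces the distinct state $\widetilde{\bu}$ to have been visited before the return, ruling out an immediate self-return, and moreover Lemma~\ref{lemma-is} guarantees $\bu$ is a non-leaf in $\cG_f$ once the output is known to be periodic. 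The hypothesis $\nlc(\bs)\ge 2$ also excludes the constant sequences, so the only loop states $\0^n,\1^n$ of a standard $f$ cause no trouble.
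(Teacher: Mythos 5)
Your proof is correct and follows essentially the same route the paper intends: the corollary is stated as an immediate consequence of Theorem~\ref{gpo}, and the termination step in that theorem's proof is precisely your key observation that the return to $\bu$ must happen through line~\ref{Algpo:line7} (since $\bu$ has already appeared), which forces the companion $\widetilde{\bu}$ to have occurred earlier. Your explicit upper bound of two occurrences, via the fact that every $n$-stage state of a sequence with $\nlc(\bs)=n$ appears at most once, is a detail the paper leaves implicit but is the same in spirit.
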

	
	We note that if a periodic sequence $\bs$ is generated by the GPO Algorithm with a standard feedback function $f(x_0,\ldots,x_{n-1})$, then any $n$-stage state appears in $\bs$ at most once.
	
	\begin{corollary}\label{cor-3}
		On input a standard feedback function $f(x_0,\ldots,x_{n-1})$, the nonlinear complexity of the periodic sequence generated by the GPO Algorithm is $n$.
	\end{corollary}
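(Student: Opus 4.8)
The plan is to establish $\nlc(\bs)=n$ by proving the two bounds $\nlc(\bs)\le n$ and $\nlc(\bs)\ge n$ separately, where $\bs$ denotes the periodic output of the GPO Algorithm on the standard input $f$ together with its initial state $\bu=u_0,\ldots,u_{n-1}$. Since $\bs$ is assumed periodic, Lemma~\ref{lemma-is} guarantees that $\bu$ is not a leaf in $\cG_f$ and that the algorithm halts precisely upon re-entering $\bu$ for the second time; both facts are used below.

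For the upper bound I would invoke the remark stated immediately before the corollary: every $n$-stage state occurs at most once in $\bs$. This lets me manufacture a genuine $n$-stage feedback function $h(x_0,\ldots,x_{n-1})$ that reproduces $\bs$ as an FSR output. For each $n$-stage state $\bs_i$ that actually occurs in one period, set $h(\bs_i):=s_{i+n}$, the bit following it, and assign arbitrary values on the remaining states. Because no state repeats, $h$ is well defined, and the FSR with feedback $h$ and initial state $\bs_0$ regenerates $\bs$. Hence $\nlc(\bs)\le n$.

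For the lower bound I would show that some $(n-1)$-stage state occurs at least twice, which by the first assertion recalled in Section~\ref{sec:prelim} forbids any $k$-stage FSR with $k\le n-1$: if $\nlc(\bs)=k\le n-1$, a repeated $(n-1)$-state would force its length-$k$ prefix to be a repeated $k$-state, contradicting that each $k$-state appears at most once. The candidate repeated state is the prefix $u_0,\ldots,u_{n-2}$ of the initial state. The argument I would carry out inspects the final transition, when the algorithm re-enters $\bu$. The state immediately preceding this re-entry has the form $w_0,u_0,\ldots,u_{n-2}$, and the GPO rule prefers $u_0,\ldots,u_{n-2},\overline{g(u_0,\ldots,u_{n-2})}$ as its successor. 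If this preferred successor were $\bu$ itself, the algorithm could not have chosen it, since $\bu$ already appeared (it is the initial state) and so the ``not appeared before'' test fails; the algorithm must therefore have reached $\bu$ through the non-preferred branch. This forces the preferred successor to equal the companion $\widetilde{\bu}=u_0,\ldots,u_{n-2},\overline{u_{n-1}}$ and to have occurred earlier. Thus both $\bu$ and $\widetilde{\bu}$ lie in $\bs$, and they share the prefix $u_0,\ldots,u_{n-2}$, which consequently appears twice; hence $\nlc(\bs)\ge n$.

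The main obstacle is this termination case analysis: one must argue carefully that re-entry into $\bu$ cannot occur through the preferred branch (otherwise the occurrence of $\widetilde{\bu}$ is not guaranteed), and that the ``appeared before'' bookkeeping counts $\bu$ already from the first step. Everything else, namely the construction of $h$ and the reduction from a repeated $(n-1)$-state to $\nlc\ge n$ via the standard facts on nonlinear complexity, is routine. Combining the two bounds yields $\nlc(\bs)=n$.
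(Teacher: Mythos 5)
Your proof is correct and follows essentially the same route as the paper: $\nlc(\bs)\le n$ from the distinctness of the $n$-stage states, and $\nlc(\bs)\ge n$ from the fact that the $(n-1)$-stage prefix $u_0,\ldots,u_{n-2}$ of the initial state occurs twice in a period. The only difference is that you re-derive that latter fact directly from the termination-step case analysis (showing re-entry into $\bu$ must occur via the non-preferred branch, so $\widetilde{\bu}$ also appears), whereas the paper simply cites Corollary~\ref{cor-3e}.
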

	
	\begin{proof}
		Since all of the $n$-stage states of the generated sequence $\bs$ are distinct, the nonlinear complexity satisfies $\nlc(\bs)\leq n$. On the other hand, by Corollary \ref{cor-3e}, there is at least one $(n-1)$-stage state in $\bs$ that appears twice, making $\nlc(\bs)\geq n$. \qed
	\end{proof}
	
	When the GPO Algorithm generates periodic sequence, a known result in~\cite{Alh12}, also stated as~\cite[Lemma 1]{Chang21} with an alternative proof, follows from the proof of Lemma \ref{lemma-is}.
	
	\begin{lemma}\label{lemma1}
		Let $\cG_f$ be the state graph of the FSR with standard feedback function $f$. Let $\bv$ be any vertex with two children. By the time the GPO Algorithm, on input $(f,\bu)$ where $\bu$ is not a leaf, visits $\bv \neq \bu$, it must have visited both children of $\bv$.
	\end{lemma}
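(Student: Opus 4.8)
The plan is to extract and repackage the sub-argument already buried inside the proof of Lemma~\ref{lemma-is}, now phrased for an \emph{arbitrary} two-child vertex $\bv$ rather than only for the hypothetical first-revisited vertex. Before touching the algorithm, I would record the structural facts forced by $f$ being standard. Writing $\bv = v_0,\ldots,v_{n-1}$, the vertex $\bv$ has two children precisely when $g(v_0,\ldots,v_{n-2}) = v_{n-1}$; in that case the two children form a conjugate pair $\bw$ and $\widehat{\bw}$, and for each of them the genuine $\cG_f$-successor is $\bv$ itself while the GPO-preferred target is the companion $\widetilde{\bv}$. The key bookkeeping observation is that for \emph{any} state, the preferred target is the companion of that state's successor, and such a companion is always a leaf: reading off its would-be children forces $g$ to equal its own complement, which is impossible. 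In particular $\widetilde{\bv}$ is a leaf, and no vertex with two children can ever be entered by a preferred move.

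Next I would run the algorithm up to the first visit of $\bv \neq \bu$. By the observation just made, the step into $\bv$ cannot be a preferred move, so it is the non-preferred (true $\cG_f$) move from one of $\bv$'s children, say $\bw$. Since the algorithm took the non-preferred branch at $\bw$ --- landing on $\bv$ rather than on the preferred $\widetilde{\bv}$ --- the leaf $\widetilde{\bv}$ must already have been visited at that moment.

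To finish, I would pin down how $\widetilde{\bv}$ could have been reached. As $\widetilde{\bv}$ is a leaf it has no children, so it is never entered by a genuine successor edge; and it is not the initial state, because $\bu$ is assumed not to be a leaf. Hence the only way the walk ever steps onto $\widetilde{\bv}$ is as the preferred target of a child of $\bv$, \ie, from $\bw$ or from $\widehat{\bw}$. It was not reached from $\bw$, whose single departure goes to $\bv$ (no state repeats before $\bu$ recurs, by Lemma~\ref{lemma-is}), so it was reached from $\widehat{\bw}$, which was therefore visited even earlier. Combining, both children $\bw$ and $\widehat{\bw}$ have been visited by the time $\bv$ is first reached.

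The main obstacle is conceptual rather than computational: the GPO walk is \emph{not} a walk in $\cG_f$, since the preferred move leaves the graph's edge set, so one cannot reason about walk-predecessors as if they were $\cG_f$-children. The whole argument hinges on the single fact that the preferred targets are exactly the leaf companions of successors. That one fact simultaneously forces both that $\bv$ is entered through a genuine child-edge and that the leaf $\widetilde{\bv}$ can only be entered from a child of $\bv$, which together produce the two distinct children. The only external input needed is the no-repetition statement from Lemma~\ref{lemma-is}, which is what lets me treat each child's departure as unique.
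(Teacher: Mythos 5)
Your proof is correct and follows essentially the same route as the paper, which gives no standalone proof of this lemma but derives it from the argument embedded in the proof of Lemma~\ref{lemma-is}: the entry into a two-child vertex must be a non-preferred move from a child, the preferred target $\widetilde{\bv}$ is a leaf that can only be reached from a child of $\bv$, and the no-repetition property rules out the same child supplying both. Your version simply makes explicit the key structural facts (preferred targets are always leaf companions of genuine successors) that the paper leaves implicit.
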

	
	Lemma \ref{lemma1} does not hold if the initial state is a leaf. Let $f(x_0,x_1,x_2)=x_1+1$, for example. If we use $010$, which is a leaf in $\cG_f$, as the initial state, then the states appear in the order
	\[
	010 \rightarrow 101 \rightarrow 011 \rightarrow 111 \rightarrow \dots.
	\]
	When $011$ is visited, one of its children, $001$, has not been visited.
	
	We discover that~\cite[Lemma 3 on p. 132]{Golomb} is in fact incorrect. The original statement is reproduced here for convenience.
	\smallskip
	\begin{quote}
		``For any cyclic recursive sequence \{$a_i$\} of degree $n$ and for any $n$-digit word in it, there exists a preference function which generates the sequence, using the given word as the initial word.''
	\end{quote}
	\smallskip
	The quoted statement holds for $t$-ary de Bruijn sequences, but fails in general. In the binary case, Corollary~\ref{cor-3e} tells us that the initial state $\bu$ must have the property that the $(n-1)$-stage state $u_0,\ldots,u_{n-2}$ appears twice in $\bs$. For a general periodic sequence $\bs$, such a condition may not be met. Consider $\bs=(0011101)$, for instance, and start with $001$ as the initial state. Then we have either $(P_0(00)=0 \wedge P_1(00)=1)$ or $(P_0(00)=1 \wedge P_1(00)=0)$.
	If the former holds, then, upon the second visit to $00$, the smallest $i$ is therefore $0$ and the string $000$ is formed. If the latter holds, then, upon the second visit to $00$, the smallest $i$ is now $1$ and, again, the string $000$ is formed. But $\bs$ does not contain the string $000$.
	
	Furthermore, in the $t$-ary case, the initial state must satisfy that $u_0,\ldots,u_{n-2}$ appears $t$ times in $\bs$. Thus, there exist periodic sequences of degree $n$ that can never be generated by the algorithm described in Definition~\ref{def:Galg}. We state the conclusion as a lemma, without a proof.
	
	\begin{lemma}\label{lemma-Golomb}
		For a cyclic recursive sequence \{$a_i$\} of degree $n$ such that there is an $(n-1)$-digit $u_0,\ldots,u_{n-2}$ that appears $t$ times in a period of the sequence, there exists a preference function which generates the sequence, using an initial state whose first $n-1$ digits are $u_0,\ldots,u_{n-2}$.
	\end{lemma}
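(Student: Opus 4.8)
The plan is to adapt the construction in the proof of Theorem~\ref{gpo} to the $t$-ary preference-function setting, with the role of ``appears twice'' taken over by ``appears $t$ times.'' Write the sequence as $\bs = (s_0, s_1, \ldots, s_{N-1})$, so that ``degree $n$'' means $\nlc(\bs) = n$. I would first record the structural consequence: since $\nlc(\bs) = n$, every $n$-stage state occurs at most once, so each $(n-1)$-stage window occurs at most $t$ times and the symbols following its occurrences are pairwise distinct. Thus the hypothesis that the window $\ba := u_0, \ldots, u_{n-2}$ occurs exactly $t$ times is equivalent to saying that all $t$ extensions of $\ba$ occur in $\bs$. After a cyclic shift I may assume the window $\ba$ sits at position $0$, and I take as initial state the $n$-stage state $\bs_0$ that begins there; its first $n-1$ digits are indeed $u_0, \ldots, u_{n-2}$.

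Next I would build the preference function $P$. For every window $\bb$ that occurs in $\bs$, say $k \le t$ times, list the symbols $c_1, c_2, \ldots, c_k$ that follow its occurrences in their order of appearance along the period, and set $P_0(\bb) := c_1, \ldots, P_{k-1}(\bb) := c_k$, filling $P_k(\bb), \ldots, P_{t-1}(\bb)$ with the remaining symbols in any fixed order; on windows absent from $\bs$ the value of $P$ is an arbitrary rearrangement. By construction $P$ satisfies Definition~\ref{def:pf}.

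The heart of the argument is an induction showing that the procedure of Definition~\ref{def:Galg}, run on $(P, \bs_0)$, reproduces $\bs$. Suppose the first $m+n$ outputs agree with $s_0, \ldots, s_{m+n-1}$ for some $m \le N-2$, so the states produced so far are exactly $\bs_0, \ldots, \bs_m$ and the current window is $\bb = s_{m+1}, \ldots, s_{m+n-1}$, whose occurrence at position $m+1$ is, say, the $j$-th along the period. The extensions of $\bb$ already visited are precisely those of its earlier occurrences, namely $\bb c_1, \ldots, \bb c_{j-1}$; by the chosen ordering the first fresh preference is $P_{j-1}(\bb) = c_j$, so the procedure outputs $c_j = s_{m+n}$. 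For termination, after one full period the current window is $\ba$ again, and since $\ba$ occurs $t$ times all $t$ of its extensions have by now been used, so no admissible index survives and the procedure halts, having printed exactly one period of $\bs$.

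The main obstacle I anticipate is the one-step bookkeeping inside the inductive step: the occurrences of $\bb$ as an $(n-1)$-\emph{suffix}, where $P(\bb)$ is consulted, are shifted by one position from its occurrences as a \emph{prefix}, which are what determine the visited extensions $\bb c$. I would check that this shift is exactly what makes the current (suffix) visit the $j$-th while leaving precisely $j-1$ extensions used, so that $c_j$ is fresh. A secondary point, and the reason the hypothesis insists on $t$ rather than fewer occurrences, is that for a window $\bb$ occurring $k < t$ times the surplus preferences $P_k(\bb), \ldots, P_{t-1}(\bb)$ must never be reached; the induction guarantees this, since $\bb$ is then visited as a suffix exactly $k$ times, so the procedure neither stalls early nor wanders onto an extension absent from $\bs$. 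Applying the same count to $\ba$ is precisely what forces clean termination after a single period.
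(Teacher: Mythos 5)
Your proof is correct. The paper deliberately states Lemma~\ref{lemma-Golomb} without proof, as the $t$-ary conclusion of the discussion surrounding Theorem~\ref{gpo}, and your argument is precisely the intended generalization of that theorem's proof: ordering the $P_i(\bb)$ by order of appearance of the extensions of each window, verifying by induction that the least admissible index at the $j$-th visit is $j-1$, and using the hypothesis that $u_0,\ldots,u_{n-2}$ occurs exactly $t$ times to force termination after one full period.
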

	
	We have, thus far, gained two insights. 
	
	First, if the nonlinear complexity of a given periodic sequence is $n \geq 2$, then the order of the corresponding feedback function (or preference function) must also be $n$ and the chosen initial state ${\bf u}=u_0,\ldots,u_{n-2},u_{n-1}$ must satisfy the requirement that $u_0,\ldots,u_{n-2}$ appears twice in the given sequence. 
	
	Second, if either the nonlinear complexity is not equal to the order of feedback function or $u_0,\ldots,u_{n-2}$ appears just once, then the given sequence cannot be generated by the stipulated feedback function and initial state. Even worse, for $t > 2$, there are $t$-ary periodic sequences which cannot be generated by Algorithm 2 in \cite[Definition 2 on p. 131]{Golomb}. For example, if we view a binary de Bruijn sequence of order $n$ as a $t$-ary periodic sequence, then it cannot be generated. By the time the said algorithm terminates, some $(n-1)$-stage states must have been visited $t$ times, with $t>2$.
	
	\section{A Characterization for the GPO Algorithm to Produce de Bruijn Sequences}\label{sec:suffice}
	
	Among all of the binary periodic sequences that the GPO Algorithm can generate, our focus is on the de Bruijn ones. We now investigate the conditions for the algorithm to generate \textbf{only} de Bruijn sequences. For any order $n$, this is equivalent to ensuring that all $n$-stage states are visited as the algorithm runs its course. It is clear that $\cG_f$ contains at least one cycle or loop. To guarantee that the GPO Algorithm visits the states in a cycle or loop, the initial state must satisfy some conditions.
	
	\begin{lemma}\label{lemma-initial}
		For a given standard $f(x_0,x_1,\ldots,x_{n-1})$, let $\cC$ be a cycle or a loop in $\cG_f$. If the GPO Algorithm with input $(f,\bu)$ is to visit all states in $\cC$, then $\bu$ must also be in $\cC$.
	\end{lemma}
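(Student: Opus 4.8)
The plan is to argue by contradiction, isolating the \emph{first} state of $\cC$ that the algorithm visits and showing that Lemma~\ref{lemma1} forces one of its children — which itself lies in $\cC$ — to have been visited even earlier, which is impossible. The combinatorial heart of the argument is very short; almost all the care goes into verifying that the hypotheses of Lemma~\ref{lemma1} are actually in force.

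First I would restrict attention to runs that yield a genuine periodic sequence, which is the setting of interest here since de Bruijn sequences are periodic. By Lemma~\ref{lemma-is} this is equivalent to $\bu$ not being a leaf of $\cG_f$, and that is precisely the hypothesis needed to invoke Lemma~\ref{lemma1}. I would flag that this non-leaf requirement is genuinely essential, not cosmetic: if $\bu$ were a leaf, the run would be only ultimately periodic, and — as the example immediately following Lemma~\ref{lemma1} illustrates — a vertex can then be visited before both of its children, so the inference below would collapse. With the periodic (equivalently, non-leaf) hypothesis in place, this gap is closed.

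Next, suppose toward a contradiction that $\bu \notin \cC$. Since the algorithm visits every state of $\cC$ and $\cC \neq \emptyset$, there is a well-defined first state $\bv \in \cC$ reached by the algorithm. Because $\bv$ sits on the cycle (or loop) $\cC$, it has a predecessor $\bw \in \cC$, namely the state of $\cC$ whose FSR-successor is $\bv$; thus $\bw$ is a child of $\bv$ in $\cG_f$. In particular $\bv$ is not a leaf, and since $f$ is standard every non-leaf vertex of $\cG_f$ has exactly two children. Hence $\bv$ is a vertex with two children, one of which, $\bw$, belongs to $\cC$. When $\cC$ is a loop this degenerates to $\bw = \bv$, the vertex being its own child, exactly as noted after Theorem~\ref{gpo}.

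Finally, since $\bv \in \cC$ while $\bu \notin \cC$, we have $\bv \neq \bu$, so Lemma~\ref{lemma1} applies to the two-child vertex $\bv$: by the time the algorithm visits $\bv$ it must already have visited both of its children, and in particular $\bw$. But $\bw \in \cC$, so a state of $\cC$ is visited strictly before $\bv$, contradicting the choice of $\bv$ as the first state of $\cC$ to be reached (in the loop case, $\bw = \bv$ would have to be visited before itself, an immediate absurdity). Therefore $\bu \in \cC$. The main obstacle is not this core deduction but the bookkeeping that precedes it: confirming that $\bv$ really carries two children — automatic from standardness once $\bv$ is seen to be a non-leaf — and, more delicately, that $\bu$ is not a leaf, which is the point most easily overlooked and where the periodicity assumption does the real work.
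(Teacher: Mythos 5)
Your proof is correct, and for the cycle case it coincides exactly with the paper's: isolate the first state of $\cC$ reached, note that its cycle-predecessor is one of its two children, and invoke Lemma~\ref{lemma1} to contradict firstness. The divergence is in the loop case. The paper deliberately does \emph{not} route that case through Lemma~\ref{lemma1}; it argues directly that when the algorithm stands at $1\0^{n-1}$, the preferred successor $\0^{n-1}1$ cannot have appeared yet (its only possible predecessors being $1\0^{n-1}$ and $\0^n$ itself), so the algorithm is forced away from $\0^n$ and the loop is never entered. You instead apply Lemma~\ref{lemma1} to the loop vertex, which is its own child, and read ``must have visited both children'' as ``strictly before,'' declaring this an immediate absurdity. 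That reading is defensible and is consistent with how the lemma is used in the cycle case, but the lemma's statement and its proof (embedded in Lemma~\ref{lemma-is}) are phrased for the configuration where a child is the run-predecessor of the vertex; applied verbatim to a self-child it is arguably vacuous (``visited by the time it is visited'') rather than contradictory, which is presumably why the paper supplies the separate direct argument. If you want to keep your unified treatment, you should either unpack the proof of Lemma~\ref{lemma1} for the self-loop configuration (the unpacking does go through: the non-preferred move from $1\0^{n-1}$ into $\0^n$ would require $\0^{n-1}1$ to have appeared, which forces an earlier visit to $\0^n$) or simply reproduce the paper's two-line direct argument. Your explicit restriction to non-leaf $\bu$ is a reasonable and arguably necessary clarification: the paper omits this hypothesis from the statement, yet its own cycle-case proof already relies on Lemma~\ref{lemma1}, which requires it.
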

	
	\begin{proof}
		We prove the contrapositive of the statement. Suppose that $\bu$ is not in $\cC$. If $\cC$ is a loop, then the only state in $\cC$ is either $\0^n$ or $\1^n$. We start with the loop $\cC$ having $\0^n$ as the only state. Before reaching $\0^n$, the algorithm must have visited its predecessor $1\0^{n-1}$ first. We note that the other possible successor of $1\0^{n-1}$ is $\0^{n-1}1$, which has not appeared earlier because its two possible predecessors are $1\0^{n-1}$ and $\0^n$. Thus, the actual successor of $1\0^{n-1}$ must have been $\0^{n-1}1$, instead of $\0^n$, leaving the latter out. So $\0^n$ would never be visited by the GPO Algorithm and, hence, the generated sequence is not de Bruijn. The case of the loop $\cC$ having $\bu=\1^n$ as the only state can be similarly argued.
		
		If $\cC$ is a cycle, then let $\ba$ be the first state that the algorithm visits from among all of the states in $\cC$. Lemma \ref{lemma1} says that both of $\ba$'s children must have been visited before. One of the two, however, must also belong to $\cC$, contradicting the assumption that $\ba$ is the first. \qed
	\end{proof}
	
	Using the same method in the proof of Lemma \ref{lemma-initial}, we can directly infer the next result, whose proof is omitted here for brevity.
	
	\begin{lemma}\label{lemma-2}
		Let $f(x_0,x_1,\ldots,x_{n-1})$ be in the standard form. If there are $t > 1$ cycles or loops in $\cG_f$, then, taking an $n$-stage state in any one of the cycles or loops as the initial state, the GPO Algorithm will not visit any of the states belonging to the other cycles or loops.
	\end{lemma}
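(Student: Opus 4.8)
The plan is to argue by contradiction, splitting into the same two cases (a loop versus a genuine cycle) that appear in the proof of Lemma~\ref{lemma-initial}. Fix one cycle or loop $\cC_1$ and an initial state $\bu \in \cC_1$, and let $\cC$ denote any one of the remaining $t-1$ cycles or loops, so that $\cC \neq \cC_1$. First I would record the structural fact that every vertex of $\cG_f$ has a unique successor, namely its shift-image $u_1,\ldots,u_{n-1},f(\bu)$, so that $\cG_f$ has out-degree one everywhere; consequently its cycles and loops are pairwise vertex-disjoint, and in particular $\bu \notin \cC$. I then suppose, for contradiction, that the GPO Algorithm on $(f,\bu)$ visits at least one state of $\cC$.

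Next I would dispose of the loop case. If $\cC$ is a loop, its single vertex is $\0^n$ or $\1^n$. Because $\bu \notin \cC$, the initial state differs from this loop vertex, so the argument in the first paragraph of the proof of Lemma~\ref{lemma-initial} transfers verbatim: for $\cC=\{\0^n\}$ the only non-loop child of $\0^n$ is $1\0^{n-1}$, whose preferred successor $\0^{n-1}1$ is still unvisited when $1\0^{n-1}$ is reached (its only possible predecessors being $\0^n$ and $1\0^{n-1}$ themselves), so the algorithm takes that branch and bypasses $\0^n$ forever; the case $\cC=\{\1^n\}$ is symmetric. This contradicts the assumption that a state of $\cC$ is visited.

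Then I would treat the case where $\cC$ is a genuine cycle. Let $\ba$ be the first state of $\cC$ that the algorithm visits. Since $\bu \notin \cC$ we have $\ba \neq \bu$, and since $\ba$ lies on a cycle it is a non-leaf, hence (as $f$ is standard) it has exactly two children. Lemma~\ref{lemma1} then forces both children of $\ba$ to have been visited before $\ba$. But the predecessor of $\ba$ along $\cC$ maps to $\ba$ and therefore is one of those two children, and it lies in $\cC$; so a state of $\cC$ is visited strictly before $\ba$, contradicting the choice of $\ba$. Combining the two cases shows that no state of $\cC$ is ever visited, and as $\cC$ was an arbitrary cycle or loop other than $\cC_1$, the claim follows.

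The work here is essentially bookkeeping, since the two cases mirror those of Lemma~\ref{lemma-initial}. The only points requiring care are the two hypotheses needed to invoke Lemma~\ref{lemma1}: that $\ba \neq \bu$, which I would secure from the vertex-disjointness of the cycles, and that $\ba$ has two children, which holds because a vertex on a cycle cannot be a leaf and $f$ is standard. The main thing to verify, rather than a genuine obstacle, is that the loop argument of Lemma~\ref{lemma-initial} was established for an initial state lying \emph{outside} the loop, so that it applies here to $\bu \in \cC_1$ with no modification.
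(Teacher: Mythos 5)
Your proof is correct and follows exactly the route the paper intends: the paper omits the proof of this lemma, stating only that it follows by the same method as Lemma~\ref{lemma-initial}, and your two-case argument (loop via the $1\0^{n-1} \to \0^{n-1}1$ branch, cycle via Lemma~\ref{lemma1} applied to the first visited state of $\cC$) is precisely that method, with the hypotheses for invoking Lemma~\ref{lemma1} checked carefully. Nothing further is needed.
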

	
	The following lemma characterizes the input on which the GPO Algorithm is guaranteed to output a de Bruijn sequence.
	
	\begin{lemma}\label{lemma-1}
		On input a standard $f(x_0,x_1,\ldots,x_{n-1})$ and an $n$-stage state $\bu$, the GPO Algorithm generates a binary de Bruijn sequence of order $n$ if $\cG_f$ satisfies one of the following equivalent conditions.
		\begin{enumerate}
			\item There is a unique directed path from any state $\bv \neq \bu$ to $\bu$.
			\item There is a unique cycle or loop in $\cG_f$ containing $\bu$.
		\end{enumerate}
	\end{lemma}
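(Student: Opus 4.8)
The plan is to settle the equivalence of the two conditions first, and then to prove the de Bruijn property from either one.

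\emph{Equivalence of Conditions 1 and 2.} Since $f$ is standard, every vertex of $\cG_f$ has exactly one outgoing edge, so $\cG_f$ is a functional graph. Each of its weakly connected components then contains exactly one cycle or loop, with all remaining edges forming in-trees oriented toward that cycle. In particular, from any vertex $\bv$ the forward walk obtained by repeatedly following outgoing edges is uniquely determined and eventually runs around the cycle of its component; hence a directed path from $\bv$ to $\bu$, if one exists, is automatically unique, being the initial segment of this walk up to the first occurrence of $\bu$. Condition 1 therefore reduces to the reachability statement that $\bu$ lies on the forward orbit of every $\bv \neq \bu$. This holds precisely when $\cG_f$ is connected and $\bu$ lies on its unique cycle: a cycle vertex is reached by every forward orbit, whereas a non-cycle vertex is never reached from the vertices already lying on the cycle, and no vertex of a different component is reached at all. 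This last description is exactly Condition 2, so the two conditions are equivalent; each is also equivalent to the assertion that $T_{f,\bu}$ is a spanning in-tree of $\cG_f$.

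\emph{Reduction to a visiting statement.} Because $\bu$ lies on a cycle it is not a leaf, so Lemma~\ref{lemma-is} guarantees that the GPO Algorithm on $(f,\bu)$ halts and outputs a periodic sequence, whose visited states are pairwise distinct by Corollary~\ref{cor-3}. It remains to show that \emph{all} $2^n$ states are visited, for then the output is de Bruijn. I would argue by contradiction using the depth of a vertex, i.e.\ the length of its unique path to $\bu$ in $T_{f,\bu}$: among all unvisited states pick one, $\bw$, of minimum depth $m\geq 1$. Its successor $\bw^+$ in $\cG_f$ (the parent of $\bw$ in $T_{f,\bu}$) has smaller depth, hence is visited. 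If $\bw^+\neq\bu$, then $\bw^+$ has two children, and Lemma~\ref{lemma1} says both were visited by the time $\bw^+$ was; in particular $\bw$ was visited, a contradiction.

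\emph{The main obstacle: the children of $\bu$.} The case $\bw^+=\bu$ is exactly where Lemma~\ref{lemma1} gives no information, since it is stated only for $\bv\neq\bu$. To close it I would prove that \emph{both} children of $\bu$, say $\bc_1$ and $\bc_2$ (a conjugate pair), are visited. Here I use three standing structural facts, each a one-line check: the GPO's preferred neighbor of any state is a leaf of $\cG_f$ (so in particular the companion $\widetilde{\bu}$ is a leaf); the only states whose preferred neighbor is $\widetilde{\bu}$ are $\bc_1$ and $\bc_2$; and $\widetilde{\bu}$ is never the $\cG_f$-successor of any state, so it can be entered \emph{only} by a preferred step out of $\bc_1$ or $\bc_2$. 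Now the algorithm terminates by re-entering $\bu$, and this final edge cannot be a preferred edge, because a preferred edge ends at a leaf whereas $\bu$ is not a leaf. Hence $\bu$ is re-entered through its successor edge from some child $\bc^{\ast}$, a step the GPO takes only because $\widetilde{\bu}$ has already been visited. Thus $\widetilde{\bu}$ is visited, and by the preceding fact it was entered by a preferred step from some child $\bc_i$. Since the GPO walk assigns each visited state a single successor and visits states without repetition, the state stepping into $\widetilde{\bu}$ and the state $\bc^{\ast}$ stepping into $\bu$ are distinct; being the only two children of $\bu$, they exhaust $\{\bc_1,\bc_2\}$. Therefore both children of $\bu$ are visited, so the minimal-depth case $m=1$ also yields a contradiction, every state is visited, and the output is de Bruijn. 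The delicate point, and the one I expect to require the most care to state cleanly, is precisely this termination-plus-companion argument for the root, since it is the only place where the generic Lemma~\ref{lemma1} does not apply.
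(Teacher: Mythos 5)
Your proof is correct and follows essentially the same route as the paper's: propagate Lemma~\ref{lemma1} through the in-tree rooted at $\bu$ to conclude that every descendant of $\bu$ (i.e., every state) is visited before $\bu$ is revisited, your minimal-depth-counterexample phrasing being just the contrapositive of the paper's top-down recursion. The only substantive difference is at the root, where the paper invokes Lemma~\ref{lemma1} for the revisit of $\bu$ even though that lemma is stated only for $\bv\neq\bu$; your termination-plus-companion argument for the two children of $\bu$ supplies exactly the missing justification, and it is the same companion-state reasoning already used inside the proof of Lemma~\ref{lemma-is}.
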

	
	\begin{proof}
		Suppose that there is a unique directed path from any state $\bv \neq \bu$ to $\bu$ in $\cG_f$. Then any $\bv$ can be viewed as a descendent of $\bu$. Lemma \ref{lemma1} implies that by the time the algorithm revisits $\bu$, it must have visited $\bu$'s two children, if $\bu$ is in a cycle. If $\bu$ is in a loop instead, then it must have visited $\bu$'s only child. Applying the lemma recursively, the grandchildren of $\bu$ must have all been visited prior to that. Continuing the process, we confirm that all of $\bu$'s descendants must have been covered in the running of the algorithm. Thus, the generated sequence is de Bruijn of order $n$. \qed
	\end{proof}
	
	We combine Lemmas \ref{lemma-initial}, \ref{lemma-2} and \ref{lemma-1} to obtain the following result.
	
	\begin{theorem}\label{thm:m}
		Let a state $\bu=(u_0,u_1,\ldots,u_{n-1})$ and a standard $f(x_0,x_1,\ldots,x_{n-1})$ be given as the input. Then the GPO Algorithm generates a binary de Bruijn sequence of order $n$ if and only if there is a unique directed path from any state $\bv \neq \bu$ to $\bu$ in $\cG_f$. The condition is equivalent to the existence of a unique cycle or loop in $\cG_f$ with the property that $\bu$ is contained in this unique cycle or loop in $\cG_f$.
	\end{theorem}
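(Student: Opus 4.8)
The plan is to prove the two directions of the biconditional separately, treating the two graph conditions in the statement as interchangeable by first pinning down the structure of $\cG_f$. Because $f$ is standard, every vertex has exactly one out-edge (its unique successor), so $\cG_f$ is a functional graph. It therefore splits into weakly connected components, each containing exactly one cycle or loop with in-trees attached and oriented toward that cycle; in particular at least one cycle or loop always exists, and distinct cycles or loops are vertex-disjoint (from a shared vertex the forward orbit would be forced, hence unique). From this I would read off the equivalence of the two formulations: since out-degree one makes the forward walk out of any $\bv$ deterministic, whenever a directed path from $\bv$ to $\bu$ exists it is automatically unique, and such a path exists for \emph{every} $\bv \neq \bu$ exactly when $\cG_f$ is a single component whose unique cycle or loop contains $\bu$. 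This is precisely the equivalence of conditions (1) and (2) already asserted in Lemma~\ref{lemma-1}.

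For the ``if'' direction I would simply invoke Lemma~\ref{lemma-1}. Under either of the equivalent conditions, $\bu$ sits on the unique cycle or loop and every other state is a descendant of $\bu$; the lemma then guarantees that the GPO Algorithm visits all of these, that is, all $2^n$ states, so the output is de Bruijn of order $n$.

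The ``only if'' direction is where Lemmas~\ref{lemma-initial} and~\ref{lemma-2} enter. Assume the GPO Algorithm on $(f,\bu)$ outputs a de Bruijn sequence of order $n$, so every one of the $2^n$ states is visited. Let $\cC$ be any cycle or loop of $\cG_f$; at least one exists by the functional-graph remark. Since all states are visited, in particular all states of $\cC$ are, and Lemma~\ref{lemma-initial} forces $\bu \in \cC$. As distinct cycles and loops are vertex-disjoint while $\bu$ is a single vertex, $\cG_f$ can contain only one cycle or loop, and $\bu$ lies on it. Lemma~\ref{lemma-2} reinforces the uniqueness from the opposite angle: were there a second cycle or loop, its states would be left unvisited, contradicting the de Bruijn property. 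This yields condition (2), hence the equivalent condition (1).

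The step I expect to require the most care is the necessity argument's conclusion that $\bu$ lies \emph{on} a cycle rather than merely in an attached in-tree. The functional-graph decomposition makes this routine once it is in place: de Bruijn-ness means every state of every cycle is visited, and Lemma~\ref{lemma-initial} converts ``all of $\cC$ visited'' into ``$\bu \in \cC$''. The genuinely delicate point is the justification that the path formulation and the cycle formulation coincide, which rests entirely on out-degree one guaranteeing uniqueness of forward walks; this is the place where one must argue carefully rather than compute.
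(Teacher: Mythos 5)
Your proposal is correct and follows essentially the same route as the paper, which proves the theorem precisely by combining Lemma~\ref{lemma-1} for sufficiency with Lemmas~\ref{lemma-initial} and~\ref{lemma-2} for necessity. The only addition is that you spell out the functional-graph decomposition of $\cG_f$ to justify the equivalence of the path and cycle formulations, a detail the paper leaves implicit but which is sound.
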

	
	\begin{example}
		Let $f(x_0,\ldots,x_{n-1})=0$. The state graph $\cG_f$ contains only one loop $(0)$, from $\0^n$ to itself. All other states are descendants of $\0^n$. Taking $\bu=\0^n$, the GPO Algorithm produces the {\tt Prefer-One} de Bruijn sequence~\cite{Fred82}. The sequence is $(0000~1111~0110~0101)$ when $n=4$. 
		
		Similarly, let $f(x_0,\ldots,x_{n-1})=1$ and $\bu=\1^n$. Then the GPO Algorithm produces the {\tt Prefer-Zero} de Bruijn sequence \cite{Martin1934}, which is the complement of the {\tt Prefer-One} sequence. 
		\QEDB
	\end{example}
	
	It is now clear that, to generate a de Bruijn sequence by the GPO Algorithm, it suffices to find a pair $(f,\bu)$ that satisfies the requirement of Theorem \ref{thm:m}. This general task is both technically important and practically interesting. To see why Theorem~\ref{thm:m} is meaningful, we observe that the generation of the {\tt Prefer-Same} sequence on initial state $\0^n$ in~\cite{Fred82} is quite involved. Theorem~\ref{thm:m} provides an easier route. Indeed, taking the feedback function $f(x_0,\ldots,x_{n-1}) = x_{n-1} + 1$, the state graph has a unique cycle $(01)$. On input $f$ and either choice of initial $n$-stage states $01010\ldots$ and $10101\ldots$, the GPO Algorithm easily produces two distinct and complementary {\tt Prefer-Same} de Bruijn sequences. Theorem~\ref{thm:m} leads to the next two corollaries.
	
	\begin{corollary}\cite[Theorem 2]{Chang21}\label{cor-1}
		Let $n > m \geq 2$ and $h(x_0,x_1,\ldots,x_{m-1})$ be a feedback function whose FSR generates a de Bruijn sequence $\bs_m$ of order $m$. Let
		\[
		f(x_0,x_1,\ldots,x_{n-1}) := h(x_{n-m},x_{n-m+1},\ldots,x_{n-1})
		\]
		and $\bu$ be any $n$-stage state of $\bs_m$. Then the GPO Algorithm, on input $(f,\bu)$, generates a de Bruijn sequence of order $n$.
	\end{corollary}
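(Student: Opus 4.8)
The plan is to verify that the pair $(f,\bu)$ from Corollary~\ref{cor-1} satisfies the hypothesis of Theorem~\ref{thm:m}, namely that $\cG_f$ contains a unique cycle or loop and that $\bu$ lies on it. Since $f(x_0,\ldots,x_{n-1}) = h(x_{n-m},\ldots,x_{n-1})$ ignores its first $n-m$ coordinates, the successor of a state $v_0,v_1,\ldots,v_{n-1}$ in $\cG_f$ is $v_1,\ldots,v_{n-1}, h(v_{n-m},\ldots,v_{n-1})$. The key observation is that the last $m$ coordinates of any state in $\cG_f$ evolve exactly according to the FSR of $h$, whose state graph $\cG_h$ (for a de Bruijn-generating $h$) is a single Hamiltonian cycle $\cC_m$ of length $2^m$ visiting every $m$-stage state once. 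So there is a natural projection $\pi$ from $n$-stage states to $m$-stage states, recording the last $m$ coordinates, and this projection is a graph homomorphism from $\cG_f$ onto $\cG_h$.

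First I would describe the cycles of $\cG_f$ explicitly. Following an $n$-stage state along its successors, the projected $m$-stage tail cycles through $\cC_m$ with period $2^m$. A state $\bv$ lies on a cycle of $\cG_f$ precisely when iterating the successor map returns to $\bv$; because the first $n-m$ coordinates are merely shifted-in copies of earlier tail values, after $2^m$ steps the entire $n$-stage state must coincide with its start. Concretely, the states lying on cycles of $\cG_f$ are exactly the $n$-stage states that are windows of the periodic de Bruijn sequence $\bs_m$, i.e.\ the $2^m$ consecutive $n$-tuples read off from $\bs_m$ viewed cyclically. These form a single cycle of length $2^m$ in $\cG_f$, and every $n$-stage state of $\bs_m$ is one of them, so $\bu$ is on this cycle. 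I would then argue there is no other cycle: any cycle in $\cG_f$ projects under $\pi$ onto a closed walk in $\cG_h$, and since $\cG_h$ is a single simple cycle, that closed walk must traverse all of $\cC_m$, forcing the lifted cycle to have length exactly $2^m$ and to consist of windows of $\bs_m$; uniqueness then follows because $\bs_m$ is a fixed sequence.

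Having established that $\cG_f$ has a unique cycle (of length $2^m < 2^n$) containing $\bu$, Theorem~\ref{thm:m} applies directly and the GPO Algorithm on $(f,\bu)$ outputs a de Bruijn sequence of order $n$. The main obstacle I anticipate is the careful bookkeeping in the uniqueness argument: I must rule out spurious shorter cycles and confirm that no $n$-stage state outside the windows of $\bs_m$ can sit on a cycle. The cleanest way to settle this is to show every state of $\cG_f$ is a descendant (in the in-tree sense) of some window of $\bs_m$ by following its successor chain until the tail synchronizes with $\cC_m$, after which it is trapped on the unique cycle; this simultaneously proves condition (1) of Lemma~\ref{lemma-1}, that there is a unique directed path from any $\bv \neq \bu$ to $\bu$. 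Alternatively, one can simply invoke Theorem~\ref{thm:m} once the single-cycle structure is verified, which is the shorter route I would take in the write-up.
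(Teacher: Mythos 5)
Your proposal is correct and follows essentially the same route as the paper: the paper's proof is the one-line observation that the unique cycle in $\cG_f$ consists of the $n$-stage states of $\bs_m$ and that $\bu$ lies on it, after which Theorem~\ref{thm:m} applies. You simply supply the (correct) details of why the windows of $\bs_m$ form the unique cycle, which the paper leaves implicit.
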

	
	\begin{proof}
		The unique cycle in $\cG_f$ is $\bs_m$ and the initial state $\bu$ is in this cycle. \qed
	\end{proof}
	
	\begin{example}
		An order $4$ de Bruijn sequence $(0000~1001~1010~1111)$ is produced by the FSR with 
		$g(x_0,x_1,x_2,x_3) := 1+ x_0 + x_2 + x_3 + x_1 \cdot x_2 + x_1 \cdot x_3 + x_2 \cdot x_3 + x_1 \cdot x_2 \cdot x_3$. 
		Letting $n=6$ implies 
		$f(x_0,x_1,x_2,x_3,x_4,x_5) = 1+ x_2 + x_4 + x_5 + x_3 \cdot x_4 + x_3 \cdot x_5 + x_4 \cdot x_5 + x_3 \cdot x_4 \cdot x_5$. 
		Adding $\bb=000010$ to the input, the GPO Algorithm yields the de Bruijn sequence
		\[
		(00001010~00111011~00101101~11001111~11010010~00000110~00100110~10101111). ~\QEDB
		\]
	\end{example}
	
	\begin{corollary}\label{cor-2}
		Suppose that a standard $f_1(x_0,x_1,\ldots,x_{m-1})$ can be used in the GPO Algorithm to generate de Bruijn sequences of order $m$. Let $n > m$ and
		\[
		f_2(x_0,x_1,\ldots,x_{n-1}) := f_1(x_{n-m},x_{n-m+1},\ldots,x_{n-1}).
		\]
		Then $f_2$ can be used in the GPO Algorithm to generate de Bruijn sequences of order $n$.
	\end{corollary}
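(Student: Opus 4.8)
The plan is to reduce the claim to a statement about the number of cycles and loops in the two state graphs, using Theorem~\ref{thm:m}, and then to match the cycles of $\cG_{f_1}$ with those of $\cG_{f_2}$. First I would verify that $f_2$ is itself standard, so that Theorem~\ref{thm:m} applies to it. As $f_1$ is standard it does not depend on its first argument, so $f_2(x_0,\ldots,x_{n-1}) = f_1(x_{n-m},\ldots,x_{n-1})$ in fact depends only on $x_{n-m+1},\ldots,x_{n-1}$; since $n > m$ gives $n-m+1 \geq 2$, the variable $x_0$ does not occur and $f_2$ is standard. By Theorem~\ref{thm:m} together with Lemma~\ref{lemma-2}, the hypothesis that $f_1$ can be used in the GPO Algorithm to produce de Bruijn sequences of order $m$ is equivalent to $\cG_{f_1}$ possessing a \emph{unique} cycle or loop, and the conclusion sought for $f_2$ is equivalent to $\cG_{f_2}$ possessing a unique cycle or loop. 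It therefore suffices to show that $\cG_{f_1}$ and $\cG_{f_2}$ contain the same number of cycles and loops.

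The observation driving the argument is that the two feedback recurrences define the same periodic sequences. A periodic sequence $(v_k)$ satisfies the order-$n$ recurrence $v_{k+n} = f_2(v_k,\ldots,v_{k+n-1})$ for all $k$ if and only if, after the index shift $j := k+n-m$, it satisfies the order-$m$ recurrence $v_{j+m} = f_1(v_j,\ldots,v_{j+m-1})$ for all $j$. Hence the periodic sequences traced out by the cycles and loops of $\cG_{f_1}$ coincide with those traced out by the cycles and loops of $\cG_{f_2}$. Since $f_1$ and $f_2$ are standard, each state graph is a functional graph in which every cycle or loop is determined by, and determines, one such periodic sequence. I would then send a cycle or loop of $\cG_{f_1}$ to the cycle or loop of $\cG_{f_2}$ sharing the same underlying periodic sequence, and conversely.

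What remains, and what I expect to be the crux, is to confirm that this assignment is a genuine bijection, \ie, that passing from order $m$ to order $n$ neither splits nor merges cycles; here the hypothesis $n > m$ is essential. If two $n$-stage states within one period of a sequence agree, then so do their final $m$ entries, hence the associated $m$-stage states agree; the converse recovery is immediate. Thus a period of the sequence has all its $n$-stage states distinct exactly when it has all its $m$-stage states distinct, so the minimal period is the same in $\cG_{f_1}$ and in $\cG_{f_2}$, and cyclically inequivalent periodic sequences yield distinct cycles in each graph. This makes the assignment count-preserving, loops included, a loop being a cycle of period one and corresponding to $\0^n$ or $\1^n$ via $f_2(\0^n)=f_1(\0^m)$ and $f_2(\1^n)=f_1(\1^m)$. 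Consequently $\cG_{f_2}$ has a unique cycle or loop precisely when $\cG_{f_1}$ does; invoking Theorem~\ref{thm:m} with any initial state $\bu$ drawn from that unique cycle or loop of $\cG_{f_2}$ then finishes the proof. I would also note that Corollary~\ref{cor-1} is the special case where the single cycle of $\cG_{f_1}$ is the full order-$m$ de Bruijn cycle.
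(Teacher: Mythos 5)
Your proposal is correct and follows the same route as the paper's proof: reduce both hypothesis and conclusion via Theorem~\ref{thm:m} to the uniqueness of a cycle or loop in the respective state graphs, and transfer that uniqueness from $\cG_{f_1}$ to $\cG_{f_2}$. The paper simply asserts the transfer step in one sentence, whereas you justify it by matching cycles with the periodic sequences satisfying the (identical, after an index shift) recurrences; this is a welcome filling-in of detail rather than a different argument.
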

	
	\begin{proof}
		Theorem~\ref{thm:m} implies the existence of a unique cycle or loop in $\cG_{f_1}$. This, in turn, implies the existence of a unique cycle or loop in $\cG_{f_2}$. Let $\bu$ be any $n$-stage state in the cycle or loop in $\cG_{f_2}$. The conclusion follows by applying Theorem \ref{thm:m} to the input pair $(f_2,\bu)$. \qed
	\end{proof}
	
	\begin{remark}
		If $f_1$ is not standard, then Corollary \ref{cor-2} fails to hold in general. For a counterexample, let $f_1(x_0,\ldots,x_{m-1})=\prod_{i=0}^{m-1}x_i$. There are two loops, namely $(0)$ and $(1)$ in $\cG_{f_1}$. The GPO Algorithm on the input pair $(f_1,\0^n)$ produces the {\tt Prefer-One} de Bruijn sequence of order $m$. Let $n>m$ and $$f_2(x_0,x_1,\ldots,x_{n-1}) := f_1(x_{n-m},x_{n-m+1},\ldots,x_{n-1}) = \prod_{i=n-m}^{n-1}x_i.$$ Since the GPO Algorithm, on input $(f_2,\0^n)$, cannot visit $\1^n$, the resulting sequence is not de Bruijn. 
	\end{remark}
	
	We have now achieved our first objective of providing a thorough treatment on when the GPO Algorithm generates de Bruijn sequences. As a consequence, we see that all prior ad hoc greedy algorithms, including those from preference functions and their respective generalizations, are special cases of the GPO Algorithm once the respective feedback function and initial state pairs are suitably chosen. The three families of sequences treated in~\cite[Section 3]{Chang21} and those produced from the input pairs listed in~\cite[Table 2]{Chang21} are examples of de Bruijn sequences whose construction fall into our general framework of {\tt Prefer-Opposite}. Here we offer another class of examples consisting of feedback functions that satisfy a simple number theoretic constraint on the indices of their variables.
	
	We look into the feedback function of the form
	\begin{equation}\label{equ1}
		f(x_0,x_1,\ldots,x_{n-1})= \prod_{j=1}^{n-1} x_j + x_k \cdot x_{\ell} \mbox{ for } 0 < k < \ell <n
	\end{equation}
	and characterize pairs $(k,\ell)$ such that the connected graph $\cG_f$ contains a unique loop $(0)$.
	
	\begin{proposition}\label{prop:int}
		There is a unique cycle, which is the loop $(0)$, in the state graph $\cG_f$ of the feedback function given in Equation (\ref{equ1}) if and only if $\gcd(n-k,\ell-k) = 1$.
	\end{proposition}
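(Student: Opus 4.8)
The plan is to recast the loop-counting problem as a question about the periodic solutions of the feedback recurrence and then exploit a translation-invariance property of their supports. Writing $p = m+n$, the recurrence defining a cycle reads
\[
s_p = \prod_{i=1}^{n-1} s_{p-i} + s_{p-a}\, s_{p-b}, \qquad a := n-\ell,\ b := n-k,
\]
so that $1 \le a < b \le n-1$ and, crucially, $\gcd(a,b) = \gcd(n-\ell,n-k) = \gcd(n-k,\ell-k)$. Since $f$ is standard, every cycle of $\cG_f$ is a periodic sequence $\bs$ of some period $N$, and I would encode each by its support $S := \{p \in \Z/N\Z : s_p = 1\}$. The loop $(0)$ is $S=\emptyset$; I first note that $\1^n$ is never a loop because $f(\1^n)=1+1=0$, so the all-ones support $S=\Z/N\Z$ is excluded at the outset.

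The key step, and the one that tames the nonlinear product term, is the observation that the implication ``$p \in S \Rightarrow p-a \in S$ and $p-b \in S$'' holds \emph{unconditionally}. Indeed, if $s_p=1$ but $s_{p-a}s_{p-b}=0$, the recurrence forces $\prod_{i=1}^{n-1} s_{p-i}=1$, i.e. $s_{p-1}=\cdots=s_{p-(n-1)}=1$; since $1\le a<b\le n-1$ this already gives $s_{p-a}=s_{p-b}=1$, a contradiction. Thus the product term can only create extra zeros, never extra ones, so it cannot disturb this forward implication. By finiteness of $\Z/N\Z$, the injections $p\mapsto p-a$ and $p\mapsto p-b$ restrict to bijections of $S$, whence $S=S+a=S+b$: the support is invariant under translation by both $a$ and $b$, hence under the subgroup $\langle a,b\rangle = \gcd(a,b,N)\,\Z/N\Z$.

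From here both directions fall out. If $\gcd(a,b)=1$ then $\gcd(a,b,N)=1$, so $\langle a,b\rangle=\Z/N\Z$ acts transitively and the only invariant sets are $\emptyset$ and $\Z/N\Z$; the latter is already excluded, so $S=\emptyset$ and the loop $(0)$ is the unique cycle. For the converse, when $g:=\gcd(a,b)>1$ I would exhibit an explicit second cycle: the period-$g$ sequence with $s_p=1$ iff $p\equiv 0 \pmod g$. Here $s_{p-a}=s_{p-b}=s_p$ because $g\mid a,b$, while no $n-1$ consecutive positions can all lie in $S$ (consecutive integers have distinct residues mod $g\ge 2$, and $n\ge 3$), so $\prod_{i=1}^{n-1}s_{p-i}=0$ and the recurrence reduces to $s_p=0+s_p$. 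This yields a length-$g$ cycle of $\cG_f$ disjoint from $\0^n$, so $(0)$ is not unique.

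I expect the main obstacle to be precisely the nonlinear term $\prod_{j=1}^{n-1}x_j$: a direct analysis of $s_p=s_{p-a}s_{p-b}$ would otherwise require ruling out long runs of ones case by case. The plan sidesteps this by noting that invariance of the support needs only the forward implication, which the product term provably cannot break; the term then re-enters solely to discard the all-ones support and to certify the explicit cycle in the converse.
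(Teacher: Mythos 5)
Your proof is correct. The paper arrives at the same conclusion by a different packaging: it first passes from $f$ through a chain of auxiliary functions $f_1,f_2,f_3$ (whose cycle-structure equivalences are asserted as ``clear'' rather than proved) down to the $(n-k)$-stage purely multiplicative rule $f_3=x_0\cdot x_{\ell-k}$, and only then runs essentially your translation-invariance argument on the support of a putative extra cycle; its converse uses the same explicit short-period cycle $(1\0^{t-1}\,\cdots\,1\0^{t-1})$ that you exhibit. What you do differently is stay at the level of the original $n$-stage recurrence and neutralize the nonlinear term $\prod_{j=1}^{n-1}x_j$ with one observation --- that this term can never break the implication $s_p=1\Rightarrow s_{p-a}=s_{p-b}=1$, because the product being $1$ would itself force $s_{p-a}=s_{p-b}=1$ --- after which the support is invariant under $\langle a,b\rangle=\gcd(a,b,N)\,\Z/N\Z$ and the all-ones support is killed by $f(\1^n)=0$. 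This buys a self-contained argument that does not rest on the unproved equivalences between $\cG_f$, $\cG_{f_1}$, $\cG_{f_2}$, $\cG_{f_3}$; what the paper's reduction buys is a cleaner recurrence $s_p=s_{p-(n-k)}\,s_{p-(n-\ell)}$ on which the divisibility of the period by both $n-k$ and $n-\ell$ is immediate. Your group-theoretic phrasing also handles cleanly the step the paper glosses over, namely why invariance under the two shifts forces invariance under their gcd modulo the period.
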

	\begin{proof}
		We define the following three feedback functions.
		\begin{align*}
			f_1(x_0,x_1,\ldots,x_{n-2}) &:= x_0 \cdot x_1 \cdots  x_{n-2} + x_{k-1} \cdot x_{\ell-1},\\
			f_2(x_0,x_1,\ldots,x_{n-2}) &:= x_{k-1} \cdot x_{\ell-1} \mbox{, and}\\
			f_3(x_0,x_1,\ldots,x_{n-k-1}) &:= x_{0} \cdot x_{\ell-k}.
		\end{align*}
		Vertices in $\cG_{f_1}$ and $\cG_{f_2}$ are $(n-1)$-stage states {whereas} those in $\cG_{f_3}$ are $(n-k)$-stage states. It is clear that the following {statements are equivalent}. 
		\begin{enumerate}
			\item The only cycle in $\cG_{f}$ is the loop $(0)$.
			\item The state graph $\cG_{f_1}$ has a unique cycle $(0)$.
			\item The state graph $\cG_{f_2}$ has exactly two cycles $(0)$ and $(1)$.
			\item The state graph $\cG_{f_3}$ has exactly two cycles $(0)$ and $(1)$.
		\end{enumerate}
		Hence, it suffices to establish that $\cG_{f_3}$ has exactly the two cycles $(0)$ and $(1)$ if and only if $\gcd(n-k,\ell-k)=1$.
		
		It is immediate to verify that the two loops $(0)$ and $(1)$ are in $\cG_{f_3}$. We show that there exists another cycle in it if and only if $\gcd(n-k,\ell-k)>1$.
		
		The output sequence ${\bf s}=s_0,s_1,s_2,\ldots$ of the $(n-k)$-stage FSR with feedback function $f_3$ has $s_{i+n-k}=1$ if and only if $s_i=s_{i+\ell-k}=1$, for any $i \geq 0$. Suppose that there is another cycle $C=(a_0,a_1,\ldots,a_{N-1})$ in $\cG_{f_3}$ and $N>1$ is its least period.
		If $a_i=1$ for some integer $i$, then we have
		\[
		1=a_i=a_{i-(n-\ell)}=a_{i-(n-k)}=a_{i-2(n-\ell)}=a_{i-2(n-k)}=\cdots.
		\]
		If $a_i=1$, then we can deduce inductively that
		\begin{equation}\label{eq:aa}
			1=a_i=a_{i-t(n-\ell)}=a_{i-t(n-k)} \mbox{ for all } 
			t \in \{0,1,2,\ldots\}.
		\end{equation}
		It is immediate to confirm that if $a_i=0$, then $a_{i-t(n-\ell)}=a_{i-t(n-k)}=0$, since $a_{i-t(n-\ell)}=1$ or $a_{i-t(n-k)}=1$ implies $a_i=1$ by Equation (\ref{eq:aa}).
		So the cycle $C$ must satisfy
		\[
		a_i=a_{i-t(n-\ell)}=a_{i-t(n-k)}   \mbox{ for all } t = 0,1,2,\ldots,
		\]
		which means that the period $N$ divides $n-k$ and $n-\ell$ simultaneously. Thus
		\[
		\gcd(n-k,n-\ell)=\gcd(n-k,\ell-k) > 1.
		\]
		
		Conversely, if $\gcd(n-k,\ell-k) = t > 1$, then $\cG_{f_3}$ contains the cycle
		\[
		(1\0^{t-1}\, 1\0^{t-1}\, \ldots \, 1\0^{t-1}),
		\]
		establishing the existence of at least another cycle when $\gcd(n-k,\ell-k) > 1$. \qed
	\end{proof}

	Whenever $\gcd(n-k,\ell-k)=1$, the GPO Algorithm can use the feedback function $f$ in Equation (\ref{equ1}) as an input. Otherwise, the function can be used as an input in the GJPO Algorithm, to be introduced later. A combinatorial identity connects $\gcd(n-k,\ell-k)$ with the number of cycles in $\cG_f$. The number of cycles is the number of cyclic decomposition of $\gcd(n-k,\ell-k)$. For example, when $n=11$ with $(k,\ell) = (1,6)$, we have $\gcd(10,5)=5$, so there are $7$ cycles. The details can be found, \eg, in Sequence A008965 in OEIS~\cite{OEIS65}.
	
	The next result generalizes Proposition~\ref{prop:int}. The proof follows similarly and, thus, is omitted.
	
	\begin{theorem}
		The state graph of the FSR with feedback function
		\begin{equation}\label{eq:genprod}
			f(x_0,x_1,\ldots,x_{n-1}) = \prod_{j=1}^{n-1} x_j + x_{k_1} \cdot x_{k_2} \cdots x_{k_t} \mbox{ for } 0 < k_1 < k_2 < \ldots < k_t < n,
		\end{equation}
		has a unique loop $(0)$ if and only if $\gcd(n-k_1,k_2-k_1,\ldots,k_t-k_1)=1$.
	\end{theorem}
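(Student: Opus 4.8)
The plan is to follow the proof of Proposition~\ref{prop:int} verbatim in structure, replacing the two-factor term $x_k x_\ell$ by the $t$-factor term $x_{k_1}\cdots x_{k_t}$ throughout. First I would introduce the three auxiliary feedback functions
\begin{align*}
f_1(x_0,\ldots,x_{n-2}) &:= x_0 x_1 \cdots x_{n-2} + x_{k_1-1}\,x_{k_2-1}\cdots x_{k_t-1},\\
f_2(x_0,\ldots,x_{n-2}) &:= x_{k_1-1}\,x_{k_2-1}\cdots x_{k_t-1}, \text{ and}\\
f_3(x_0,\ldots,x_{n-k_1-1}) &:= x_0 \cdot x_{k_2-k_1}\cdots x_{k_t-k_1},
\end{align*}
where $\cG_{f_1}$ and $\cG_{f_2}$ act on $(n-1)$-stage states and $\cG_{f_3}$ acts on $(n-k_1)$-stage states. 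The goal is then to prove the chain of equivalences: (i) the only cycle of $\cG_f$ is the loop $(0)$; (ii) $\cG_{f_1}$ has the unique cycle $(0)$; (iii) $\cG_{f_2}$ has exactly the two cycles $(0)$ and $(1)$; (iv) $\cG_{f_3}$ has exactly the two cycles $(0)$ and $(1)$.

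Next I would justify the equivalences exactly as before. For (i)$\Leftrightarrow$(ii), since $f$ is standard its periodic outputs obey the $(n-1)$-term recursion carried by $f_1$, and because $f(\1^n)=1+1=0$ the all-ones state is a loop of neither graph. For (ii)$\Leftrightarrow$(iii), the functions $f_1$ and $f_2$ agree at every state except $\1^{n-1}$, where the extra full-product term is the only nonzero contribution: there $f_2$ produces the loop $(1)$ while $f_1$ reroutes $\1^{n-1}$ to $\1^{n-2}0$, into the tree feeding $(0)$. For (iii)$\Leftrightarrow$(iv), both $f_2$ and $f_3$ encode the single recursion $s_m = s_{m-(n-k_1)}\,s_{m-(n-k_2)}\cdots s_{m-(n-k_t)}$, differing only in the length of the state window, so that their cycles are in bijection.

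It then suffices to show that $\cG_{f_3}$ has precisely the cycles $(0)$ and $(1)$ if and only if $d:=\gcd(n-k_1,k_2-k_1,\ldots,k_t-k_1)=1$. Both loops are immediate from $f_3(\0)=0$ and $f_3(\1)=1$. For the forward implication I would suppose a further cycle $(a_0,\ldots,a_{N-1})$ of least period $N>1$ and invoke the relation $s_{i+(n-k_1)} = \prod_{j=1}^{t} s_{i+(k_j-k_1)}$, which forces $a_m=1$ to imply $a_{m-(n-k_j)}=1$ for every $j\in\{1,\ldots,t\}$. Iterating this and taking the contrapositive for $a_m=0$ shows the sequence is invariant under each shift by $n-k_j$, so $N$ divides $\gcd(n-k_1,\ldots,n-k_t)=d$; if $d=1$ then $N=1$, a contradiction. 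Conversely, when $d>1$ the period-$d$ string $(1\0^{d-1}\,1\0^{d-1}\,\ldots)$ satisfies the recursion, since $d$ divides each $n-k_j$ and each $k_j-k_1$, so $s_m$ and every factor $s_{i+(k_j-k_1)}$ coincide with $a_{i \bmod d}$; this exhibits a genuine third cycle.

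The main obstacle, as in the two-factor case, is the bookkeeping in the chain (i)--(iv): one must verify that reducing the state-window length from $n$ down to $n-k_1$ neither creates nor destroys short cycles, and that breaking the all-ones loop when passing from $f_2$ to $f_1$ leaves $(0)$ as the sole surviving cycle rather than spawning a new one through $\1^{n-1}$. Once these local-modification arguments are in place, the divisibility computation on $\cG_{f_3}$ is a routine generalization, with the single $\gcd(n-k,\ell-k)$ of Proposition~\ref{prop:int} replaced by the multi-term $d$.
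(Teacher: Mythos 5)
The paper omits the proof of this theorem, stating only that it ``follows similarly'' to Proposition~\ref{prop:int}, and your proposal is exactly that intended generalization: the same chain of auxiliary functions $f_1,f_2,f_3$, the same reduction to counting cycles of $\cG_{f_3}$, and the same divisibility argument with $\gcd(n-k,\ell-k)$ replaced by $\gcd(n-k_1,k_2-k_1,\ldots,k_t-k_1)$. Your argument is correct and matches the paper's approach.
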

	
	\section{Graph Joining and the GPO Algorithm}\label{sec:GJ}
	
	A typical standard feedback function $f(x_0,\ldots,x_{n-1})$ often fails to satisfy the condition in Theorem \ref{thm:m}, \ie, its $\cG_f$ tends to decompose into $t \geq 2$ {\it components}, each with a unique cycle or loop. Recall that the components of any non-null graph in its decomposition are non-null detached subgraphs, no two of them share any edge or vertex in common. The graph is then said to be a {\it union} of its components. When $\cG_f$ has $t \geq 2$ components, the sequence produced by the GPO Algorithm is not de Bruijn. This section introduces a new method of joining the components to turn the outputs into de Bruijn sequences.
	
	Suppose that, for a given standard $f$, its $\cG_f$ has components $G_1,G_2,\ldots,G_t$. Let $C_i$ be the unique cycle or loop in $G_i$ for each $1 \leq i \leq t$. Without loss of generality, let us take any $n$-stage state $\bu$ in $C_1$ as the initial state to run the GPO Algorithm. Lemmas~\ref{lemma-initial} and ~\ref{lemma-2} tell us that, as the sequence is being generated, the algorithm visits all of the states in $G_1$ but none of the states in the cycles or loops $C_2,\ldots,C_t$. We now modify the algorithm so that it can continue to cover all of the remaining states in $\cG_f$ before revisiting $\bu$.
	
	\begin{lemma}\label{lemma-m}
		For a given standard $f$, let $G_1,G_2,\ldots,G_t$ be the components in $\cG_f$. Let $C_i$ be the unique cycle or loop in $G_i$, for each $i$. Let $(f,\bu)$ be the input of the GPO Algorithm where $\bu$ is a state in $C_1$. Let $\bw$ in $C_j$ be an $n$-stage state such that its companion state $\widetilde{\bw}$ is a leaf in $G_1$, for some $j \in \{2,3,\ldots,t\}$. Let $\bv$ be the child of $\bw$ that does not belong to $C_j$. Then the GPO Algorithm can be modified to visit all of the states in $G_1$ and in $G_j$ if we assign $\bw$ to be the successor of $\bv$.
	\end{lemma}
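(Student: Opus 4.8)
The plan is to run the modified algorithm alongside the ordinary GPO run on $(f,\bu)$ and show that their only difference is a single detour that sweeps all of $G_j$ and then rejoins the original trajectory at $\widetilde{\bw}$. The first thing I would record is the companion bookkeeping for a standard $f$: every state is paired with its companion, exactly one member of the pair being a two-child vertex and the other a leaf, and the state the algorithm \emph{prefers} out of any vertex is the companion of the natural successor, hence always a leaf. Here $\widetilde{\bw}$ is precisely the leaf companion of the two-child state $\bw\in C_j$, and the two children of $\bw$ are a conjugate pair: the one in $C_j$, call it $\bv'$, and the tree state $\bv\in G_j\setminus C_j$. Since $\bu\in C_1$ the ordinary run covers all of $G_1$, so the leaf $\widetilde{\bw}\in G_1$ is visited; being a leaf it has no child and can only be entered by a preferred move, which must originate at a child of $\bw$, i.e. at $\bv$ or $\bv'$. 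By Lemma~\ref{lemma-2} the run never enters $C_j$, so $\bv'$ is not visited; therefore $\bv$ is visited and the ordinary run executes $\bv\to\widetilde{\bw}$.

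Next I would start the modified run. Up to the first visit of $\bv$ it is identical to the ordinary run, so $\bv$ is reached; the assignment then forces the natural edge $\bv\to\bw$, exactly as if $\widetilde{\bw}$ were already marked visited. The key robustness observation is that $\widetilde{\bw}$ can be eaten \emph{only} by $\bv$ or $\bv'$ (these are its sole possible companion-predecessors), so once $\bv$ is blocked, $\widetilde{\bw}$ necessarily remains unvisited until $\bv'$ is processed. Thus the exit move $\bv'\to\widetilde{\bw}$ is forced and clean, independent of the order in which the interior of $G_j$ is explored.

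From $\bw\in C_j$ I would then invoke the descendant-covering mechanism behind Lemmas~\ref{lemma1} and~\ref{lemma-1}: every vertex of $G_j$ reaches $\bw$ by a directed path, so the sweep visits all of $G_j$, in particular $\bv'$, before it can close $C_j$. When the sweep reaches $\bv'$, its preferred move is the still-unvisited $\widetilde{\bw}$, so the run performs $\bv'\to\widetilde{\bw}$ and re-enters $G_1$ at exactly the state the ordinary run occupied right after $\bv$. From $\widetilde{\bw}$ onward I would argue that the run completes $G_1$ and returns to $\bu$, so the total output covers precisely $G_1\cup G_j$ (plus whatever tree-leaves of the remaining components the ordinary run already touched).

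The hard part will be the middle step, because each GPO decision depends on the full history of visited states and the $G_j$-sweep can itself make companion jumps into leaves of $G_1$ or of other components. I must therefore rule out that the detour stalls, revisits some vertex before $\bu$, or perturbs the subsequent decisions on $G_1$. The cleanest way to discharge this is to re-establish the no-premature-revisit invariant of Lemma~\ref{lemma-is} for a run with one forced natural edge, observing that forcing $\bv\to\bw$ only \emph{delays} the eating of the single leaf $\widetilde{\bw}$ and never creates a second entry into $\bw$ (whose other child $\bv'$ is routed to $\widetilde{\bw}$, not back to $\bw$). Equivalently, and perhaps more transparently, I would recast the modification as an ordinary GPO run on the graph obtained by \emph{joining} $G_1$ and $G_j$ through the companion pair $(\bw,\widetilde{\bw})$, check that this joined graph again carries a \emph{unique} cycle or loop containing $\bu$, and then apply Lemma~\ref{lemma-1} to obtain coverage and termination in a single stroke, sidestepping the delicate tracking of the visit history.
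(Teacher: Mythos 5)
Your proposal is correct and follows essentially the same route as the paper's proof: both hinge on the observation that the leaf $\widetilde{\bw}$ can only be entered from the two children $\bv$ and $\widehat{\bv}$ of $\bw$, that Lemma~\ref{lemma-2} rules out $\widehat{\bv}$ in the unmodified run, and that rerouting $\bv\to\bw$ therefore forces $\widehat{\bv}\in C_j$ to be visited, after which the child-covering invariant of Lemma~\ref{lemma1} propagates coverage through all of $G_j$. The residual issue you flag (termination at $\bu$ with no premature revisit for the run with one forced edge) is exactly the point the paper dispatches with ``it is easy to check,'' and your proposed repair via re-establishing the Lemma~\ref{lemma-is} invariant matches the paper's intent.
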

	
	\begin{proof}
		Prior to the additional assignment rule, starting from $\bu$ in $C_1$, the GPO Algorithm visits all states in $G_1$ but cannot cover any of the states in $C_2 \cup C_3 \cup \ldots \cup C_t$.
		
		Suppose that $C_j$ is a cycle. Note that $\bw$ has two children, namely $\bv$, which does not belong to $C_j$, and $\widehat{\bv}$, which is in $C_j$. As the algorithm runs, let $\widetilde{\bw}$ be its currently visited state. Since $\widehat{\bv} \in C_j$, the algorithm cannot reach it by Lemma \ref{lemma-2}. Hence, the actual predecessor of $\widetilde{\bw}$ in the algorithm's output must be $\bv$. The new assignment rule, however, forces $\bw$ to be the successor of $\bv$.
		It is easy to check that after adding this assignment, the algorithm also terminates when it revisits $\bu$ and any other $n$-stage state is visited at most once. To guarantee that the modified algorithm revisits $\bu$, all the states in $G_1$ must be visited, including $\widetilde{\bw}$. Its unique possible predecessor has to be $\widehat{\bv}$, because the new assignment rule forces $\bv$ to be the actual predecessor of $\bw$. So we have deduced that $\widehat{\bv}\in C_j$ must be visited by the algorithm. Applying Lemma \ref{lemma1}, repeatedly if necessary, all states in $G_j$ can be visited by the algorithm. Thus, all states in $G_1$ and $G_j$ can be visited by the modified algorithm.
		
		If $C_j$ is a loop, then $\bw$ and $\bv$ are conjugate states. Adding the rule that assigns $\bw$ to be the actual successor of $\bv$ makes $\widetilde{\bw}$ the successor of $\bw$ in the output sequence. Once $\bw$ is visited, $\bv$ must have been visited, and all states in $G_j$ can be reached by the modified algorithm. \qed
	\end{proof}
	
	The selection of states, including $\bw$, $\widetilde{\bw}$, $\bv$ and $\widehat{\bv}$, in the above proof will be illustrated in Figure~\ref{fig:one} of Example \ref{ex:gjoin}. The first state in $C_j$ that the modified algorithm visits must be $\bw$. By Lemma \ref{lemma1}, the {\it order of appearance} of the states in $C_j$ in the output sequence follows the {\it direction of their edges}. The next state visited after $\bw$ is the parent of $\bw$ in $C_j$, and so on. The last to be visited is the child $\widehat{\bv}$ of $\bw$. This fact implies that distinct valid options for $\bw$ in $C_j$ lead to shift-inequivalent output sequences when the initial state $\bu$ is fixed. Keeping everything fixed but changing the initial state to any other state in the same cycle $C_1$, however, may produce shift-equivalent output sequences. Such instances of collisions, albeit being rare, should be taken into account in the enumeration of inequivalent output sequences.
	
	The process detailed in the proof of Lemma \ref{lemma-m} explains the thinking behind our modification of the GPO Algorithm. Suppose that the modified algorithm can now cover all of the states in two subgraphs, say $G_1$ and $G_2$. We identify a state $\bx$ in $C_3$ such that $\widetilde{\bx}$ is a leaf in $G_1$ or $G_2$. If $\by$ is the child of $\bx$ that does not belong to $C_3$, then we add the rule that assign $\bx$ to be the successor of $\by$ to ensure that all states in $G_3$ can be visited. If such an assignment can be done to ``join'' $G_1,G_2,\ldots,G_t$ together, then the resulting sequence of the modified algorithm is de Bruijn of order $n$.
	
	We outline the process of graph joining, from a given state graph of a standard feedback function, in the following steps.
	\begin{labeling}{\bf Generate}
		\item[{\bf Step} $1$] Choose an arbitrary component, say $G_1$. Set $\Omega \gets \{G_1\}$.
		\item[{\bf Step} $2$] Find a component, say $G_2$, such that $\bw_2$ is a state in $C_2$, $\widetilde{\bw_2}$ is a leaf of $G_1$, and $\bv_2$ is a child of $\bw_2$ which is not in $C_2$. Set $\Omega \gets \Omega\cup\{G_2\}$.
		\item [{\bf Step} $3$] Continue inductively by finding a $G_i$ such that $\bw_i$ is a state in $C_i$, $\widetilde{\bw_i}$ is a leaf of some subgraph in $\Omega$, and $\bv_i$ is a child of $\bw_i$ which is not in $C_i$. Set $\Omega \gets \Omega\cup\{G_i\}$.
		\item [{\bf Step} $4$] End if $\Omega$ contains all of the component graphs. Otherwise declare failure since the output will not be de Bruijn.
		\item [{\bf Generate}] Run the GPO Algorithm on an arbitrary state $\bu$ in $C_1$ as the initial state. Upon reaching $\bv_i$ for any $i=2,\ldots,t$ as the current state, the usual procedure in the algorithm is modified such that the next state is assigned to be $\bw_i$. In all other occasions, comply with the rules of the algorithm to generate the next state. Output a de Bruijn sequence.
	\end{labeling}
	
	\begin{example}\label{ex:gjoin}
		Let $f(x_0,x_1,x_2,x_3)=x_1 + x_2 \cdot x_3$. Then $\cG_f$ is divided into three components, namely, $G_1$ with $C_1=(1110)$, $G_2$ with $C_2=(010)$, and $G_3$ with $C_3=(0)$. Figure~\ref{fig:one} shows $\cG_f$ with a configuration that allows for the merging of the three components using the added assignments.
		
		\begin{figure*}[h!]
			\centering
			\begin{tikzpicture}
				[
				> = stealth,
				shorten > = 3pt,
				auto,
				node distance = 2.45cm,
				semithick
				]
				
				\tikzstyle{every state}=
				\node[rectangle,fill=white,draw,rounded corners,minimum size = 4mm]
				\node[state,fill=lightgray] (1) {$1101\,{\circled{8}}$};
				\node[state,fill=lightgray] (2) [right of=1] {$1011\,{\circled{14}}$};
				\node[state,fill=lightgray] (3) [below of=2] {$0111\,{\circled{15}}$};
				\node[state,fill=lightgray] (4) [below of=1] {$1110=\bu\,{\circled{1}}$};
				
				\node[state] (5) [above of=1] {$0110\,{\circled{7}}$};
				\node[state] (6) [below of=4] {$1111\,{\circled{16}}$};
				\node[state] (7) [right of=3] {$0011\,{\circled{6}}$};
				\node[state,fill=blue!20] (8) [above of=2] {$0101=\widetilde{\bw}\,{\circled{13}}$};
				\node[state,fill=blue!20] (9) [right of=2] {$0100=\bw\,{\circled{10}}$};
				\node[state,fill=lightgray] (10) [right of=9] {$1001\,{\circled{11}}$};
				\node[state,fill=lightgray] (11) [below of=10] {$0010=\widehat{\bv}\,{\circled{12}}$};
				\node[state] (12) [above of=9] {$1010=\bv\,{\circled{9}}$};
				\node[state] (13) [above of=10] {$1100\,{\circled{2}}$};
				\node[state,fill=blue!20] (14) [right of=11] {$0001=\widetilde{\bx}\,{\circled{5}}$};
				\node[state] (15) [right of=13] {$1000=\by\,{\circled{3}}$};
				\node[state,fill=blue!20] (16) [below of=15] {$0000=\bx\,{\circled{4}}$};
				
				\path[->] (1) edge (2);
				\path[->] (2) edge (3);
				\path[->] (3) edge (4);
				\path[->] (4) edge (1);
				
				\path[->] (5) edge (1);
				\path[->] (8) edge (2);
				\path[->] (7) edge (3);
				\path[->] (6) edge (4);
				
				\path[->] (9) edge (10);
				\path[->] (10) edge (11);
				\path[->] (11) edge (9);
				
				\path[->] (12) edge (9);
				\path[->] (13) edge (10);
				\path[->] (14) edge (11);
				
				\path[->] (15) edge (16);
				\path[->] (16) edge[loop below] (16);
			\end{tikzpicture}
			\caption{The state graph $\cG_{f}$ for $f(x_0,x_1,\ldots,x_3) = x_1+ x_2 \cdot x_3$. The components $G_1,G_2,G_3$ are from left to right. The encircled label indicates the state's order of appearance in the generated de Bruijn sequence when $\bu$ is the initial state.}
			\label{fig:one}
		\end{figure*}
		
		Taking as $\bu$ the respective four states in $C_1$, namely $1110,1101,~1011,~0111$, we obtain three inequivalent de Bruijn sequences. On $\bu=1110$, the states are visited in the following order. We use $\Rightarrow$ to indicate that the state transition is governed by an additional assignment rule.
		\begin{align*}
			\bu=& 1110 \,\rightarrow \, 1100 \, \rightarrow \, 1000 \, \Rightarrow \, 0000 \, \rightarrow \, 0001 \, \rightarrow \, 0011 \, \rightarrow \, 0110 \,\rightarrow \, 1101 \, \rightarrow\\
			&1010 \, \Rightarrow \, 0100 \, \rightarrow \, 1001 \, \rightarrow \, 0010
			\, \rightarrow \, 0101 \, \rightarrow \, 1011 \, \rightarrow \, 0111 \, \rightarrow \, 1111 \, \rightarrow \bu.
		\end{align*}
		The output sequence is $(1110~0001~1010~0101)$. Choosing $\bu=0111$, the output sequence $(0111~1000~0110~1001)$ is shift-equivalent to the previous one, \ie, here there is a collision. The other two inequivalent output sequences are, respectively, $(1101~0000~1100~1011)$ on $\bu=1101$ and $(1011~0000~1111~0100)$ on $\bu=1011$.
		
		Suppose that we change $\bw$ from $0100$ to $0010$, making $\widetilde{\bw}=0011$. The output sequences $(1110~0001~0100~1101)$ and $(0111~1000~0101~0011)$, on respective initial states $1110$ and $0111$, are again shift-equivalent. The other two outputs are $(1011~0000~1010~0111)$ on $\bu=1011$ and $(1101~0110~0001~0011)$ on $\bu=1101$. In total, we have generated six shift-inequivalent de Bruijn sequences of order $4$ in this example. \QEDB
	\end{example}

\begin{example}\label{ex:prefer-opp}
The original generating algorithm for the {\tt Prefer-Opposite} de Bruijn sequence in \cite{Alh10} is rather complicated. We provide a very simple alternative by using the above results, with $f(x_0,x_1,\ldots,x_{n-1}) := x_{n-1}$. There are two components in $\cG_f$, namely, $G_1$ with $C_1=(0)$ and $G_2$ with $C_2=(1)$. Directly using the GPO Algorithm with $f$ on initial state $\mathbf{0}^n$ ($\mathbf{1}^n$, respectively), the output sequence contains all $n$-stage states except for $\mathbf{1}^n$ ($\mathbf{0}^n$, respectively). When $0\mathbf{1}^{n-1}$ ($1\mathbf{0}^{n-1}$, respectively) is visited, we assign $\mathbf{1}^n$ ($\mathbf{0}^n$, respectively) to be its successor. Then the resulting sequence is de Bruijn. When we take $\mathbf{0}^n$ as the initial state, the corresponding algorithm is Algorithm \ref{algo:po}. \QEDB
\end{example}
			
\begin{algorithm}[ht!]
\caption{{\tt Prefer-Opposite}}
\label{algo:po}
\begin{algorithmic}[1]
\renewcommand{\algorithmicrequire}{\textbf{Input:}}
\renewcommand{\algorithmicensure}{\textbf{Output:}}
\Require Initial state ${\bf 0}^n$.
					\Ensure A de Bruijn sequence.
					\State{$\bc:=c_0, c_1, \ldots ,c_{n-1} \gets {\bf 0}^n$}	
					\Do
					\State{Print($c_0$)}
					\If{$\bc=0{\bf 1}^{n-1}$}
					\State{$\bc \gets {\bf 1}^n$}
					\Else
					\If{$c_1, c_2, \ldots, c_{n-1}, \overline{c_{n-1}}$ has not appeared before}
					\State{$\bc \gets c_1, c_2, \ldots, c_{n-1},  \overline{c_{n-1}}$} 
					\Else
					\State{$\bc \gets c_1, c_2, \ldots, c_{n-1}, c_{n-1} $} 
					\EndIf
					\EndIf
					\doWhile{$\bc \neq {\bf 0}^n$}
\end{algorithmic}
\end{algorithm}

	Since the underlying idea of our graph joining method (GJM) is similar to that of the cycle joining method (CJM), we briefly recall how the CJM works. The state graph of an FSR with a {\bf nonsingular} feedback function $h(x_0,x_1,\ldots,x_{n-1})$ is divided into disjoint cycles. If an $n$-stage state $\bw$ and its companion state $\widetilde{\bw}$ belong to two distinct cycles, then the cycles can be joined by exchanging the predecessors of $\bw$ and $\widetilde{\bw}$. Such pair of cycles are said to be {\it adjacent} and the pair $(\bw,\widetilde{\bw})$ is called a {\it companion pair}. The process is repeated to join the rest of the cycles by identifying suitable companion pair(s). If all of the initially disjoint cycles can be joined together, then the final cycle must be a de Bruijn sequence of order $n$.
	
	For an FSR with nonsingular feedback function $h$, its {\it adjacency graph} is an \emph{undirected} multigraph whose vertices correspond to the cycles in the corresponding state graph. There exists an edge between two vertices if and only if they share a companion pair. The number of shared companion pairs labels the edge. The number of resulting inequivalent de Bruijn sequences is the number of spanning trees in the adjacency graph.
	
	We use similar concepts to formally define our GJM. Given a standard $f$, suppose that $\cG_f$ consists of $t$ components $G_1,\ldots,G_t$. Let $(\bw,\widetilde{\bw})$ be a companion pair with $\bw$ a state in the cycle (or loop) $C_i$ of $G_i$ and $\widetilde{\bw}$ a leaf in $G_j$ for $1 \leq i \neq j \leq t$. Companion pairs play an important role in our new method because they can modify the GPO Algorithm to visit all states in $G_i$ and $G_j$ when the initial state is one of the state(s) in $C_j$. We call such $(\bw,\widetilde{\bw})_{i,j}$ pair a {\it preference companion pair} (PCP) \textbf{from} $G_i$ \textbf{to} $G_j$. Each state in the pair is the other's {\it preference companion state}. Two components $G_i$ and $G_j$ are {\it adjacent} if they share a PCP $(\bw,\widetilde{\bw})_{i,j}$. The pair joins adjacent components into one.
	
	\begin{definition}\label{def:adjgraph}
		The preference adjacency graph (PAG) for an FSR with a standard feedback function $f$ is a \emph{directed multigraph} $\mathbb{G}_f$ whose $t$ vertices correspond to the $t$ components $G_1,\ldots,G_t$ in $\cG_f$. There exists a directed edge from $G_i$ to $G_j$ if and only if there exists a PCP $(\bw,\widetilde{\bw})_{i,j}$. The number of directed edges, each identified by a PCP, from $G_i$ to $G_j$ is the number of their PCPs.
	\end{definition}
	
	By definition, $\mathbb{G}_f$ contains no loops. A {\it rooted spanning tree} in $\mathbb{G}_f$ is a rooted tree that contains all of $\mathbb{G}_f$'s vertices. We use $\Upsilon_{\mathbb{G}}$, or simply $\Upsilon$ if $\mathbb{G}_f$ is clear, to denote a rooted spanning tree in $\mathbb{G}_f$.
	By Lemma~\ref{lemma-m}, once we obtain a rooted spanning tree in $\mathbb{G}_f$, with the root being in some cycle $C_k$, then we can run the GPO Algorithm with an arbitrary state in $C_k$ as the initial state. Suppose that the following three conditions are satisfied.
	\begin{enumerate}
		\item The current state is $\bv$, which is the predecessor of $\bw$.
		\item The state $\bv$ is not in the cycle that contains $\bw$.
		\item The pair $(\bw,\widetilde{\bw})_{i,j}$ is an edge in the spanning tree.
	\end{enumerate}
	Then we assign $\bw$ to be the successor of $\bv$ to ensure that the resulting sequence is de Bruijn.
	
	We modify the GPO Algorithm by adding the GJM in the assignment rule, guided by the relevant PCPs. We call the modified algorithm the {\tt Graph Joining Prefer -Opposite} (GJPO)  Algorithm, presented here as Algorithm~\ref{algo:new}. The algorithm sets out to find all preference companion pairs, before determining all possible rooted spanning trees. Once this is done and if there exists at least one rooted spanning tree, then we can choose one such tree, either deterministically or randomly, and generate the corresponding Bruijn sequence.
	
	\begin{algorithm}[t!]
		\caption{{\tt Graph Joining Prefer-Opposite} (GJPO)}
		\label{algo:new}
		\begin{algorithmic}[1]
			\renewcommand{\algorithmicrequire}{\textbf{Input:}}
			\renewcommand{\algorithmicensure}{\textbf{Output:}}
			\Require A standard feedback function $f(x_0,x_1,\ldots,x_{n-1})$.
			\Ensure de Bruijn sequences.
			\State{Construct the state graph $\cG_f$}\label{line1}
			\State{$V_{\mathbb{G}} \gets \{G_1,G_2,\ldots,G_t\}$ and $\cC \gets \{C_1,C_2,\ldots,C_t \}$} \Comment{$V_{\mathbb{G}}$ is vertex set of $\mathbb{G}$}\label{line2}
			\For{$i$ from $1$ to $t$} \Comment{Populate the edges in $\mathbb{G}$}
			\For{every state $\bw \in C_i$}
			\If{$\widetilde{\bw}$ is a leaf in $C_j$ for $j \neq i$}
			\State{add edge in $\mathbb{G}$ from $G_i$ to $G_j$ labelled by $\bw$}
			\EndIf
			\EndFor
			\State{$K_{i,j} \gets$ the list of PCPs from $G_i$ to $G_j$ for $i \neq j$}\label{line7}
			\EndFor
			\State{Derive the simplified undirected graph $\mathbb{H}$ from $\mathbb{G}$: The vertex set of $\mathbb{H}$ is the vertex set of $\mathbb{G}$. If $K_{i,j}$ is nonempty for $1 \leq i \neq j \leq t$, then add an undirected edge $e_{i,j}$ in $\mathbb{H}$.}\label{line8}
			\State{Generate the set $\Gamma_{\mathbb{H}}$ of all spanning trees in $\mathbb{H}$} \Comment{We use~\cite[Algorithm~5]{Chang2019}} \label{line9}
			\For{each $\Upsilon \in \Gamma_{\mathbb{H}}$} \label{line10}
			\State{$E_{\Upsilon} \gets$ the set $\{e_1,e_2,\ldots,e_{t-1}\}$ of edges in $\Upsilon$}
			\For{$k$ from $1$ to $t-1$}
			\For{each of the two possible directions for $e_k$} \Comment{from $G_i$ to $G_j$ or $G_j$ to $G_i$}
			\If{the corresponding $K_{i,j}$ (respectively $K_{j,i}$) is nonempty}
			\State{choose each of its element, in sequence, as an edge in the directed tree}
			\EndIf
			\EndFor
			\EndFor
			\State{$\Sigma_{\Upsilon} \gets$ all directed trees in $\mathbb{G}$ that corresponds to $\Upsilon$}
			\For{each $\Lambda \in \Sigma_{\Upsilon}$}
			\If{$\Lambda$ is not a rooted tree}
			\State{$\Sigma_{\Upsilon} \gets \Sigma_{\Upsilon} \setminus \Lambda$}
			\EndIf
			\EndFor
			\EndFor
			\State{$\Omega \gets \bigcup_{\Upsilon \in \Gamma_{\mathbb{H}}} \Sigma_{\Upsilon}$} \Comment{The set of all rooted spanning trees in $\mathbb{G}$}\label{line20}
			\For{each $\Delta \in \Omega$}\label{line21}
			\State{$E_{\Delta} \gets$ set of states that label the directed edges in $\Delta$}
			\For{each state $\bu$ in $\cC$ in the root component $G$ of $\Delta$}
			\State{$\bc:=c_0, c_1, \ldots ,c_{n-1} \gets \bu$}	
			\Do
			\State{Print($c_0$)}
			\If{$c_1, c_2, \ldots, c_{n-1}, f(c_0,c_1,\ldots,c_{n-1}) \notin E_{\Delta}$}
			\If{$c_1, c_2, \ldots, c_{n-1}, \overline{f(c_0,c_1,\ldots,c_{n-1})}$ has not appeared before}
			\State{$\bc \gets c_1, c_2, \ldots, c_{n-1},  \overline{f(c_0,c_1,\ldots,c_{n-1})}$}
			\Else
			\State{$\bc \gets c_1, c_2, \ldots, c_{n-1}, f(c_0,c_1,\ldots,c_{n-1}) $}
			\EndIf
			\Else
			\State{$\bc \gets c_1, c_2, \ldots, c_{n-1}, f(c_0,c_1,\ldots,c_{n-1})$} \Comment{The added assignment}
			\State{$E_{\Delta} \gets E_{\Delta} \setminus \{\bc\}$}
			\EndIf
			\doWhile{$\bc \neq \bu$}
			\EndFor
			\EndFor
		\end{algorithmic}
	\end{algorithm}
	
	To analyze its complexity, we break the GJPO Algorithm down into five stages.
	\begin{labeling}{{\bf Stage} $5$}
		\item[{\bf Stage} $1$] Described in Line~\ref{line1}, the state graph $\cG_{f}$ is built in $O(n)$ time. In its most basic implementation, storing this graph is not efficient, requiring $O(2^n)$ space. One can reduce this memory requirement by storing, per graph components, only its cycle and a list of leaves. The saving varies, depending on $f$, and is not so easy to quantify. Checking whether a vertex state $c_1,c_2,\ldots,c_{n-1},y$, for some $y \in \F_2$, is a non-leaf is very efficient since one only needs to determine if evaluating the algebraic normal form on inputs $x,c_1,c_2,\ldots,c_{n-1}$, for both $x \in \F_2$, yields the specified state.
		
		\item[{\bf Stage} $2$] The procedure in Lines~\ref{line2} to~\ref{line7} builds the PAG $\mathbb{G}$. For each component graph, the number of operations to perform is roughly $(t-1)$ times the period of its cycle, giving the total time estimate to be $(t-1)$ times the sum of the periods of the $t$ cycles. Storing $K_{i,j}$ for $1 \leq i \neq j \leq t$ needs $O(t^3)$ space since there are $t^2-t$ entries, each having at most $\max\{|K_{i,j}|\} \leq t-1$ elements.
		
		\item [{\bf Stage} $3$] Once $\mathbb{G}$ has been determined, the procedure in Line~\ref{line8} derives a simpler graph $\mathbb{H}$, taking $O(t^2)$ in both time and storage, to use as an input in an intermediate step to eventually identify all rooted spanning trees in $\mathbb{G}$. In the simpler graph, the edge direction is removed and multiple edges are collapsed into a single edge. This is done since generating all spanning trees in a graph is quite costly. The time requirement, using~\cite[Algorithm~5]{Chang2019}, is in the order of the number of spanning trees in the input graph. The simpler $\mathbb{H}$ is faster to work on. 
		Line~\ref{line9} continues the process by generating all spanning trees in $\mathbb{H}$.
		
		\item [{\bf Stage} $4$] The steps given in Lines~\ref{line10} to~\ref{line20} combine $\Gamma_{\mathbb{H}}$ and the lists of PCPs stored in $K_{i,j}$, for all applicable $i$ and $j$, to generate a set $\Omega$ of all rooted spanning trees in the PAG graph $\mathbb{G}$. The time complexity to generate $\Omega$ is $O(|\Gamma_{\mathbb{H}}|) \cdot 2^{t-1} \cdot t^3$. Storing it takes $O(|\Gamma_{\mathbb{H}}|) \cdot t$ space.
		
		\item [{\bf Stage} $5$] The routine described in Line~\ref{line21} onward generates an actual de Bruijn sequence per identified rooted spanning tree. This process is negligible in resource requirements.
	\end{labeling}
	Our presentation and analysis above are geared toward completeness, not practicality. \textit{Generating the entire set $\Omega$ is often neither necessary nor desirable}. In actual deployment, computational routines to identify only a required number of rooted spanning trees can be carefully designed for speed and space savings.
	
	\begin{theorem}\label{thm3}
		Given a standard feedback function $f(x_0,\ldots,x_{n-1})$, Algorithm~\ref{algo:new} outputs de Bruijn sequences of order $n$ if and only if the set $\Gamma_{\mathbb{H}}$ in Line~\ref{line9} is nonempty.
	\end{theorem}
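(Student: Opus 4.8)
The plan is to prove the two implications separately, after first reducing the statement about the algorithm's output to a purely graph-theoretic statement about the preference adjacency graph $\mathbb{G}$. Observe that Algorithm~\ref{algo:new} prints a sequence exactly once for each rooted spanning tree collected in $\Omega$, and, by Lemma~\ref{lemma-m}, each such rooted spanning tree really does yield a de Bruijn sequence: processing the components so that each non-root component is attached only after the component its tree-edge points to has already been covered, every edge $(\bw,\widetilde{\bw})_{i,j}$ lets us splice the yet-unvisited component $G_i$ (which contains the cycle state $\bw$) into the already-covered union of components (which contains the leaf $\widetilde{\bw}$) via the added assignment rule, and the single-visit and termination guarantees of the unmodified GPO Algorithm are preserved throughout. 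Hence ``Algorithm~\ref{algo:new} outputs de Bruijn sequences'' is equivalent to $\Omega\neq\emptyset$, i.e. to the existence of a rooted spanning tree in $\mathbb{G}$ (an in-tree, since the joining orientation forced by Lemma~\ref{lemma-m} makes every tree edge point from the newly attached component toward the root). The theorem thus becomes: $\mathbb{G}$ admits a spanning in-tree if and only if its underlying undirected graph $\mathbb{H}$ is connected.

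The forward implication is immediate. If $\Omega\neq\emptyset$, take any rooted spanning tree $\Delta$ of $\mathbb{G}$ and forget edge directions and multiplicities; the result spans all $t$ vertices and is a tree inside $\mathbb{H}$, so $\mathbb{H}$ is connected and $\Gamma_{\mathbb{H}}\neq\emptyset$. This is exactly the projection that the algorithm inverts when it reconstructs $\Omega$ from $\Gamma_{\mathbb{H}}$ in Stage~$4$.

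For the reverse implication I would argue through the condensation of $\mathbb{G}$ into strongly connected components. A spanning in-tree rooted somewhere exists if and only if some vertex is reachable from every other, equivalently the condensation (a DAG) has a unique sink, equivalently $\mathbb{G}$ has a unique minimal \emph{closed} set of components, where a set $T$ is closed when no edge of $\mathbb{G}$ leaves it. Unwinding Definition~\ref{def:adjgraph}, $T$ is closed precisely when every cycle state $\bw$ lying in a component of $T$ has its companion $\widetilde{\bw}$ again inside $T$. Two structural inputs feed the argument. First, a degree count in $\cG_f$: because $f$ is standard, every non-leaf state has exactly two children (a conjugate pair) and every leaf has none, so in each component the numbers of leaves and of non-leaf states coincide and equal half the states; in particular every component has a leaf and its cycle occupies at most half of it. Second, two sink components cannot be $\mathbb{H}$-adjacent, since an $\mathbb{H}$-edge between them would, in one of its two orientations, be a $\mathbb{G}$-edge leaving a closed set, contradicting closedness.

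The hard part, and the main obstacle, is upgrading ``no two adjacent sinks'' to ``at most one sink'' once connecting intermediate components are allowed. Here I would run a companion-involution counting argument on the state set $V_T$ of a putative closed set $T$: closedness sends the cycle states of $V_T$ injectively, via the companion map, into the leaves of $V_T$, and the residual leaves of $V_T$ (those whose companions leave $V_T$) must be balanced against the in-edges of $T$ using the identity $\sum_{G_i\in T} L_i=\tfrac12\,|V_T|$. The goal is to show that two disjoint closed sets would force the $\mathbb{H}$-path joining them to carry an edge violating the closedness of one endpoint, so the sink is unique. Once uniqueness of the sink is secured, any vertex of that sink is reachable from all others; rooting there and orienting a spanning tree of $\mathbb{H}$ toward it produces the required element of $\Omega$, which completes the equivalence.
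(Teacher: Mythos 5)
Your proposal has a genuine gap at exactly the point you flag as ``the hard part.'' You correctly reduce the theorem to the claim that $\Omega\neq\emptyset$ (existence of a rooted spanning in-tree in the directed multigraph $\mathbb{G}$) is equivalent to $\Gamma_{\mathbb{H}}\neq\emptyset$ (connectivity of the undirected $\mathbb{H}$), and the forward direction plus the ``rooted spanning tree $\Rightarrow$ de Bruijn output'' step via Lemma~\ref{lemma-m} are fine. But the reverse implication --- that connectivity of $\mathbb{H}$ forces the condensation of $\mathbb{G}$ to have a unique sink --- is never actually established: you describe a companion-involution counting strategy and state a ``goal,'' but you do not carry it out, and it is far from clear that it can be carried out. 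A priori nothing in Definition~\ref{def:adjgraph} rules out a configuration with components $G_1,G_2,G_3$ whose only PCPs give edges $G_2\to G_1$ and $G_2\to G_3$ (every cycle state of $C_1$ has its companion leaf inside $G_1$, likewise for $C_3$, while $C_2$ reaches both); there $\mathbb{H}$ is a connected path but $\mathbb{G}$ has two sinks and no rooted spanning tree, so $\Omega=\emptyset$ and the algorithm outputs nothing. Your degree count (each component is half leaves, half non-leaves, so the cycle injects into the leaf set of its own component) is consistent with such a configuration rather than excluding it, and Example~\ref{ex:nogo} already exhibits closed singleton components, so the burden of proof for uniqueness of the sink is real and unmet.

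For comparison, the paper's own proof of Theorem~\ref{thm3} is a single sentence: correctness ``follows from Lemma~\ref{lemma-m}.'' That covers only the direction you also dispatch easily (a rooted spanning tree, processed root-first with the added assignment rules, yields a de Bruijn sequence), and it silently identifies ``$\Gamma_{\mathbb{H}}$ nonempty'' with ``$\Omega$ nonempty.'' So you have located a subtlety the paper glosses over, and your condensation/unique-sink framing is the right language for it; but as written your argument proves no more than the paper's does, while appearing to promise more. Either complete the counting argument (or find the counterexample showing the hypothesis should be ``$\Omega\neq\emptyset$'' rather than ``$\Gamma_{\mathbb{H}}\neq\emptyset$''), or restrict the claim to the direction that Lemma~\ref{lemma-m} actually delivers.
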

	
	\begin{proof}
		The correctness of Algorithm~\ref{algo:new} follows from Lemma \ref{lemma-m}. \qed
	\end{proof}
	
	Suppose that, for a chosen PCP $(\bw,\widetilde{\bw})_{i,j}$ that connects $G_i$ to $G_j$, $\bw$ is the first state in $C_i$ that Algorithm~\ref{algo:new} visits. Then the following result is a direct consequence of Theorem~\ref{thm3}.
	
	\begin{corollary}
		The number of distinct rooted trees is a lower bound for the number of inequivalent de Bruijn sequences produced.
	\end{corollary}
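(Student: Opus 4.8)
My plan is to deduce the bound from an injectivity statement. Since Theorem~\ref{thm3} already guarantees that every rooted spanning tree of $\mathbb{G}_f$ yields at least one de Bruijn sequence through Algorithm~\ref{algo:new}, it suffices to show that two \emph{distinct} rooted trees $\Delta\neq\Delta'$ never produce shift-equivalent outputs. I would establish this by recovering $\Delta$ intrinsically from any sequence it generates. The object to recover is the edge set $E_{\Delta}$, which by construction is exactly the set of join targets $\bw_i$, one per non-root component $G_i$; the hypothesis stated just before the corollary, namely that each such $\bw_i$ is the first state of $C_i$ visited by the algorithm, is the crux that makes this recovery possible.

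Concretely, I would invoke Lemma~\ref{lemma1} together with the remarks following Lemma~\ref{lemma-m}: for every non-root component the algorithm enters $C_i$ exactly once, at $\bw_i$, and thereafter traverses $C_i$ in the direction of its edges, finishing at the in-cycle child $\widehat{\bv_i}$ of $\bw_i$. Hence $\bw_i$ is singled out as the unique state of $C_i$ whose immediate predecessor in the cyclic output is its other child $\bv_i\notin C_i$, the step $\bv_i\to\bw_i$ following $f$ rather than the preferred opposite. This description refers only to cyclic adjacency and to the fixed graph $\cG_f$, so it is invariant under cyclic shifts. Each recovered $\bw_i$ then names a directed edge of $\Delta$: its cycle $C_i$ fixes one endpoint $G_i$, while its companion $\widetilde{\bw_i}$, being a leaf, lies in and thus fixes the other endpoint $G_j$. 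Collecting these edges over all non-root components reconstructs $\Delta$, the root being the lone component contributing no join target.

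With this shift-invariant reconstruction in hand the conclusion is immediate: were $\Delta$ and $\Delta'$ to generate shift-equivalent sequences, the reconstructed edge sets would agree, forcing $\Delta=\Delta'$. Thus the assignment sending each rooted tree to an equivalence class of one of its outputs is injective, and the number of inequivalent de Bruijn sequences is at least the number of distinct rooted trees.

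I expect the main difficulty to be making the reconstruction watertight at the root. Because the algorithm starts inside $C_k$ and, preferring the opposite successor, usually abandons $C_k$ at once, the root cycle is revisited non-consecutively and several of its states can have an out-of-cycle predecessor reached by following $f$; as Example~\ref{ex:gjoin} (with $f=x_1+x_2\cdot x_3$) shows, some of these can even satisfy the same local pattern as a genuine join. The datum that truly separates a join from such a root artifact is the relative order of $\bw_i$ and the skipped companion $\widetilde{\bw_i}$: at an added assignment $\widetilde{\bw_i}$ is still unvisited and hence appears \emph{after} $\bw_i$, whereas at a forced feedback step inside the root traversal the companion has already appeared. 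Turning this order datum into a clean shift-invariant criterion, equivalently proving that a non-root cycle is precisely a cycle with a single out-of-cycle predecessor while $C_k$ is detected as the sole exception, is where the real work lies; the loop case $C_k=(\0^n)$ or $(\1^n)$ must be handled separately, but there the incidence data of $\mathbb{G}_f$ and the edge count $t-1$ pin down the root directly.
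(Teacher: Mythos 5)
Your overall strategy---injectivity via a shift-invariant reconstruction of the rooted tree from the output---is a reasonable way to flesh out what the paper leaves implicit (the paper gives no real proof here: it declares the corollary a direct consequence of Theorem~\ref{thm3}, resting on the remark after Lemma~\ref{lemma-m} that the chosen PCP pins down the first-visited state of each non-root cycle and that distinct choices of $\bw$ yield shift-inequivalent outputs). But your reconstruction has a concrete false step. You claim that $\bw_i$ is \emph{the unique} state of $C_i$ whose immediate predecessor in the cyclic output is its out-of-cycle child, reached by following $f$. This fails already in the paper's own Example~\ref{ex:gjoin}: take $f=x_1+x_2\cdot x_3$, the run with $\bu=1110$ and $\bw=0010$ for $C_2=(010)$, whose output is $(1110~0001~0100~1101)$. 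There the state $0010=\bw$ is indeed preceded by its out-of-cycle child $0001$ (the added assignment), but the state $0100\in C_2$ is preceded by $1010$, which is also its out-of-cycle child, and the step $1010\to 0100$ also follows $f$ because the companion $0101$ was visited one step earlier. So your local pattern does not single out $\bw_i$ even in non-root components. The underlying reason is that nothing in Lemma~\ref{lemma1} forces the cycle $C_i$ to be traversed consecutively: whenever the conjugate $\widehat{\bz}$ of a cycle state $\bz$ is visited \emph{after} $\bz$, the in-cycle child diverts to the companion leaf and the cycle state is later entered from its out-of-cycle child by following $f$, reproducing exactly the pattern you use to detect a join.

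The order datum you mention at the end (at a genuine join the skipped companion $\widetilde{\bw_i}$ is still unvisited and hence appears \emph{after} $\bw_i$, whereas at a forced feedback step it has already appeared) does separate the two cases in the example above, so the repair you gesture at is plausibly the right one---but you present it as needed only to disambiguate the root cycle, when in fact it is needed for every cycle, and you explicitly defer carrying it out (``where the real work lies''). As it stands the proposal is therefore a proof sketch with an acknowledged and essential hole, not a complete proof: the uniqueness claim on which the injectivity rests is false as stated, and the corrected shift-invariant criterion, together with the root-identification step, is never established.
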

	
	Writing a basic implementation in {\tt python 2.7} and feeding it many standard feedback functions lead us to an interesting computational observation. For the majority of the input functions, distinct choices of initial states lead to inequivalent de Bruijn sequences, for a fixed rooted spanning tree. We harvest many more valid sequences than the number of distinct rooted spanning trees in $\mathbb{G}_f$. This suggests the following problem.
	
	\begin{problem}
		Let a standard feedback function $f$, or a class $\mathcal{F}$ of standard feedback functions, be given. Provide a closed formula, or a good estimate, of the number of inequivalent outputs.
	\end{problem}
	
	We require the spanning trees to use in Algorithm~\ref{algo:new} to be the rooted ones only. Otherwise, there will be a cycle $C_j$, for some $1 \leq j \leq t$, whose states are never visited. This clearly implies that the resulting sequence fails to be de Bruijn.
	
	\begin{example}\label{ex:nonrooted}
		We use the setup in Example~\ref{ex:gjoin} and refer to Figure~\ref{fig:one} for the state graph $\cG_{f}$. Note that the tree in Figure~\ref{fig:nonrooted} is a non-rooted spanning tree in $\mathbb{G}_f$. We use the PCPs $(0100,0101)_{2,1}$ and $(1001,1000)_{2,3}$ as the directed edges in the tree.
		
		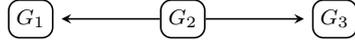
\begin{figure}[h!]
			\centering
			\begin{tikzpicture}
				[
				> = stealth,
				shorten > = 3pt,
				auto,
				node distance = 2cm,
				semithick
				]
				
				\tikzstyle{every state}=
				\node[rectangle,fill=white,draw,rounded corners,minimum size = 5mm]
				
				\node[state] (1) {$G_1$};
				\node[state] (2) [right of=1] {$G_2$};
				\node[state] (3) [right of=2] {$G_3$};
				\path[->] (2) edge (1);
				\path[->] (2) edge (3);	
			\end{tikzpicture}
			\caption{A Non-rooted Spanning Tree in $\mathbb{G}_f$ for $f(x_0,x_1,\ldots,x_3) = x_1+ x_2 \cdot x_3$.}\label{fig:nonrooted}
		\end{figure}
		\noindent We choose $\bu=1011 \in C_1$ as the initial state and run the modified version of the GPO Algorithm, with the added assignment rules that the successor of $1100$ must be $1001$, and the successor of $1010$ must be $0100$. The algorithm will never visit $C_3=(0)$, producing the output $(1011~0011~1101~000)$. Now, if we replicate the process with $\bu=0000 \in C_3$, then the output sequence is $(0000~1100~1010)$, leaving the four states in $C_1=(1011)$ out. \QEDB
	\end{example}
	
	\section{The Preference Adjacency Graph of a Class of Feedback Functions}\label{sec:PAG}
	
	For an arbitrary standard feedback function $f$, determining $\cG_f$, \eg, identifying its cycle(s) or loop(s), is generally not an easy task. Some feedback functions cannot be used by Algorithm~\ref{algo:new} to generate de Bruijn sequence, as illustrated by the next example.
	\begin{example}\label{ex:nogo}
		Let $n \geq 4$ and 
		\[
		f(x_0,x_1,\ldots,x_{n-1}) :=x_{n-3} \cdot x_{n-2} + x_{n-3} \cdot x_{n-1} + x_{n-2} \cdot x_{n-1}.
		\]
		The state graph $\cG_f$ has two loops, one in each of the two components, namely $C_1=(0)$ in $G_1$ and $C_2=(1)$ in $G_2$. The companion state $\0^{n-1}1$ of $\0^{n} \in C_1$ is in $G_1$ and, similarly, the companion state $\1^{n-1}0$ of $\1^{n} \in C_2$ is in $G_2$. Hence, we cannot join the two components. \QEDB
	\end{example}
	
	In this section we examine the PAG of a class of feedback functions. Our main motivation is to identify connections between structures of different orders that help in building the PAG. Let $1 \leq m<n$ be integers. Let $h(x_0,\ldots,x_{m-1})$ be any given nonsingular feedback function. The preference adjacency graph $\mathbb{G}_{F_h}$ of $F_h$, which is given by
	\begin{equation}\label{equ:fh}
		F_h(x_0,x_1,\ldots,x_{n-1}): = x_{n-m}+g(x_{n-m+1},\ldots,x_{n-1})=h(x_{n-m},\ldots,x_{n-1}),
	\end{equation}
	becomes easier to determine. Algorithm~\ref{algo:new} can then use $F_h$ to generate de Bruijn sequences.
	
	Since $h$ is nonsingular, the $t$ components, say $C_{h,1},C_{h,2},\ldots,C_{h,t}$, of $\cG_h$ are cycles. The states in $\cG_h$ are $m$-stage. These $t$ cycles can be joined into a de Bruijn sequence of order $m$ by the cycle joining method if there exists enough companion pairs $(\bw,\widetilde{\bw})$ with $\bw \in C_{h,k}$ and $\widetilde{\bw} \in C_{h,\ell}$ between appropriate pairs $1 \leq k \neq \ell \leq t$.
	
	Let us now consider $\cG_{F_h}$, whose states are $n$-stage. Its $t$ components are $G_1, G_2, \ldots$, $G_t$. For $1 \leq i \leq t$, let $C_i$ be the unique cycle or loop in $G_i$, labelled based on the one-to-one correspondence between $C_{h,i}$ and $C_i$ induced by how $F_h$ is defined. The component subgraph $G_i$ can be divided further into distinct rooted trees when the edges that connect the states in $C_i$ are deleted. The respective roots are the states in $C_i$ and each tree contains $2^{n-m}$ states. Furthermore, in each such tree, there are $2^{n-m-1}$ leaves. The \emph{last} $m$ consecutive bits in \emph{each} of these leaves is the \emph{first} $m$ consecutive bits of the corresponding root, by Equation (\ref{equ:fh}). In other words, for a tree with a state $a_0,a_1,\ldots,a_{n}$ in some $C_i$ as the root, all of its $2^{n-m-1}$ leaves have the form
	\[
	y_0,\ldots,y_{n-m-2},\overline{g(a_0,\ldots,a_{m-2})+a_{m-1}},a_0,a_1,\ldots,a_{m-1} \in \F_2^n,
	\]
	where the choices for $y_0,\ldots,y_{n-m-2}$ range over all vectors in $\F_2^{n-m-1}$.
	
	Suppose that $(\bw,\widetilde{\bw})_{i,j}$ is a PCP, \ie, there is a state $\bw=w_0,\ldots,w_{n-1}$ in $C_i$ such that $\widetilde{\bw}$ is a leaf in $G_j$ with $i \neq j$. By the above analysis, the $m$-stage state $w_{n-m},\ldots,\overline{w_{n-1}}$ is in $C_{h,j}$ while the consecutive bits $w_{n-m},\ldots,w_{n-1}$ is an $m$-stage state in $C_{h,i}$, and there is a one-to-one correspondence between this latter state and $\bw$. Thus, a PCP from $G_i$ to $G_j$ in $\mathcal{G}_{F_h}$ uniquely determines an $m$-stage companion pair shared by $C_{h,i}$ and $C_{h,j}$ in $\cG_h$.
	
	On the other hand, for $1 \leq i \neq j \leq t$, any $m$-stage companion pair  
	$\bv=v_0,\ldots,v_{m-1} \in C_{h,i}$ and $\widetilde{\bv} \in C_{h,j}$, corresponds to two PCPs in $\mathcal{G}_{F_h}$. The first PCP is $(\bw_1,\widetilde{\bw}_1)_{i,j}$. Here, $\bw_1$ is an $n$-stage state in $C_i$ with $v_0,\ldots,v_{m-1}$ as its last $m$ consecutive bits and $\widetilde{\bw}_1$ is a leaf in $G_j$. The second PCP is $(\bw_2,\widetilde{\bw}_2)_{j,i}$ such that $\bw_2$ is an $n$-stage state in $C_j$ with $v_0,\ldots,\overline{v_{m-1}}$ as its last $m$ consecutive bits and $\widetilde{\bw}_2$ is a leaf in $G_i$.
	
	We have thus proved the following result.
	
	\begin{proposition}\label{prop4}
		Let $1 \leq m<n$ be integers. If we view each $m$-stage companion pair $(\bv,\widetilde{\bv})$ between distinct cycles $C_{h,i}$ and $C_{h,j}$ in $\mathcal{G}_h$ as two distinct pairs $(\bv,\widetilde{\bv})_{i,j}$ and $(\widetilde{\bv},\bv)_{j,i}$, then there is a one-to-two correspondence between the set of all $m$-stage companion pairs in $\cG_h$ and the set of all $n$-stage PCPs in $\cG_{F_h}$. 
	\end{proposition}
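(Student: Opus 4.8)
The plan is to exploit the defining feature of $F_h$, namely that it depends only on its last $m$ coordinates and there agrees with $h$, by introducing the \emph{last-$m$-bits projection} $\pi(z_0,\ldots,z_{n-1}) := (z_{n-m},\ldots,z_{n-1})$ from $n$-stage to $m$-stage states. The first step is to check that $\pi$ intertwines the two successor maps: the successor of $\bz$ in $\cG_{F_h}$ is $z_1,\ldots,z_{n-1},h(z_{n-m},\ldots,z_{n-1})$, whose last $m$ bits are exactly the successor of $\pi(\bz)$ under $h$. Since $h$ is nonsingular, $\cG_h$ is a disjoint union of cycles, so every $\pi$-image is already on a cycle; hence every state of a component $G_i$ projects into the single cycle it flows toward, and $\pi(G_i) = C_{h,i}$. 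I would then record that $\pi$ restricts to a \emph{bijection} $C_i \to C_{h,i}$: both cycles are the sets of windows of one and the same periodic sequence, so they have equal length, and a transition-commuting surjection between equal finite sets is forced to be bijective. In particular, each state of $C_{h,i}$ has a unique lift inside $C_i$.

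The second ingredient is a companion/leaf dichotomy. A child of $\bz=z_0,\ldots,z_{n-1}$ has the form $c_0,z_0,\ldots,z_{n-2}$ with $F_h(c_0,z_0,\ldots,z_{n-2}) = h(z_{n-m-1},\ldots,z_{n-2}) = z_{n-1}$, a condition independent of $c_0$; thus $\bz$ is a leaf precisely when $z_{n-1} \neq h(z_{n-m-1},\ldots,z_{n-2})$, and otherwise has two children. Because the companion $\widetilde{\bz}$ differs from $\bz$ only in the last bit while leaving the arguments $z_{n-m-1},\ldots,z_{n-2}$ untouched, exactly one of $\bz$ and $\widetilde{\bz}$ is a leaf. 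In particular, the companion of any cycle (or loop) state is automatically a leaf.

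With these tools the correspondence follows in both directions. \emph{Forward:} given a PCP $(\bw,\widetilde{\bw})_{i,j}$ with $\bw \in C_i$ and $\widetilde{\bw}$ a leaf in $G_j$, the states $\pi(\bw)$ and $\pi(\widetilde{\bw})$ differ only in their last bit, so they form an $m$-stage companion pair; by the component projection, $\pi(\bw)\in C_{h,i}$ and $\pi(\widetilde{\bw})\in C_{h,j}$ with $i\neq j$, so every PCP descends to a genuine companion pair between distinct cycles of $\cG_h$. \emph{Backward (the count):} fix a companion pair with $\bv\in C_{h,i}$ and $\widetilde{\bv}\in C_{h,j}$, $i\neq j$. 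Any PCP lying over it has its cycle-member projecting either to $\bv$ or to $\widetilde{\bv}$. Choosing $\bv$ forces the cycle-member to be the unique lift $\bw_1\in C_i$; its companion $\widetilde{\bw_1}$ then projects to $\widetilde{\bv}\in C_{h,j}$, so $\widetilde{\bw_1}\in G_j$, and by the dichotomy it is a leaf, producing exactly $(\bw_1,\widetilde{\bw_1})_{i,j}$. Symmetrically, choosing $\widetilde{\bv}$ produces the unique reverse pair $(\bw_2,\widetilde{\bw_2})_{j,i}$ with $\bw_2\in C_j$. These are the only lifts, so each companion pair has exactly two preimages, one per direction, which is the claimed one-to-two correspondence.

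The step I expect to require the most care is pinning the fiber size to be \emph{exactly} two. This rests on the bijectivity (not just surjectivity) of $\pi|_{C_i}$: if two distinct states of a single cycle could share the same last $m$ bits, a companion pair might admit several lifts in one cycle and break the count. Establishing that $C_i$ and $C_{h,i}$ have equal length is therefore the crux; once it is in place, the facts that $\pi$ commutes with transitions, that companions split into leaf/non-leaf, and that the companion of a cycle state lands in the component dictated by its projection are all routine bookkeeping.
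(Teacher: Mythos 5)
Your proof is correct and follows essentially the same route as the paper, which likewise rests on the bijection between $C_i$ and $C_{h,i}$ induced by the last-$m$-bits window, the observation that the companion of a cycle state is a leaf, and the fact that component membership in $\cG_{F_h}$ is determined by which cycle of $\cG_h$ the last $m$ bits lie on. If anything, you justify more carefully the one point the paper merely asserts (that $\pi$ restricts to a bijection $C_i \to C_{h,i}$, via equality of cycle lengths), whereas the paper instead spells out the explicit form of the leaves of each rooted tree; both are routine once the projection is set up.
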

	
	Let $\mathcal{A}_h$ be the adjacency graph of $h(x_0,\ldots,x_{m-1})$ as defined in the CJM. Recall that $\mathcal{A}_h$ is, therefore, a simple undirected multigraph. Its vertices are the cycles $C_{h,1},\ldots,C_{h,t}$ and the edges represent companion pairs. Proposition~\ref{prop4} tells us that, if we replace each edge in $\mathcal{A}_h$ by bidirectional edges and each vertex $C_{h,i}$ by the corresponding $G_i$, then we obtain the PAG $\mathbb{G}_{F_h}$. The number $M$ of distinct spanning trees in $\mathcal{A}_h$ can be computed by the well-known BEST Theorem, \eg, as stated in~\cite[Theorem 1]{Chang2019}. Because the spanning trees in $\mathbb{G}_{F_h}$ are directional, by taking distinct roots, the graph has $t \times M$ distinct rooted spanning trees in total.
	
	\begin{example}\label{ex:workout}
		Given $n=5$ and $m=4$, we have
		\[
		F_h(x_0,x_1,x_2,x_3,x_4) = x_1+x_2+x_3+x_4 \mbox{ when }
		h(x_0,x_1,x_2,x_3)= x_0 + x_1 + x_2 + x_3.
		\]
		Figure~\ref{fig:M1} presents $\cG_{F_h}$. The cycles in $\cG_h$, whose states are $4$-state, are 
		\[
		C_{h,1}=(0), \quad C_{h,2}=(00011), \quad C_{h,3}=(00101), \quad C_{h,4}=(01111).
		\]
		Figure~\ref{fig:two} (a) gives the compressed (undirected) adjacency graph $\mathcal{A}_h$. By computing the cofactor of any entry in its derived matrix
		\[
		\begin{pmatrix*}[r]
			1 & -1 & 0 & 0~\\
			-1 & 5 & -2 & -2 ~\\
			0 & -2 & 3 & -1 ~\\
			0 & -2 & -1 & 3~
		\end{pmatrix*},
		\]
		we know that $\mathcal{A}_h$ has $8$ distinct spanning trees. The compressed PAG $\mathbb{G}_{F_h}$, where multiple directed edges (if any) are compressed into one, is in Figure~\ref{fig:two} (b). It has $12$ distinct types of rooted spanning trees, grouped based on the respective roots. The PCPs, for any $(i,j)$ with $1 \leq i \neq j \leq 4$, can be easily determined from $\cG_{F_h}$. Table~\ref{table:PCP} provides the list for ease of reference.

		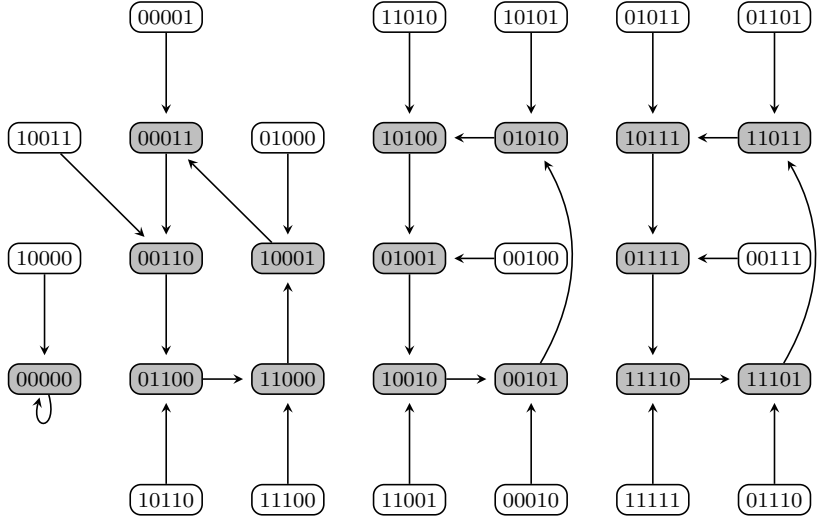
\begin{figure}[h!]
			\begin{center}
				\begin{tikzpicture}
					[
					> = stealth,
					shorten > = 3pt,
					auto,
					node distance = 1.6cm,
					semithick
					]
					
					\tikzstyle{every state}=
					\node[rectangle,fill=white,draw,rounded corners,minimum size = 4mm]
					\node[state] (1) {$10000$};
					\node[state,fill=lightgray] (2) [below of=1] {$00000$};
					
					\node[state] (3) [above of=1] {$10011$};
					\node[state,fill=lightgray] (4) [right of=3] {$00011$};
					\node[state,fill=lightgray] (5) [below of=4] {$00110$};
					\node[state,fill=lightgray] (6) [below of=5] {$01100$};
					\node[state,fill=lightgray] (7) [right of=6] {$11000$};
					\node[state,fill=lightgray] (8) [above of=7] {$10001$};
					\node[state] (9) [above of=4] {$00001$};
					\node[state] (10) [below of=6] {$10110$};
					\node[state] (11) [below of=7] {$11100$};
					\node[state] (12) [above of=8] {$01000$};
					
					\node[state,fill=lightgray] (14) [right of=12] {$10100$};
					\node[state,fill=lightgray] (15) [below of=14] {$01001$};
					\node[state,fill=lightgray] (16) [below of=15] {$10010$};
					\node[state] (17) [below of=16] {$11001$};
					\node[state] (18) [right of=17] {$00010$};
					\node[state,fill=lightgray] (19) [above of=18] {$00101$};
					\node[state] (20) [above of=19] {$00100$};
					\node[state,fill=lightgray] (21) [above of=20] {$01010$};
					\node[state] (22) [above of=21] {$10101$};
					\node[state] (13) [above of=14] {$11010$};
					
					\node[state,fill=lightgray] (24) [right of=21] {$10111$};
					\node[state,fill=lightgray] (25) [below of=24] {$01111$};
					\node[state,fill=lightgray] (26) [below of=25] {$11110$};
					\node[state] (27) [below of=26] {$11111$};
					\node[state] (28) [right of=27] {$01110$};
					\node[state,fill=lightgray] (29) [above of=28] {$11101$};
					\node[state] (30) [above of=29] {$00111$};
					\node[state,fill=lightgray] (31) [above of=30] {$11011$};
					\node[state] (32) [above of=31] {$01101$};
					\node[state] (23) [above of=24] {$01011$};
					
					\path[->] (1) edge (2);
					\path[->] (2) edge[loop below] (2);
					
					\path[->] (3) edge (5);
					\path[->] (4) edge (5);
					\path[->] (5) edge (6);
					\path[->] (10) edge (6);
					\path[->] (6) edge (7);
					\path[->] (11) edge (7);	
					\path[->] (7) edge (8);
					\path[->] (12) edge (8);
					\path[->] (8) edge (4);
					\path[->] (9) edge (4);
					
					\path[->] (13) edge (14);
					\path[->] (14) edge (15);
					\path[->] (15) edge (16);	
					\path[->] (17) edge (16);
					\path[->] (16) edge (19);	
					\path[->] (18) edge (19);
					\path[->] (19) edge[bend right] (21);
					\path[->] (20) edge (15);	
					\path[->] (21) edge (14);
					\path[->] (22) edge (21);
					
					\path[->] (23) edge (24);
					\path[->] (24) edge (25);
					\path[->] (25) edge (26);	
					\path[->] (27) edge (26);
					\path[->] (26) edge (29);	
					\path[->] (28) edge (29);
					\path[->] (29) edge[bend right] (31);
					\path[->] (30) edge (25);	
					\path[->] (31) edge (24);
					\path[->] (32) edge (31);
				\end{tikzpicture}
			\end{center}
			\caption{The state graph $\cG_{F_h}$ of $f(x_0,x_1,x_2,x_3,x_4)=x_1+x_2+x_3+x_4$ with the components $G_1$, $G_2$, $G_3$, and $G_4$ ordered left to right, \ie, $C_1=(0)$, $C_2=(10001)$, $C_3=(01010)$, and $C_4=(11011)$.} \label{fig:M1}
		\end{figure}
		
		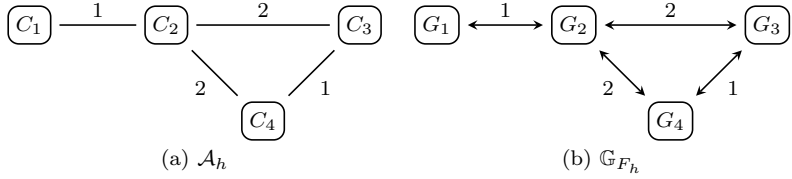
\begin{figure}
			\centering
			\begin{tabular}{cc}
				
				\begin{tikzpicture}
					[> = stealth,
					shorten > = 3pt,
					shorten < = 3pt,
					auto,
					node distance = 1.8cm,
					semithick
					]
					
					\tikzstyle{every state}=
					\node[rectangle,fill=white,draw,rounded corners,minimum size = 5mm]
					
					\node[state] (1) {$C_1$};
					\node[state] (2) [right of=1] {$C_2$};
					\node[state] (4) [below right of=2] {$C_4$};
					\node[state] (3) [above right of=4] {$C_3$};
					\draw[-] (1)--(2) node[midway,above] {$1$};
					\draw[-] (2)--(3) node[midway,above] {$2$};
					\draw[-] (4)--(2) node[midway] {$2$};
					\draw[-] (3)--(4) node[midway] {$1$};
				\end{tikzpicture}
				& 
				
				\begin{tikzpicture}
					[> = stealth,
					shorten > = 3pt,
					shorten < = 3pt,
					auto,
					node distance = 1.8cm,
					semithick
					]
					
					\tikzstyle{every state}=
					\node[rectangle,fill=white,draw,rounded corners,minimum size = 5mm]
					
					\node[state] (1) {$G_1$};
					\node[state] (2) [right of=1] {$G_2$};
					\node[state] (4) [below right of=2] {$G_4$};
					\node[state] (3) [above right of=4] {$G_3$};
					\draw[<->] (1)--(2) node[midway,above] {$1$};
					\draw[<->] (2)--(3) node[midway,above] {$2$};
					\draw[<->] (4)--(2) node[midway] {$2$};
					\draw[<->] (3)--(4) node[midway] {$1$};
				\end{tikzpicture}
				
				\\
				
				(a) $\mathcal{A}_h$ 
				& 
				(b) $\mathbb{G}_{F_h}$
			\end{tabular}
			\caption{Relevant graphs in Example~\ref{ex:workout}. For the compressed adjacency graph $\mathcal{A}_h$, all edges between adjacent vertices are merged into one. The label gives the number of edges, \ie, the number of shared companion pairs. For the compressed preference adjacency graph $\mathbb{G}_{F_h}$, all directed edges between adjacent vertices are merged into one. The label gives the number of PCP in \emph{each} direction.}
			\label{fig:two}
		\end{figure}
		
		\begin{table}[h!]
			\caption{List of Preference Companion Pairs for $F_h$ in Example~\ref{ex:workout}}
			\label{table:PCP}
			\renewcommand{\arraystretch}{1.3}
			\centering
			\begin{tabular}{cl|cl}
				\hline
				$(i,j)$ & PCPs $\{(\bw,\widetilde{\bw})_{i,j}\}$ &
				$(i,j)$ & PCPs $\{(\bw,\widetilde{\bw})_{i,j}\}$\\
				\hline
				$(1,2)$ & $\{(00000,00001)\}$ & $(2,1)$ & $\{(10001,10000)\}$ \\
				
				$(2,3)$ & $\{(00011,00010),(11000,11001)\}$ &
				
				$(3,2)$ & $\{(01001,01000),(10010,10011)\}$ \\
				
				$(2,4)$ & $\{(00110,00111),(01100,01101)\}$ &
				
				$(4,2)$ & $\{(10111,10110),(11101,11100)\}$ \\
				
				$(3,4)$ & $\{(01010,01011)\}$ &
				
				$(4,3)$ & $\{(11011,11010)\}$ \\
				\hline
			\end{tabular}
		\end{table}
		
		The number of de Bruijn sequences that Algorithm~\ref{algo:new} outputs can now be easily determined. From Figure~\ref{fig:RST} (a) and since $C_1$ is a loop $(0)$, the algorithm produces $8$ de Bruijn sequences. There are five states each in $C_k$ for $k \in \{2,3,4\}$. Hence, reading from Figures~\ref{fig:RST} (b), (c), and (d), the algorithm yields $120$ more sequences. Alternatively, note that the cofactor of any entry in the derived matrix of $\mathbb{G}_{F_h}$ is $8$. The product of this cofactor and the number of $5$-stage states in the union of cycles, which is $16$, is $128$. These are the $128$ choices for the input $(f,\bu)$ in the GJPO Algorithm. Since the number of distinct rooted spanning trees is $8 \cdot 4 = 32$, there are at least $32$ inequivalent de Bruijn sequences among the $128$. In fact, there are $70$ sequences that appear once each. There are $23$ sequences that appear twice each. There is one sequence that appears three, four, and five times, respectively. These are 
		\begin{align*}
			&(00000100~10111110~10100011~01100111),\\ &(00000100~01110101~00110110~01011111),\\ &(00000101~11000111~11010100~11011001).
		\end{align*}
		Thus, Algorithm~\ref{algo:new} produces $96$ inequivalent de Bruijn sequences. \QEDB
		
		\begin{figure*}[h!]
			\centering
			\begin{tabular}{cc}
				\begin{tikzpicture}
					[
					> = stealth,
					shorten > = 3pt,
					auto,
					node distance = 1.5cm,
					semithick
					]
					
					\tikzstyle{every state}=
					\node[rectangle,fill=white,draw,rounded corners,minimum size = 5mm]
					
					\node[state,fill=blue!20] (1) {$G_1$};
					\node[state] (2) [right of=1] {$G_2$};
					\node[state] (3) [right of=2] {$G_3$};
					\node[state] (4) [right of=3] {$G_4$};
					\path[->] (4) edge node [above] {$1$} (3);
					\path[->] (3) edge node [above] {$2$} (2);
					\path[->] (2) edge node [above] {$1$} (1);	
					
				\end{tikzpicture} &
				
				\begin{tikzpicture}
					[
					> = stealth,
					shorten > = 3pt,
					auto,
					node distance = 1.5cm,
					semithick
					]
					
					\tikzstyle{every state}=
					\node[rectangle,fill=white,draw,rounded corners,minimum size = 5mm]
					
					\node[state] (1) {$G_1$};
					\node[state,fill=blue!20] (2) [right of=1] {$G_2$};
					\node[state] (3) [right of=2] {$G_3$};
					\node[state] (4) [right of=3] {$G_4$};
					\path[->] (4) edge node [above] {$1$} (3);
					\path[->] (3) edge node [above] {$2$} (2);
					\path[->] (1) edge node [above] {$1$} (2);	
				\end{tikzpicture}\\
				
				\begin{tikzpicture}
					[
					> = stealth,
					shorten > = 3pt,
					auto,
					node distance = 1.5cm,
					semithick
					]
					
					\tikzstyle{every state}=
					\node[rectangle,fill=white,draw,rounded corners,minimum size = 5mm]
					
					\node[state,fill=blue!20] (1) {$G_1$};
					\node[state] (2) [right of=1] {$G_2$};
					\node[state] (4) [right of=2] {$G_4$};
					\node[state] (3) [right of=3] {$G_3$};
					\path[->] (3) edge node [above] {$1$} (4);
					\path[->] (4) edge node [above] {$2$} (2);
					\path[->] (2) edge node [above] {$1$} (1);	
					
				\end{tikzpicture} &
				
				\begin{tikzpicture}
					[
					> = stealth,
					shorten > = 3pt,
					auto,
					node distance = 1.5cm,
					semithick
					]
					
					\tikzstyle{every state}=
					\node[rectangle,fill=white,draw,rounded corners,minimum size = 5mm]
					
					\node[state] (1) {$G_1$};
					\node[state,fill=blue!20] (2) [right of=1] {$G_2$};
					\node[state] (4) [right of=2] {$G_4$};
					\node[state] (3) [right of=4] {$G_3$};
					\path[->] (3) edge node [above] {$1$} (4);
					\path[->] (4) edge node [above] {$2$} (2);
					\path[->] (1) edge node [above] {$1$} (2);	
				\end{tikzpicture}\\
				
				\begin{tikzpicture}
					[
					> = stealth,
					shorten > = 2pt,
					auto,
					node distance = 1.5cm,
					semithick
					]
					
					\tikzstyle{every state}=
					\node[rectangle,fill=white,draw,rounded corners,minimum size = 5mm]
					
					\node[state,fill=blue!20] (1) {$G_1$};
					\node[state] (2) [right of=1] {$G_2$};
					\node[state] (3) [right of=2] {$G_3$};
					\node[state] (4) [right of=3] {$G_4$};
					\path[->] (4) edge[bend right] node [above] {$2$} (2);
					\path[->] (3) edge node [above] {$2$} (2);
					\path[->] (2) edge node [above] {$1$} (1);	
					
				\end{tikzpicture} &
				
				\begin{tikzpicture}
					[
					> = stealth,
					shorten > = 2pt,
					auto,
					node distance = 1.5cm,
					semithick
					]
					
					\tikzstyle{every state}=
					\node[rectangle,fill=white,draw,rounded corners,minimum size = 5mm]
					
					\node[state] (1) {$G_1$};
					\node[state,fill=blue!20] (2) [right of=1] {$G_2$};
					\node[state] (4) [right of=2] {$G_4$};
					\node[state] (3) [right of=4] {$G_3$};
					\path[->] (4) edge node [above] {$2$} (2);
					\path[->] (3) edge[bend right] node [above] {$2$} (2);
					\path[->] (1) edge node [above] {$1$} (2);	
				\end{tikzpicture}
				\bigskip
				\\

				(a) {Spanning Trees with Root $G_1$} & (b) {Spanning Trees with Root $G_2$}
				\bigskip
				\\
				
				
				\begin{tikzpicture}
					[
					> = stealth,
					shorten > = 3pt,
					auto,
					node distance = 1.5cm,
					semithick
					]
					
					\tikzstyle{every state}=
					\node[rectangle,fill=white,draw,rounded corners,minimum size = 5mm]
					
					\node[state] (1) {$G_1$};
					\node[state] (2) [right of=1] {$G_2$};
					\node[state,fill=blue!20] (3) [right of=2] {$G_3$};
					\node[state] (4) [right of=3] {$G_4$};
					\path[->] (4) edge node [above] {$1$} (3);
					\path[->] (2) edge node [above] {$2$} (3);
					\path[->] (1) edge node [above] {$1$} (2);	
					
				\end{tikzpicture} &
				
				\begin{tikzpicture}
					[
					> = stealth,
					shorten > = 3pt,
					auto,
					node distance = 1.5cm,
					semithick
					]
					
					\tikzstyle{every state}=
					\node[rectangle,fill=white,draw,rounded corners,minimum size = 5mm]
					
					\node[state] (1) {$G_1$};
					\node[state] (2) [right of=1] {$G_2$};
					\node[state] (3) [right of=2] {$G_3$};
					\node[state,fill=blue!20] (4) [right of=3] {$G_4$};
					\path[->] (3) edge node [above] {$1$} (4);
					\path[->] (2) edge node [above] {$2$} (3);
					\path[->] (1) edge node [above] {$1$} (2);	
				\end{tikzpicture}\\
				
				\begin{tikzpicture}
					[
					> = stealth,
					shorten > = 3pt,
					auto,
					node distance = 1.5cm,
					semithick
					]
					
					\tikzstyle{every state}=
					\node[rectangle,fill=white,draw,rounded corners,minimum size = 5mm]
					
					\node[state] (1) {$G_1$};
					\node[state] (2) [right of=1] {$G_2$};
					\node[state] (4) [right of=2] {$G_4$};
					\node[state,fill=blue!20] (3) [right of=4] {$G_3$};
					\path[->] (1) edge node [above] {$1$} (2);
					\path[->] (2) edge node [above] {$2$} (4);
					\path[->] (4) edge node [above] {$1$} (3);	
					
				\end{tikzpicture} &
				
				\begin{tikzpicture}
					[
					> = stealth,
					shorten > = 3pt,
					auto,
					node distance = 1.5cm,
					semithick
					]
					
					\tikzstyle{every state}=
					\node[rectangle,fill=white,draw,rounded corners,minimum size = 5mm]
					
					\node[state] (1) {$G_1$};
					\node[state] (2) [right of=1] {$G_2$};
					\node[state,fill=blue!20] (4) [right of=2] {$G_4$};
					\node[state] (3) [right of=4] {$G_3$};
					\path[->] (2) edge node [above] {$2$} (4);
					\path[->] (3) edge node [above] {$1$} (4);
					\path[->] (1) edge node [above] {$1$} (2);	
				\end{tikzpicture}\\
				
				\begin{tikzpicture}
					[
					> = stealth,
					shorten > = 2pt,
					auto,
					node distance = 1.5cm,
					semithick
					]
					
					\tikzstyle{every state}=
					\node[rectangle,fill=white,draw,rounded corners,minimum size = 5mm]
					
					\node[state] (1) {$G_1$};
					\node[state] (2) [right of=1] {$G_2$};
					\node[state,fill=blue!20] (3) [right of=2] {$G_3$};
					\node[state] (4) [right of=3] {$G_4$};
					\path[->] (4) edge[bend right] node [above] {$2$} (2);
					\path[->] (2) edge node [above] {$2$} (3);
					\path[->] (1) edge node [above] {$1$} (2);	
					
				\end{tikzpicture} &
				
				\begin{tikzpicture}
					[
					> = stealth,
					shorten > = 2pt,
					auto,
					node distance = 1.5cm,
					semithick
					]
					
					\tikzstyle{every state}=
					\node[rectangle,fill=white,draw,rounded corners,minimum size = 5mm]
					
					\node[state] (1) {$G_1$};
					\node[state] (2) [right of=1] {$G_2$};
					\node[state,fill=blue!20] (4) [right of=2] {$G_4$};
					\node[state] (3) [right of=4] {$G_3$};
					\path[->] (2) edge node [above] {$2$} (4);
					\path[->] (3) edge[bend right] node [above] {$2$} (2);
					\path[->] (1) edge node [above] {$1$} (2);	
				\end{tikzpicture}
				\bigskip
				\\
				
				(c) {Spanning Trees with Root $G_3$} & (d) {Spanning Trees with Root $G_4$}\\
				
			\end{tabular}
			\caption{List of $12$ distinct types of rooted spanning trees in $\mathbb{G}_{F_h}$ with multiple edges compressed into one. The label above each directed edge is the number of PCPs to choose from.}
			\label{fig:RST}
		\end{figure*}
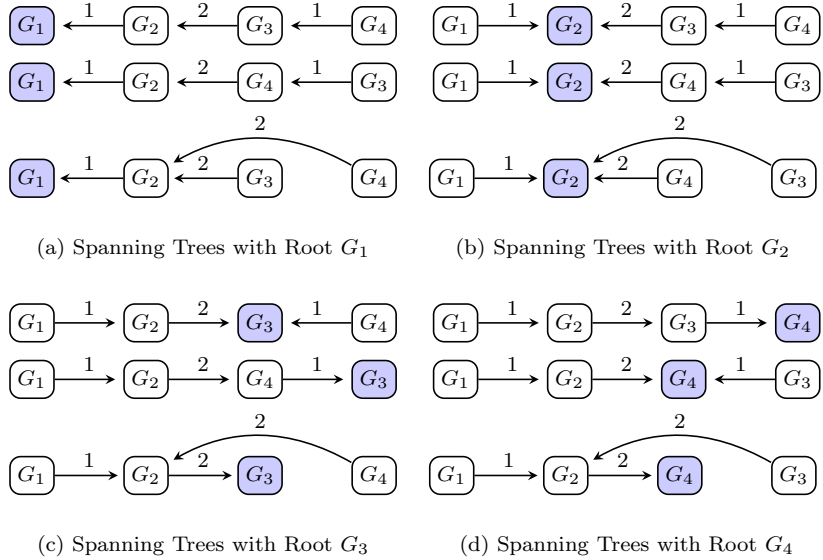
		
	\end{example}
	
	Computational evidences point to the following conjecture.
	
	\begin{conjecture}
		Given the feedback function $F_h$, defined based on a nonsingular $h$ in Equation (\ref{equ:fh}) with $n \ge m+2$, distinct input pairs of rooted spanning tree and initial state generate inequivalent de Bruijn sequences of order $n$. This holds in both the GPO and the GJPO Algorithms.
	\end{conjecture}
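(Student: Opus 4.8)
The plan is to establish the stronger statement that the map sending an input pair $(\Delta,\bu)$, consisting of a rooted spanning tree $\Delta$ of $\mathbb{G}_{F_h}$ and an initial state $\bu$ in its root cycle, to the cyclic de Bruijn sequence it produces, is \emph{injective}. The first step is a clean reduction. Any de Bruijn sequence $\bs$ of order $n$, read cyclically, determines a unique nonsingular order-$n$ feedback function $f_{\bs}$ whose state graph is the single Hamiltonian cycle $\bs$; conversely $f_{\bs}$ recovers $\bs$ up to a shift. Hence two outputs are cyclically equivalent if and only if their feedback functions coincide. Since $F_h$ is fixed, I would encode each output by the set $D:=\{\bc\in\F_2^n : f_{\bs}(\bc)\neq F_h(\bc)\}$ of states at which the run departs from the natural $F_h$-transition. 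Equivalence of outputs is then exactly equality of the sets $D$, so the conjecture becomes the injectivity of $(\Delta,\bu)\mapsto D$.

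The second step is to read the tree $\Delta$ back off $D$. Because $F_h$ is standard and ignores its leading coordinate, every non-leaf of $\cG_{F_h}$ has exactly two conjugate children, and the prefer-opposite rule always aims first at a companion state that is a leaf; thus $\bc\in D$ precisely when the run takes such an opposite, leaf-ward step. Using Proposition~\ref{prop4} and the component picture of Lemma~\ref{lemma-m}, each realized preference companion pair $(\bw,\widetilde{\bw})_{i,j}$ of $\Delta$ appears in $\bs$ as the unique \emph{external entry} into the cycle $C_i$, namely the cycle state $\bw$ whose predecessor in $\bs$ is its non-cycle child rather than its cycle child. Reading off, for every component, whether and how its cycle is entered from outside recovers the realized pairs with their directions, hence the directed tree $\Delta$, together with its root $C_k$ (the unique component with no external cycle-entry). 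This parallels the way a spanning tree is recovered in the classical cycle-joining method.

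The last and hardest step is to recover the initial state $\bu$ inside the now-identified root cycle $C_k$, and this is exactly where the hypothesis $n\ge m+2$ must be used. I would try to characterise $\bu$ by a local fingerprint: the run leaves $\bu$ by a successful opposite step into a leaf and returns to it by the natural cycle transition, so that $\bu\in D$ while the $\bs$-predecessor of $\bu$ lies outside $D$. The goal is to upgrade this, or a suitable refinement, to a property met by a single state of $C_k$. The genuine difficulty is that the choice of $\bu$ acts \emph{globally}: it reorders all of the greedy decisions and so perturbs $D$ throughout $\cG_{F_h}$, which defeats a naive local comparison. I expect the real content of the proof to be a quantitative use of tree depth: when $n-m\ge 2$, each cycle state of $C_k$ roots a tree carrying at least two leaves, whereas for $n=m+1$ it carries only one, and it is precisely this degeneracy that produces the repeated sequences seen in Example~\ref{ex:workout}. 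A natural way to organise the argument is induction on $n$ (equivalently on $n-m$), lifting the order-$(n-1)$ behaviour through the single free leading coordinate that $F_h$ does not read, with the base case treated just above the shallow threshold $n=m+1$. The pure-GPO subcase $t=1$, where $\Delta$ is trivial and only $\bu$ varies, isolates this obstacle and should be settled first.
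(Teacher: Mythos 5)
This statement is labelled a \emph{conjecture} in the paper and is left unproved there: the authors offer only computational evidence (the counts in Example~\ref{ex:workout}, which has $n=m+1$ and therefore falls outside the conjecture's hypothesis, together with unreported experiments for larger $n-m$). So there is no proof in the paper to compare yours against; the only fair comparison is between your plan and a complete argument, and your plan is not one. You say so yourself in the third step (``I would try to characterise\dots'', ``I expect the real content of the proof to be\dots''), and that step is precisely the crux. Your proposed fingerprint for $\bu$ --- ``$\bu\in D$ while the $\bs$-predecessor of $\bu$ lies outside $D$'' --- already fails to be unique in the paper's own small examples: in the run of Example~\ref{ex:gjoin} with $\bu=1110$, the cycle state $1101$ also leaves by an opposite step while its predecessor $0110$ takes the natural transition, so it carries the same fingerprint as $\bu$. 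The paper itself warns, right after Lemma~\ref{lemma-m}, that changing the initial state within the same cycle \emph{can} produce shift-equivalent outputs, and Example~\ref{ex:gjoin} exhibits such a collision; any proof therefore has to exploit the special shape of $F_h$ in Equation~(\ref{equ:fh}) and the bound $n\ge m+2$ in an essential, quantitative way, which your sketch gestures at but does not supply.

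There is also a concrete flaw in your second step. You propose to recover the directed tree $\Delta$ from $\bs$ by locating, for each cycle $C_i$, ``the unique external entry'' into $C_i$, and to identify the root as ``the unique component with no external cycle-entry.'' Neither claim is true: an ordinary greedy step (not a forced PCP assignment) can carry the run from a non-cycle child into a cycle state, because the preferred companion may already have appeared. In the same run of Example~\ref{ex:gjoin}, the root cycle $C_1=(1110)$ is entered externally at three of its four states ($1111\to 1110$, $0110\to 1101$, $0101\to 1011$), only one of which corresponds to anything tree-like. So ``external entry into a cycle'' does not isolate the realized PCPs, and the root is not characterized by the absence of such entries; the analogy with spanning-tree recovery in the CJM breaks exactly here, because in the CJM every state of every cycle has a unique predecessor type, whereas here the greedy rule makes the predecessor of a cycle state depend on the global visiting order. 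Your first step (encoding the output by the defect set $D$ relative to $F_h$, so that cyclic equivalence becomes equality of sets) is sound, and your diagnosis of why $n\ge m+2$ should matter --- each tree hanging off a cycle state has $2^{n-m-1}\ge 2$ leaves, versus the degenerate single-leaf case $n=m+1$ where Example~\ref{ex:workout} exhibits collisions --- is a genuinely useful observation that matches the paper's data. But as it stands the proposal identifies the right obstacles without overcoming either of them, so the conjecture remains open.
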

	
	\section{Conclusion and Future Directions}\label{sec:conclu}
	
	We recap what this work has accomplished and point out several directions to explore. In terms of the GPO Algorithm, we have characterized a set of conditions for which the algorithm is certified to produce de Bruijn sequences. Carefully constructed classes of feedback functions, as demonstrated in Section~\ref{sec:PAG}, can be easily shown to meet the conditions.
	
	The insights learned from studying the GPO Algorithm led us to a modification, which we call the {\tt Graph Joining Prefer-Opposite} (GJPO) Algorithm, that greatly enlarges the classes of feedback functions that can be used to greedily generate de Bruijn sequences. We adapt several key steps from the cycle joining method to join graph components in the state graph of any suitable function and initial state pair. The use of greedy algorithms to generate de Bruijn sequences remains of deep theoretical attraction, despite their practical drawbacks. Once some special state is visited, then we can be sure that all other states must have been visited before. This may be of independent interest and could be useful in other domains, \eg, in the designs and verification of experiments.
	
	There are several important challenges to overcome if we are to turn this novel idea of graph joining into a more practical tool. First, the task of identifying general classes of feedback functions whose respective state graphs are easy to characterize and efficient to store and manipulate may be the most important open direction. Accomplishing this can greatly reduces the complexity of the two proposed algorithms. We have discussed some examples of such classes, yet we believe that many better ones must exist and are waiting to be discovered.
	Second, instead of determining all preference companion pairs (PCPs), one can opt to seek for tools to quickly identify some of them to build a fast algorithm with low memory requirement. Third, studying the suitability of the resulting sequences for specific application domain(s) may appeal more to practitioners.
	
	\smallskip
	
	\noindent
	{\bf Comparison with the Cycle Joining Method}
	
	In the GJM, the feedback functions are {\bf singular}, \ie, the coefficient of $x_0$ is $0$. The state graph of such a function is a {\bf union of subgraphs}, not a union of cycles. In our case, the CJM does not apply since, as the name clearly indicates, the CJM joins cycles. It is not designed to join trees, for instance. Our GJM assigns the successor of a special state to visit all states of the corresponding subgraph by a greedy algorithm. Using the GJM, we can join subgraphs (not just cycles) and  generate many de Bruijn sequences, including those in~\cite{WWZ18} and the 
	{\tt Prefer-Opposite} sequence~\cite{Alh10}. In their respective algorithmic implementations, the GJM is greedy while the CJM is not. 
	
	In the CJM, there is a bijection between the set of spanning trees in the adjacency graph and the collection of all de Bruijn sequences in its output. In the GJM, the (preference) adjacency graph is a {\bf directed graph}. The GJM can generate a de Bruijn sequence only if there is a rooted spanning tree such that the initial state is in the cycle corresponding to the root. The nice thing is that one can adopt a crucial step in the CJM that identifies spanning trees to find rooted spanning trees in the GJM. We emphasize once again that, in the adjacency graph of the CJM, the vertices are cycles, while the vertices in the directed adjacency graph of the GJM are subgraphs. A recent work of Gabric \etal in~\cite{Gabric20} relies on the CJM because, there, the state graph is a union of cycles. Hence, our work is not a mere extension of their treatment.
	
	We briefly mention the connections between the two methods and well-known algorithms that deal with Eulerian cycles in graphs. The CJM can be viewed as a special case of the Hierholzer Algorithm~\cite{Hierholzer}. Theorem~\ref{thm:m} can be viewed as a special case of the Fleury Algorithm~\cite{Fleury} since there is exactly one cycle in the corresponding state graph, {\it i.e.}, the state graph is connected. The GJPO Algorithm, in contrast, is not a special case of Fleury's. A neccessary but insufficient condition for the Fleury Algorithm is that all of the edges must be in the same component. Hence, the Fleury Algorithm, unlike the GJPO one, does not work on graphs with multiple components.
	
	\section*{Acknowledgement}
	The work of Z.~Chang is supported by the National Natural Science Foundation of China under Grant 61772476. Nanyang Technological University Grant 04INS000047C230GRT01 supports the research carried out by M.~F.~Ezerman and A.~A.~Fahreza. Q.~Wang is supported by an individual grant from the Natural Sciences and Engineering Research Council of Canada.
	

\end{document}